	\numberwithin{equation}{section} 
\newcommand{\ee}{\mathrm{e}}
\newcommand{\ii}{\mathrm{i}}
\newcommand{\T}{\mathbb{T}}
\newcommand{\C}{\mathbb{C}}
\newcommand{\R}{\mathbb{R}}
\newcommand{\D}{\mathbb{D}}
\newcommand{\Z}{\mathbb{Z}}
\definecolor{riverlane_green}{RGB}{0, 111, 98}
\definecolor{riverlane_light_green}{RGB}{0, 150, 143}
\definecolor{riverlane_orange}{RGB}{255, 117, 0}
\definecolor{riverlane_red}{RGB}{220, 68, 5}
\definecolor{riverlane_pink}{RGB}{207, 111, 127}
\def\Xint#1{\mathchoice
{\XXint\displaystyle\textstyle{#1}}%
{\XXint\textstyle\scriptstyle{#1}}%
{\XXint\scriptstyle\scriptscriptstyle{#1}}%
{\XXint\scriptscriptstyle\scriptscriptstyle{#1}}%
\!\int}
\def\XXint#1#2#3{{\setbox0=\hbox{$#1{#2#3}{\int}$}
\vcenter{\hbox{$#2#3$}}\kern-.5\wd0}}
\def\pvint{\Xint-}
\newtheorem{prob}{Problem}
\newtheorem{proposition}{Proposition}[subsection]
\newtheorem{theorem}{Theorem}
\newtheorem{corollary}{Corollary}[theorem]
\newtheorem{lemma}{Lemma}[subsection]
\newtheoremstyle{break}
  {\topsep}{\topsep}%
  {\itshape}{}%
  {\bfseries}{}%
  {\newline}{}%
\theoremstyle{break}
\newtheorem{algo}{Algorithm}
\theoremstyle{}
\newtheorem{rem}{Remark}[subsection]
\newcommand\blfootnote[1]{%
  \begingroup
  \renewcommand\thefootnote{}\footnote{#1}%
  \addtocounter{footnote}{-1}%
  \endgroup
}
\title{Complementary polynomials in quantum signal processing}
\begin{document}

\author[1,2]{Bjorn K. Berntson}
\author[2]{Christoph S\"{u}nderhauf}
\affil[1]{Riverlane Research,
Cambridge, Massachusetts}
\affil[2]{Riverlane,
Cambridge, United Kingdom}

\maketitle

\begin{abstract}
Quantum signal processing is a framework for implementing polynomial functions on quantum computers. To implement a given polynomial $P$, one must first construct a corresponding \textit{complementary polynomial} $Q$. Existing approaches to this problem employ numerical methods that are not amenable to explicit error analysis. We present a new approach to complementary polynomials using complex analysis. Our main mathematical result is a contour integral representation for a canonical complementary polynomial. On the unit circle, this representation has a particularly simple and efficacious Fourier analytic interpretation, which we use to develop a Fast Fourier Transform-based algorithm for the efficient calculation of $Q$ in the monomial basis with explicit error guarantees. Numerical evidence that our algorithm outperforms the state-of-the-art optimization-based method for computing complementary polynomials is provided. 
\end{abstract}

\section{Introduction}
\blfootnote{Emails: bjorn.berntson@riverlane.com, christoph.sunderhauf@riverlane.com}Quantum signal processing (QSP) \cite{low2017} and its extensions \cite{wang2023, motlagh2024} describe simple single-qubit parameterized circuits that apply a chosen polynomial to a scalar. They have become indispensable in state-of-the-art quantum algorithms because the circuits can be lifted from scalars to arbitrary matrices, resulting in the quantum singular value transformation (QSVT) \cite{gilyen2019,sunderhauf2023}. The matrix is embedded inside a larger unitary, called a block encoding. The polynomial is then applied to all singular values of the matrix simultaneously, which can lead to a quantum speedup compared to classical evaluation of matrix functions. The QSVT has had a tremendous impact in quantum computing since polynomials may be used to approximate a wide variety of functions. Thereby, this family of algorithms encompasses many prior quantum algorithms \cite{martyn2021}, including those for Hamiltonian simulation \cite{berry2024}, solving linear systems \cite{harrow2009}, phase estimation \cite{Rall_2021}, and amplitude amplification \cite{brassard2002}, often even improving the prior algorithm. \\

There are different parameterizations of QSP \cite{low2017,gilyen2019,haah2019}. The fundamental idea underlying each is that certain polynomials may be realized within a matrix element of a finite product of single-qubit unitaries \cite{low2017, motlagh2024,alexis2023}. Determining the parameters of these unitaries is the obstruction to implementing particular polynomials in QSP. 
To delineate the mathematical structure of this factorization problem, we recall the generalized quantum signal processing (GQSP) framework\footnote{For future convenience, in particular to relate \eqref{eq:GQSP} to other QSP conventions in Appendix~\ref{app:QSP}, we have written \eqref{eq:GQSP} as the transpose of \cite[Eq.~7]{motlagh2024}.} of Motlagh and Wiebe \cite{motlagh2024}, which, as we show in Appendix~\ref{app:QSP}, subsumes standard formulations of QSP.

\begin{theorem}[Generalized quantum signal processing, \cite{motlagh2024}]
\label{thm:GQSP}
Let $P\in \C[z]$ such that $\deg P=d\in\Z_{\geq 1}$ and $\lvert P(z)\rvert\leq 1$ on $\T\coloneqq \{z\in \C:\lvert z\rvert=1\}$. Then, there exists $Q\in \C[z]$  such that $\deg Q=d$ and
\begin{equation}\label{eq:PQ}
\lvert P(z)\rvert^2+\lvert Q(z)\rvert^2=1 \quad (z\in \T)
\end{equation}
holds. Moreover, there exist parameters $\lambda\in (-\pi,\pi]$ and $(\theta_j)_{j=0}^d,(\phi_j)_{j=0}^d\in (-\pi,\pi]^{d+1}$ such that
\begin{equation}\label{eq:GQSP}
\left(\begin{array}{cc}
P(z) & Q(z) \\
* & * 
\end{array}\right)=
\left(\begin{array}{cc}
\ee^{\ii(\lambda+\phi_0)}\cos \theta_0 & \ee^{\ii\lambda}\sin \theta_0 \\
\ee^{\ii\phi_0}\sin \theta_0 & -\cos \theta_0
\end{array}\right)\Bigg[\prod_{j=1}^d \left(\begin{array}{cc} z & 0 \\ 0 & 1 \end{array}\right)\left(\begin{array}{cc}
\ee^{\ii\phi_j}\cos \theta_j & \sin \theta_j \\
\ee^{\ii\phi_j}\sin \theta_j & -\cos \theta_j
\end{array}\right) \Bigg] \quad (z\in \T),
\end{equation}
where $*$ indicates the precise form of the matrix elements is immaterial, holds. 
\end{theorem}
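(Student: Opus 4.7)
The plan is to establish the existence of $Q$ using classical complex analysis and then prove the factorization \eqref{eq:GQSP} by induction on $d$. For the existence of $Q$ satisfying \eqref{eq:PQ}, I observe that $z \mapsto 1 - \lvert P(z)\rvert^2$ is a nonnegative Laurent polynomial on $\T$ of Laurent degree at most $d$, so the classical Fej\'er--Riesz theorem furnishes $Q \in \C[z]$ of degree at most $d$ with $\lvert Q(z)\rvert^2 = 1 - \lvert P(z)\rvert^2$ on $\T$. To enforce $\deg Q = d$ exactly, multiplying $Q$ by an appropriate power of $z$ suffices, since this leaves $\lvert Q\rvert$ unchanged on the circle.

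For the factorization, I would proceed by induction on $d$. The base case $d = 0$ reduces to writing complex constants $(P, Q)$ with $\lvert P\rvert^2 + \lvert Q\rvert^2 = 1$ in the form $P = \ee^{\ii(\lambda+\phi_0)}\cos\theta_0$, $Q = \ee^{\ii\lambda}\sin\theta_0$, which is elementary. For the inductive step, given $(P, Q)$ of degree $d$ satisfying \eqref{eq:PQ}, I would ``strip off'' the rightmost pair of factors in the bracketed product of \eqref{eq:GQSP} by multiplying the matrix from the right by the inverse of $\mathrm{diag}(z,1)$ times the $d$-th rotation matrix. Writing $P(z) = \sum_{k=0}^d p_k z^k$ and $Q(z) = \sum_{k=0}^d q_k z^k$, the resulting top-row polynomials have degree $\leq d - 1$ precisely when $(\theta_d, \phi_d)$ are chosen so that both the constant term of $\ee^{-\ii\phi_d}\cos\theta_d\, P + \sin\theta_d\, Q$ and the $z^d$-coefficient of $\ee^{-\ii\phi_d}\sin\theta_d\, P - \cos\theta_d\, Q$ vanish. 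Since the stripped factor is unitary on $\T$, the reduced pair continues to satisfy \eqref{eq:PQ}, and the inductive hypothesis applies.

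The main obstacle is verifying that such $(\theta_d, \phi_d)$ always exist. The two vanishing conditions form a homogeneous linear system in $(\cos\theta_d, \sin\theta_d)$; nontrivial solvability requires $\ee^{-2\ii\phi_d} p_0 p_d = -q_0 q_d$. This determines $\phi_d$ modulo $\pi$ provided $\lvert p_0 p_d\rvert = \lvert q_0 q_d\rvert$, which is precisely the modulus of the relation $p_d \overline{p_0} + q_d \overline{q_0} = 0$ obtained by reading off the $z^d$-Laurent-coefficient of $\lvert P(z)\rvert^2 + \lvert Q(z)\rvert^2 = 1$. Once $\phi_d$ is fixed, the same phase identity forces $\tan\theta_d$ to be real, so $\theta_d$ can be chosen in the specified range. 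Degenerate cases ($p_0 = 0$ or $q_0 = 0$, etc.) would be handled via boundary angles. The bottom-row entries and the overall phase $\lambda$ are finally determined by unitarity of the full product on $\T$, which justifies the wildcards $*$ in \eqref{eq:GQSP}.
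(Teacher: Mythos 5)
The paper does not prove Theorem~\ref{thm:GQSP} at all --- it is quoted as a known result from \cite{motlagh2024}, so there is no in-paper proof to compare against. The only place the paper touches the decomposition \eqref{eq:GQSP} directly is Proposition~\ref{prop:GQSP} in Appendix~\ref{app:QSP}, which merely computes the bottom row of the matrix via a determinant argument, taking the existence of the factorization for granted.

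Evaluated on its own merits, your proposal is a reasonable and essentially correct sketch of the argument that \cite{motlagh2024} gives. The Fej\'{e}r--Riesz step for existence of $Q$ is the standard one (the paper uses the same theorem inside its proof of Theorem~\ref{thm:main}). The inductive ``stripping'' of the rightmost $\mathrm{diag}(z,1)R_d$ factor is the right mechanism, and your identification of the two vanishing conditions and the resulting constraint $\ee^{-2\ii\phi_d}p_0p_d = -q_0q_d$ is correct; the modulus condition $|p_0p_d| = |q_0q_d|$ indeed follows from the $z^d$ Fourier coefficient of $|P|^2+|Q|^2=1$, namely $p_d\overline{p_0}+q_d\overline{q_0}=0$, and that same identity shows $\tan\theta_d$ can be taken real. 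Two points deserve more care than a passing remark. First, the induction cannot be run on pairs with $\deg P = \deg Q = d$ exactly: after stripping, the reduced pair need not have both degrees equal to $d-1$, so you should induct on a statement of the form ``degree at most $d$, realized by $d$ factors,'' and then argue separately that exact degree $d$ can be arranged for $Q$ (and note that if $\|P\|_{\infty,\T}=1$ then $P=cz^d$ and $Q\equiv 0$, which must be excluded or handled). Second, the degenerate case $p_0=q_0=0$ (these vanish together, since $p_d,q_d\neq 0$) is not a mere ``boundary angle'' issue: here condition~1 is vacuous, $\phi_d$ is fixed only by condition~2, and one must check that the reduced pair still has the right structure for the inductive hypothesis. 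Neither gap is fatal, but both need to be spelled out for a complete proof.
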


In \eqref{eq:GQSP}, we call $Q$ a \textit{complementary polynomial} to $P$ and the parameters $\lambda$, $(\phi_j)_{j=0}^d$, $(\theta_j)_{j=0}^d$ \textit{phase factors}. Given a complementary polynomial, the phase factors may be constructed via an exact iterative method \cite[Algorithm~1]{motlagh2024}; analogous statements hold for standard QSP, see \cite[Theorem~3]{gilyen2019} and \cite[Section~3.1.2]{ying2022}. Importantly, the phase factors required in lifted algorithms operating on matrices like the QSVT follow immediately from those obtained in QSP. Motivated by the above discussion, we introduce the following problem, whose theoretical and numerical resolution is the subject of this paper.

\begin{prob}[Complementary polynomials problem]\label{prob:CP}
Given $P\in \C[z]$ satisfying the conditions of Theorem~\ref{thm:GQSP}, find $Q\in \C[z]$ in the monomial basis, such that $\deg Q=\deg P$ and \eqref{eq:PQ} holds. 
\end{prob}

Previous approaches to this problem rely on root-finding \cite{gilyen2019}, Prony's method \cite{ying2022}, or optimization \cite{motlagh2024}; see Section~\ref{subsec:related}. Here, we construct an exact representation of a canonical complementary polynomial $Q$, valid throughout the entire complex plane, in the form of a set of contour integrals; the problem of constructing the complementary polynomial is thus reduced to quadratures. The contour integral representation of $Q$ on $\T$ can be rephrased in the language of Fourier analysis. We combine this Fourier analytic interpretation and the Fast Fourier Transform (FFT) to develop efficient numerical algorithms to compute $Q$ in the monomial basis. \\ 

Exact and explicit error analysis of our algorithms is performed, providing rigorous upper bounds on the classical runtimes. Furthermore, numerical results from our reference implementation demonstrate the practicality and competitiveness of our algorithm. We emphasize that existing numerical approaches to the construction of complementary polynomials, which we describe in Section~\ref{subsec:related} below, rely on heuristics and so are not amenable to rigorous error analysis. \\

In the remainder of this introduction, we state our results, give remarks on their proofs, describe related literature, introduce notation used in the main text, and outline the structure of the paper. 

\subsection{Statement of results}
We first construct a set of contour integral representations for $Q$. Let
\begin{equation}\label{eq:P}
P(z)=\sum_{n=0}^d p_nz^n \quad (p_0\neq 0);
\end{equation}
the restriction that $p_0\neq 0$ is imposed without loss of generality as $\lvert z^n P(z)\rvert=\lvert P(z)\rvert$ holds for all $z\in \T$, $n\in \Z$. Our main result is the following theorem, giving representations of $Q$ on $\D\coloneqq \{z\in \C:\lvert z\rvert<1\}$, $\T$, and $\C\setminus\overline{\D}$. 

\begin{theorem}[Contour integral representation of the canonical complementary polynomial]
\label{thm:main}
Suppose $P$ satisfying the assumptions of Problem~\ref{prob:CP} is given in the form \eqref{eq:P}. Let $d_0\in \Z_{\geq 0}$ be the number of roots of $1-\lvert P(z)\rvert^2$ on $\T$, not counting multiplicity, and $\{(t_j,2\alpha_j)\}_{j=1}^{d_0}$ be the corresponding roots and multiplicities, which are necessarily even. Then, 
\begin{subnumcases}{\label{eq:Q} Q(z)=} 
{Q}_0(z)\exp\Bigg(\frac{1}{4\pi\ii}\int_{\T} \frac{z'+z}{z'-z}\log \bigg(\frac{1-\lvert P(z')\rvert^2}{\lvert {Q}_0(z')\rvert^2}\bigg)	\frac{\mathrm{d}z'}{z'}\Bigg) & $z\in \D$ \label{eq:QD} \\
{Q}_0(z)\exp\Bigg(\frac{1}{4\pi\ii}\,\pvint_{\T}\frac{z'+z}{z'-z}\log \bigg(\frac{1-\lvert P(z')\rvert^2}{\lvert {Q}_0(z')\rvert^2}\bigg) \frac{\mathrm{d}z'}{z'}+\frac12\log \bigg(\frac{1-\lvert P(z)\rvert^2}{\lvert {Q}_0(z)\rvert^2}\bigg) \Bigg) & $z\in \T$ \label{eq:QT} \\
\frac{1-P(z)P^*(1/z)}{Q_0^*(1/z)}\exp\Bigg(\frac{1}{4\pi\ii}\int_{\T}\frac{z'+z}{z'-z}\log \bigg(\frac{1-\lvert P(z')\rvert^2}{\lvert {Q}_0(z')\rvert^2}\bigg) \frac{\mathrm{d}z'}{z'}\Bigg)	 & $z\in \C\setminus \overline{\D}$,\label{eq:QCD}
\end{subnumcases}
where
\begin{equation}\label{eq:Qtilde}
{Q}_0(z)\coloneqq \prod_{j=1}^{d_0} (z-t_j)^{\alpha_j},
\end{equation}
the integration contour $\T$ is positively-oriented, and the dashed integral indicates a Cauchy principal value prescription \eqref{eq:PV} with respect to the singularity $z'=z$ on $\T$, solves Problem~\ref{prob:CP}. Moreover, \eqref{eq:Q} is, up to a multiplicative phase, the unique solution of Problem~\ref{prob:CP} with no roots in $\D$. 
\end{theorem}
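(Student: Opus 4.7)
The approach is to treat Problem~\ref{prob:CP} as a spectral factorization problem on $\T$ and solve it using the Szeg\H{o}/Herglotz construction, then identify the resulting outer factor with a polynomial via a maximum-modulus comparison. The key observation is that on $\T$ the constraint $|Q(z)|^2 = 1 - |P(z)|^2$ extends to the Laurent polynomial identity $Q(z)Q^*(1/z) = 1 - P(z)P^*(1/z)$. The right-hand side is nonnegative on $\T$, so its zeros on $\T$ have even multiplicity (exactly the pairs $(t_j,2\alpha_j)$) while its remaining zeros pair up under $z\mapsto 1/\bar z$; the factor $|Q_0(z)|^2 = Q_0(z)Q_0^*(1/z)$ on $\T$ accounts precisely for the boundary zeros.

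I would first form $r(z) := (1 - |P(z)|^2)/|Q_0(z)|^2$, which is strictly positive and real-analytic on $\T$, and apply the Schwarz integral formula to $\tfrac12 \log r$ to produce an analytic function $F$ on $\D$ with $\re F = \tfrac12 \log r$ on $\T$. The outer function $Q_1 := \exp F$ then satisfies $|Q_1|^2 = r$ on $\T$, and setting $Q := Q_0 Q_1$ yields~\eqref{eq:QD}. Formula~\eqref{eq:QT} follows by evaluating this Cauchy-type integral at a point $z\in\T$ via Sokhotski--Plemelj, producing the principal-value integral together with the explicit $\tfrac12 \log$ boundary jump.

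The main step is showing that $Q_0 Q_1$ is in fact a polynomial of degree $d$. I would explicitly construct a polynomial $\widehat{Q}_1$ of degree $d - D_0$ (where $D_0 := \sum_j \alpha_j$) by collecting the roots of the Laurent polynomial $1 - P(z)P^*(1/z)$ that lie in $\C\setminus\overline{\D}$, with their multiplicities, so that $|\widehat{Q}_1|^2 = c\, r$ on $\T$ for an explicit constant $c > 0$ determined by matching leading coefficients. Both $Q_1$ and $\widehat{Q}_1/\sqrt c$ are then nonvanishing holomorphic functions on $\D$ with equal moduli on $\T$, so the maximum modulus principle applied to their ratio and its reciprocal forces $Q_1 = \ee^{\ii\gamma}\widehat{Q}_1/\sqrt c$ for some $\gamma\in\R$. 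Hence $Q = Q_0 Q_1$ is a polynomial of degree $d$, zero-free in $\D$, with $|P|^2+|Q|^2=1$ on $\T$. The exterior representation~\eqref{eq:QCD} then follows from the polynomial identity $Q(z)Q^*(1/z) = 1 - P(z)P^*(1/z)$ combined with the reflection symmetry of the Schwarz kernel under $z\mapsto 1/\bar z$, which converts the interior exponential factor into the form appearing in \eqref{eq:QCD} for $z \in \C\setminus\overline{\D}$.

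Uniqueness up to a multiplicative phase follows from another maximum-modulus argument: if $\widetilde{Q}$ is any polynomial solution of degree $d$ without roots in $\D$, then $\widetilde{Q}/Q$ is holomorphic and nonvanishing on $\D$ (both share the boundary zeros $(t_j,\alpha_j)$) with $|\widetilde{Q}/Q|=1$ on $\T$, hence a unimodular constant by the maximum modulus principle applied to the ratio and its reciprocal. The main obstacle is the polynomial identification of the outer factor: the Schwarz integral constructs $Q_1$ as an \emph{a priori} transcendental outer function, and only the rationality of $r$ (inherited from the polynomiality of $P$) makes the identification with $\widehat{Q}_1$ possible. Secondary care is required in deriving the reflection formula producing~\eqref{eq:QCD} and in treating the principal-value integral in~\eqref{eq:QT} near the boundary zeros $t_j$, where $\log r$ is bounded but the Cauchy kernel is singular.
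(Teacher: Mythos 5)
Your argument is correct, but it inverts the logical order of the paper's proof. The paper first invokes the Fej\'er--Riesz theorem to obtain a degree-$d$ polynomial $Q$ with no roots in $\D$ satisfying \eqref{eq:PQ}, then defines $U=\log(Q/Q_0)$ (after choosing branch cuts $B=\bigcup_n\{sw_n:s\ge 1\}$ to make $U$ analytic on $\overline\D$), observes that $\re U=\tfrac12\log r$ on $\T$, and recovers $U$ on $\D$ by the Schwarz integral formula; \eqref{eq:QT} and \eqref{eq:QCD} then follow from Plemelj and analytic continuation of the boundary value from the exterior. You instead run the argument in the ``synthesis'' direction: build the outer function $Q_1=\exp F$ directly from the Schwarz integral of $\tfrac12\log r$, then show a posteriori that $Q_0Q_1$ is a polynomial by comparing it, via the maximum modulus principle, with the explicit degree-$(d-D_0)$ polynomial $\widehat Q_1$ whose roots are the zeros of $1-P(z)P^*(1/z)$ in $\C\setminus\overline\D$ (which is precisely the Fej\'er--Riesz factorization entering the paper's argument). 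Both proofs therefore rest on the same two ingredients --- Fej\'er--Riesz factorization and the Schwarz formula --- and the same reflection/Plemelj calculus for \eqref{eq:QT}, \eqref{eq:QCD}. The trade-off is this: the paper's ordering is shorter because the integral representation immediately inherits polynomiality from the known $Q$, but it requires the (slightly delicate) choice of a single-valued branch of $\log(Q/Q_0)$ on a simply-connected slit plane; your ordering avoids any branch-cut discussion entirely, at the cost of the extra maximum-modulus comparison identifying the a priori transcendental outer function with a polynomial. Your uniqueness argument (that $\widetilde Q/Q$ is analytic and unimodular on $\overline\D$, hence constant) is a clean alternative to the paper's root-swapping enumeration via \eqref{eq:Qswap}; it requires the small observation you implicitly use, namely that any canonical solution has a zero of order exactly $\alpha_j$ at each $t_j$, so the ratio has removable singularities there. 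One minor point: when you assert $r$ is ``strictly positive'' on $\T$, you should justify that the $0/0$ limit at each $t_j$ is not merely finite but nonzero, which follows because $1-P(z)P^*(1/z)$ vanishes to order exactly $2\alpha_j$ at $t_j$ by hypothesis.
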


Theorem~\ref{thm:main} provides an exact representation of $Q$ in Problem~\ref{prob:CP}, in 
\textit{canonical form}: all roots lie outside of $\D$. In the important special case where $P$ has real coefficients, a canonical complementary polynomial with real coefficients exists. The precise statement, a mild generalization of a result in \cite[Section~8]{chao2020}, is now given.

\begin{corollary}[Real complementary polynomials]
\label{cor:real}
Let $P\in \R[z]$ satisfying the conditions of Problem~\ref{prob:CP} be given. Then, the canonical complementary polynomial fulfills $Q\in \R[z]$, up to a multiplicative phase. 
\end{corollary}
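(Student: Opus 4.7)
The plan is to leverage the uniqueness clause of Theorem~\ref{thm:main}: up to a multiplicative phase, there is exactly one complementary polynomial of degree $d$ with no roots in $\D$. The strategy is to exhibit a natural involution on the set of canonical complementary polynomials that, when $P$ is real, fixes the canonical $Q$ up to a phase, and to read off realness from this invariance.

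First I would introduce the ``conjugate-reflected'' polynomial
\begin{equation*}
\tilde Q(z) \coloneqq \overline{Q(\bar z)},
\end{equation*}
where $Q$ is the canonical complementary polynomial from Theorem~\ref{thm:main}. Writing $Q(z) = \sum_{n=0}^d q_n z^n$, one sees that $\tilde Q(z) = \sum_{n=0}^d \bar q_n z^n$, so $\deg \tilde Q = d$ and $\tilde Q \in \C[z]$. Next I would check that $\tilde Q$ is also a canonical complementary polynomial for $P$. Because $P \in \R[z]$, we have $\overline{P(z)} = P(\bar z)$, hence for $z \in \T$ (so that $\bar z \in \T$ as well),
\begin{equation*}
\lvert \tilde Q(z)\rvert^2 = \lvert Q(\bar z)\rvert^2 = 1 - \lvert P(\bar z)\rvert^2 = 1 - \lvert P(z)\rvert^2,
\end{equation*}
so $\tilde Q$ satisfies \eqref{eq:PQ}. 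Moreover, $\tilde Q(z_0) = 0$ if and only if $Q(\bar z_0) = 0$; since $z_0 \in \D$ implies $\bar z_0 \in \D$, the canonicity of $Q$ forbids roots of $\tilde Q$ in $\D$.

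By the uniqueness part of Theorem~\ref{thm:main}, it follows that $\tilde Q(z) = \ee^{\ii \psi} Q(z)$ for some $\psi \in (-\pi,\pi]$. Equating coefficients yields $\bar q_n = \ee^{\ii\psi} q_n$ for each $n$, whence $\ee^{-\ii\psi/2} q_n \in \R$ for all $n$. Therefore $\ee^{-\ii\psi/2} Q \in \R[z]$, establishing the corollary.

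The argument is short because nearly all the work is done by Theorem~\ref{thm:main}; the only genuine step is verifying that $z \mapsto \overline{Q(\bar z)}$ preserves the two canonicity conditions (degree and absence of roots in $\D$), which is the identification of the correct involution. Any ``obstacle'' is thus purely bookkeeping: one must be careful to distinguish $\overline{Q(\bar z)}$ (coefficient conjugation, a polynomial in $z$) from the pointwise conjugate $\overline{Q(z)}$ (which is not holomorphic) when matching the hypotheses of Theorem~\ref{thm:main}.
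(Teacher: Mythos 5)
Your proof is correct and uses a genuinely different route than the paper. The paper works directly with the Fej\'{e}r-Riesz factorization: it observes that $1-P(z)P(1/z)$ is a Laurent polynomial with real coefficients, records the resulting symmetries of its root set (if $w$ is a root so are $1/w$, $w^*$, $1/w^*$ with equal multiplicities), and checks that the canonical factorization \eqref{eq:QFR} respects these symmetries, forcing $Q/\bar Q$ to have real coefficients. You instead invoke the uniqueness clause of Theorem~\ref{thm:main} as a black box and exhibit the involution $Q\mapsto \overline{Q(\bar z)}$, verifying that it preserves degree, the complementarity condition, and the absence of roots in $\D$; realness then falls out from the fixed-point relation $\bar q_n = \ee^{\ii\psi} q_n$. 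Your argument is shorter and more conceptual, since all the constructive root bookkeeping is absorbed into the already-proved uniqueness statement, whereas the paper's argument reproves part of that uniqueness in a special case but makes the real root structure explicit.

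One small correction: from $\bar q_n=\ee^{\ii\psi}q_n$ you should conclude $\ee^{\ii\psi/2}q_n\in\R$, not $\ee^{-\ii\psi/2}q_n\in\R$. Indeed
\begin{equation*}
\overline{\ee^{\ii\psi/2}q_n}=\ee^{-\ii\psi/2}\bar q_n=\ee^{-\ii\psi/2}\ee^{\ii\psi}q_n=\ee^{\ii\psi/2}q_n,
\end{equation*}
so the correct normalizing phase is $\ee^{\ii\psi/2}$. This is a sign typo, not a gap; the existence of some phase rendering $Q$ real is what the corollary asserts, and your argument establishes it.
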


To obtain $Q$ explicitly in the monomial basis, i.e., to solve Problem~\ref{prob:CP}, it suffices to evaluate \eqref{eq:Q} at any $d+1$ distinct points of $\C$ and employ the Lagrange interpolation formula. Our numerical approach is based on interpolation through roots of unity; this is equivalent to a discrete Fourier transform. The following corollary of Theorem~\ref{thm:main} establishes a Fourier analytic variant of the integral representation \eqref{eq:QT} of $Q$ on $\T$, which we will later use to evaluate $Q$ at the roots of unity. 

\begin{corollary}[Fourier analytic variant of Theorem~\ref{thm:main} on $\T$]
\label{cor:main}
The representation \eqref{eq:QT} of $Q$ on $\T$ is equivalent to 
\begin{equation}\label{eq:QFourier}
Q(\ee^{\ii\theta})={Q}_0(\ee^{\ii\theta})\exp\Bigg( \Pi \bigg[\log\bigg(\frac{1-\lvert P(\ee^{\ii\theta})\rvert^2}{\lvert{Q}_0(\ee^{\ii\theta})\rvert^2}\bigg)\bigg] \Bigg) \quad (\theta\in (-\pi,\pi]),
\end{equation}
where $\Pi$ is the Fourier multiplier defined by
\begin{equation}\label{eq:Piexp}
\Pi[\ee^{\ii n \theta}]\coloneqq \begin{cases}
\ee^{\ii n\theta} & n\in \Z_{> 0} \\
\frac12 & n=0 \\
0 & n\in \Z_{<0}.	
\end{cases}	
\end{equation}
\end{corollary}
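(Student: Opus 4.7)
The plan is to regard \eqref{eq:QFourier} as the assertion that the linear operator
\[
T[g](\theta) \coloneqq \frac{1}{4\pi\ii}\,\pvint_\T \frac{z'+\ee^{\ii\theta}}{z'-\ee^{\ii\theta}} g(z') \frac{\mathrm{d}z'}{z'} + \tfrac12 g(\ee^{\ii\theta})
\]
coincides with the Fourier multiplier $\Pi$ from \eqref{eq:Piexp} when applied to $g(\theta) \coloneqq \log\bigl((1-\lvert P(\ee^{\ii\theta})\rvert^2)/\lvert {Q}_0(\ee^{\ii\theta})\rvert^2\bigr)$. This $g$ is continuous on $\T$ because the choice of ${Q}_0$ in \eqref{eq:Qtilde} cancels each zero of $1-\lvert P\rvert^2$ on $\T$ to precisely the correct order $2\alpha_j$, so the ratio is positive and bounded. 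Since $T$ is bounded on $L^2(\T)$ (via the standard $L^2$-boundedness of the circle Hilbert transform) and $\Pi$ is trivially a bounded Fourier multiplier, it suffices to check $T[\ee^{\ii n \theta}] = \Pi[\ee^{\ii n\theta}]$ for each $n\in \Z$ and then invoke density.

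For this pointwise check, I would decompose the Schwarz kernel as $\tfrac{z'+z}{z'-z} = \tfrac{2z'}{z'-z} - 1$, which splits the principal-value integral in $T$ into the Cauchy principal value $\tfrac{1}{2\pi\ii}\,\pvint_\T \tfrac{(z')^n}{z'-z}\,\mathrm{d}z'$ plus the elementary term $-\tfrac12\delta_{n,0}$. The Plemelj-Sokhotski jump relations then identify this Cauchy principal value with the arithmetic mean of the boundary values, from $\D$ and from $\C\setminus\overline{\D}$, of the Cauchy transform $C[(z')^n](z) = \frac{1}{2\pi\ii}\int_\T \frac{(z')^n}{z'-z}\,\mathrm{d}z'$.

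A brief residue computation supplies these boundary values: for $n \geq 0$ they equal $z^n$ and $0$ respectively, whereas for $n < 0$ the simple pole at $z' = 0$ reverses the roles to $0$ and $-z^n$. Assembling the three contributions (Cauchy PV, delta term, and explicit $\tfrac12 g$ summand) yields $T[(z')^n](\ee^{\ii\theta}) = \ee^{\ii n\theta}$ for $n > 0$, $\tfrac12$ for $n = 0$, and $0$ for $n < 0$, which is precisely \eqref{eq:Piexp}. The only nontrivial step is justifying the term-by-term extension from trigonometric polynomials to the full Fourier series of $g$; this is handled by the $L^2(\T)$ continuity of $T$ and $\Pi$ noted above, together with convergence of the Fourier series of the continuous function $g$ in $L^2(\T)$.
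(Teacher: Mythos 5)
Your proof is correct, and it lands on the same underlying fact as the paper — that the Schwarz-kernel integral together with the explicit $\tfrac12 g$ term acts as the Fourier multiplier $\Pi$ — but reaches it through a different computation. The paper changes variables $z=\ee^{\ii\theta}$, $z'=\ee^{\ii\theta'}$, which turns the Schwarz kernel into the cotangent kernel of the periodic Hilbert transform $H$ in \eqref{eq:H}, and then simply cites the eigenfunction relations \eqref{eq:Hexp} and writes $\Pi=\tfrac12(1-\ii H)$. You instead stay in the complex-contour picture, split $\tfrac{z'+z}{z'-z}=\tfrac{2z'}{z'-z}-1$, and evaluate the resulting Cauchy principal value on each monomial via the Plemelj--Sokhotski averaging formula and a residue computation — in effect re-deriving the Hilbert transform's Fourier-multiplier property rather than citing it. You also supply an explicit density step ($L^2$-boundedness of the singular Cauchy integral plus $L^2$-convergence of the Fourier series of the continuous function $g$) to justify the termwise application; the paper leaves that extension implicit. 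What the paper's route buys is brevity, by leaning on a standard fact; what yours buys is self-containment. One small imprecision to fix: for $n\le -2$ the pole of $(z')^n$ at $z'=0$ has order $|n|$, not order one, though your resulting boundary values $0$ (from $\D$) and $-z^n$ (from $\C\setminus\overline{\D}$) are correct for all $n<0$. It would also be worth remarking at the end that since $g$ is smooth on $\T$ (the factor $Q_0$ in \eqref{eq:Qtilde} cancels the zeros of $1-\lvert P\rvert^2$ to exactly the right order, leaving $g=\log h$ for a strictly positive Laurent polynomial $h$), both $T[g]$ and $\Pi[g]$ are continuous, so the $L^2$ identity upgrades to the pointwise statement \eqref{eq:QFourier}.
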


Due to \eqref{eq:Piexp}, we have
\begin{equation}
\label{eq:Pi description}
\Pi\Bigg[\sum_{n\in \Z} a_n \ee^{\ii n\theta}\Bigg]=\frac12 a_0+\sum_{n=1}^{\infty} a_n \ee^{\ii n \theta},
\end{equation}
and Corollary~\ref{cor:main} shows that, essentially, constructing $Q$ on $\T$ consists in evaluating the Fourier coefficients of the function $\log\big(\frac{1-\lvert P(\ee^{\ii\theta})\rvert^2}{\lvert {Q}_0(\ee^{\ii\theta})\rvert^2}\big)$. \\

\paragraph{Numerical methods.}
Corollary~\ref{cor:main} suggests a practical numerical method to compute an approximation of the complementary polynomial, supposing $Q_0$ is known. This is trivially the case if 
\begin{equation}
\label{eq:deltabound}
\lVert P(z)\rVert_{\infty,\T}\coloneqq \max_{z\in \T} \,\lvert P(z)\rvert  \leq 1-\delta \quad (\delta\in (0,1));
\end{equation}
then, $Q_0(z)=1$. If $\lVert P(z)\rVert_{\infty,\T}\leq 1$ is guaranteed but a tighter bound \eqref{eq:deltabound} either does not exist or is unknown, $Q_0(z)=1$ can be attained by slightly rescaling $P(z) \to (1-\delta)P(z)$ for a suitable $\delta\in (0,1)$.\\

In cases with a known bound \eqref{eq:deltabound}, Algorithm~\ref{algo:main} solves Problem~\ref{prob:CP} in time
$O(N\log N)$ using a sequence of FFTs, where the even parameter $N\in \Z_{\geq d}$ defines the discrete Fourier basis $\{\ee^{\ii n\theta}\}_{n=-\frac{N}{2}+1}^{\frac{N}{2}}$. An informal description of our algorithm, based on Corollary~\ref{cor:main} with $Q_0(z)=1$, is as follows.
\begin{enumerate}
\item Compute approximations to the Fourier coefficients $(a_n)_{n=-\frac{N}{2}+1}^{\frac{N}2}$ of $\log\big(1-\lvert P(\ee^{\ii\theta})\rvert^2\big)$ using an FFT, in time $O(N\log N)$.  
\item Compute approximations to $Q$ at the $N$th roots of unity by applying the Fourier multiplier $\Pi$ in Fourier space \eqref{eq:QFourier}--\eqref{eq:Pi description}, using an FFT and inverse FFT in time $O(N\log N)$.
\item Compute approximations to the coefficients of $Q/{Q}_0$ in the monomial basis using the result of the previous step and an FFT, in time $O(N\log N)$.
\end{enumerate}
We prove in Theorem~\ref{thm:error} that this algorithm is efficient, with a sufficient $N$ scaling as $N=O\!\left(\frac{d}{\delta}\log\frac{d}{\delta\varepsilon}\right)$, where $\varepsilon$ is the error in the in the monomial basis coefficients. 
\\

In the case where $\lVert P(z)\rVert_{\infty,\T}\leq 1$, but a tighter upper bound \eqref{eq:deltabound} either (i) does not exist, (ii) is unknown, or (iii) has $\delta$ so small that the upper bound on the runtime of Algorithm~\ref{algo:main} is undesirable, we recourse to Algorithm~\ref{algo:downscaling}. In this algorithm, the input polynomial is appropriately downscaled and then input into Algorithm~\ref{algo:main}. Theorem~\ref{thm:error downscaling} proves that a sufficient $N=O\!\left(\frac{d}{\varepsilon}\log\frac{d}{\varepsilon}\right)$, where $\varepsilon$ is the error in the complementarity condition \eqref{eq:PQ}; observe that $N$ is independent of $\delta$.\\

 We give a reference implementation of Algorithm~\ref{algo:main} and provide numerical evidence that it outperforms the optimization-based approach to complementary polynomials from \cite{motlagh2024} in Section~\ref{sec:numericalresults}.

\subsection{Remarks on the results and their proofs}

The following remarks apply to Theorem~\ref{thm:main} and Corollary~\ref{cor:main}. Below, we reference classical complex analysis theorems; precise statements of these theorems may be found in Appendix~\ref{app:complex}.

\begin{enumerate}

\item The crucial observation leading to Theorem~\ref{thm:main} is that the real part of the function $\log(Q(z)/Q_0(z))$, where $Q$ is chosen so that all roots lie outside $\overline{\D}$ and the branch cuts are chosen appropriately, can be determined exactly on $\T$ using \eqref{eq:PQ}. Then, the Schwarz integral formula \cite{ahlfors} is used to construct $Q$ on $\D$. 

\item The representation \eqref{eq:Q} of $Q$ is not manifestly a polynomial. Rather, as elaborated in Section~\ref{sec:proof}, it follows from (i) the existence of a canonical solution of Problem~\ref{prob:CP} via the F\'{e}jer-Riesz theorem \cite{riesz2012} and (ii) the uniqueness of holomorphic functions constructed by the Schwarz integral formula that $Q$ is a polynomial. 

\item We show within the proof of Theorem~\ref{thm:main} that, up to a multiplicative phase, the number of distinct solutions of Problem~\ref{prob:CP} is equal to $\prod_{j=1}^{d_1}(\beta_j+1)$, where $\beta_j$ is the multiplicity of the $j$th root ($j\in [d_1]$) of $Q$ outside of $\overline{\D}$. However, constructing all of these solutions requires knowledge of all roots of $1-P(z)P^*(1/z)$ on $\C$. We construct a canonical solution of Problem~\ref{prob:CP} with no roots on $\D$ in \eqref{eq:Q}; this requires only the knowledge of the roots of $1-\lvert P(z)\rvert^2$ on $\T$. 
\item Theorem~\ref{thm:main} and Corollary~\ref{cor:main} can be proven in different ways. Consider the function $\log(Q(z)/Q_0(z))$, where $Q$ is chosen so that all roots lie outside $\overline{\D}$ and the branch cuts are chosen appropriately. One can use \eqref{eq:PQ} and the Fej\'{e}r-Riesz theorem to construct a scalar Riemann-Hilbert problem \cite{ablowitz2003} on $\T$ for $\log(Q(z)/Q_0(z))$; this Riemann-Hilbert problem is explicitly solvable using a Cauchy integral, from which \eqref{eq:QD} follows. Corollary~\ref{cor:main} can be proven directly using \eqref{eq:PQ}, the Fej\'{e}r-Riesz theorem, and the fact that the periodic Hilbert transform \eqref{eq:H} relates the real and imaginary parts of the boundary values of a function holomorphic on $\D$, namely $\log(Q(z)/Q_0(z))$.
\end{enumerate}

\subsection{Related work}
\label{subsec:related}

 In the QSP literature, a variety of numerical methods for solving Problem~\ref{prob:CP} or its avatars have been developed. As is evident from the proof of Theorem~\ref{thm:main} in Section~\ref{subsec:mainproof}, knowledge of all roots of $1-P(z)P^*(1/z)$ allows for the explicit construction of $Q$; analogous statements hold for complementary polynomials in standard QSP. Thus, employing standard root-finding algorithms provides a straightforward means to calculate complementary polynomials \cite{gilyen2019,haah2019,chao2020}. Root-finding algorithms are known to be expensive and suffer from numerical instability; the highest-degree polynomial successfully treated with this approach was reported to have degree $d=3
 \times 10^3$ \cite{chao2020}. An alternative method that avoids root-finding and instead directly calculates the characteristic polynomial of the roots of $1-P(z)P^*(1/z)$ within $\D$ using Prony's method has been proposed in \cite{ying2022}. Numerical experiments have demonstrated the effectiveness of this approach for polynomials with degree up to $d=5\times 10^4$.\\
 
 The current state-of-the-art method for Problem~\ref{prob:CP} was developed in \cite{motlagh2024}. There, a loss function derived from the complementarity condition \eqref{eq:PQ} is minimized to determine $Q$ with an optimization procedure; this approach was demonstrated to be effective for $d$ up to the order of $10^7$, achieving accuracies as low as $10^{-6}$ in the loss function.  
 In this paper, we present numerical results showing that Algorithm~\ref{algo:main} is effective for the same degrees, up to $d=10^7$. At the same time, our algorithm requires much shorter runtimes and achieves better accuracies, even without the GPU acceleration used in \cite{motlagh2024}. \\

  Given a complementary polynomial, it remains to calculate the phase factors. In both GQSP \cite{motlagh2024} and standard QSP \cite{gilyen2019, ying2022}, exact recursive formulas may be used to determine phase factors. Variations on and improvements to this approach are described in \cite{haah2019} and \cite{chao2020}. We also mention that, as an alternative to the approaches described above, optimization-based methods to compute phase factors without knowledge of the complementary polynomial have been developed in \cite{dong2021,wang2021,dong2023}. These methods have been used to determine phase factors for polynomials up to degree $d=10^5$.

\subsection{Notation}

We write the complex conjugate of $z\in \C$ as $z^*$ and for a function $f:\C\to \C$, define $f^*(z)\coloneqq f(z^*)^*$. For a set $X\subset \C$, we write $\partial X$ and $\overline{X}$ for its boundary and closure, respectively. We define $\D$ and $\T$ to be the open unit disk and unit circle in the complex plane, respectively. Given an integer $N\in \Z_{\geq 1}$, we define the sets $[N]\coloneqq \{n\in \Z_{\geq 1}: n\leq  N\}$ and $[N]_0\coloneqq [N]\cup\{0\}$. Dashed integrals indicate a Cauchy principal value prescription with respect to singularities of the integrand on the integration contour. Contour integrals are always assumed to carry a positive orientation. Unless otherwise indicated, all logarithms are with respect to base $\ee$. We denote by $\lVert \cdot \rVert_{\infty,X} $ the uniform norm on $X$.

\subsection{Plan of the paper}

Theorem~\ref{thm:main} and Corollaries~\ref{cor:real} and \ref{cor:main} are proved in Section~\ref{sec:proof}. In Section~\ref{sec:numerical}, we design two algorithms for the computation of $Q$ following from Corollary~\ref{cor:main}. Error analysis of our algorithms is performed in Section~\ref{sec:error} and numerical results comparing our algorithm to the optimization-based approach of \cite{motlagh2024} are presented in Section~\ref{sec:numericalresults}. Section~\ref{sec:discussion} contains a discussion of our results and possibilities for future work. In Appendix~\ref{app:QSP}, we show that different QSP parameterizations can be viewed as special cases of GQSP. Appendix~\ref{app:complex} contains precise statements of the complex analysis theorems used to prove our results.

\section{Proofs of main results}
\label{sec:proof}

We provide rigorous proofs of the mathematical results reported in the previous section. Theorem~\ref{thm:main} is proved in Section~\ref{subsec:mainproof} and Corollaries~\ref{cor:real} and \ref{cor:main} are proved in Sections~\ref{subsec:realproof} and \ref{subsec:corproof}, respectively. 

\subsection{Proof of Theorem~\ref{thm:main}}
\label{subsec:mainproof}
Observe that on $\T$, $1-\lvert P(z)\rvert^2=1-P(z)P^*(1/z)$, a positive-semidefinite Laurent polynomial of degree $d$. Thus, by the Fej\'{e}r-Riesz theorem, there exists $Q\in \C[z]$ so that $\deg Q=d$, $Q$ is nonzero on $\D$, any root of $Q$ on $\T$ has even multiplicity, and \eqref{eq:PQ} is satisfied. We write
\begin{equation}\label{eq:QFR}
Q(z)=\bar{Q}\Bigg(\prod_{j=1}^{d_0}(z-t_j)^{\alpha_j}\Bigg)\Bigg(\prod_{j=1}^{d_1}(z-w_j)^{\beta_j}\Bigg) \quad (z\in \C),
\end{equation}
where $\bar{Q}\in \C\setminus\{0\}$ and $
\{(w_j,\beta_j)
\}_{j=1}^{d_1}$ are the roots of $Q$ outside of $\D$ with corresponding multiplicities; recall that $2\alpha_j$ is the multiplicity of the root $t_j$. It follows from \eqref{eq:PQ} and \eqref{eq:QFR} that
\begin{equation}\label{eq:PQ2}
1-\lvert P(z)\rvert^2= \lvert Q(z)\rvert^2=\lvert \bar{Q} \rvert^2\Bigg( \prod_{j=1}^{d_0} (z-t_j)^{\alpha_j}\bigg(\frac{1}{z}-\frac{1}{t_j}\bigg)^{\alpha_j}\Bigg)\Bigg( \prod_{j=1}^{d_1} (z-w_j)^{\beta_j}\bigg(\frac{1}{z}-w_j^*\bigg)^{\beta_j}\Bigg) \quad (z\in \T);
\end{equation}
note that this factorization is unique up to rotations $\bar{Q}\to t \bar{Q}$, $t\in \T \simeq \mathrm{U}(1)$. Moreover, we see that transforming
\begin{equation}\label{eq:Qswap}
Q(z)\to \bigg(\frac{1-zw_j^*}{z-w_j}\bigg)^k Q(z) 	\end{equation}
for any $j\in [d_1]$ and $k\in [\beta_j]_0$, preserves \eqref{eq:PQ2} via \eqref{eq:PQ}. It follows that there are $\prod_{j=1}^{d_1}(\beta_j+1)$ distinct solutions of Problem~\ref{prob:CP}, up to $\mathrm{U}(1)$ equivalence. \\

To construct a canonical solution \eqref{eq:QFR} of Problem~\ref{prob:CP}, we combine \eqref{eq:Qtilde} and \eqref{eq:QFR} and write
\begin{equation}
\frac{Q(z)}{{Q}_0(z)}=\bar{Q}\prod_{j=1}^{d_1}(z-w_j)^{\beta_j}.	
\end{equation}
Observe that any logarithm of $Q/Q_0$ will have branch points $\{w_j\}_{j=1}^{d_1}$. Consider the function
\begin{equation}\label{eq:U}
U(z)\coloneqq \log\bigg(\frac{Q(z)}{{Q}_0(z)}\bigg) \quad (z\in \C\setminus B),
\end{equation}
where the branch cuts are chosen to be
\begin{equation}
B=\bigcup_{n=1}^{d_1} \{s w_n: s \in [1,\infty)\}.	
\end{equation}
By construction, $U(z)$ is holomorphic on $\overline{\D}$.  The real part of $U(z)$ is found to be
\begin{equation}\label{eq:ReU}
\mathrm{Re}\,U(z)=\log\bigg\lvert \frac{Q(z)}{Q_0(z)}\bigg\rvert =\frac12\log\bigg(\frac{1-\lvert P(z)\rvert^2}{\lvert Q_0(z) \rvert^2}\bigg) \quad (z\in \C\setminus B).
\end{equation}
In particular, \eqref{eq:ReU} holds on $\T$, so by the Schwarz integral formula \cite{ahlfors}, we obtain a representation of $U(z)$ on $\D$,
\begin{align}\label{eq:USchwarz}
U(z)=&\; \frac{1}{2\pi\ii}\int_{\T}\frac{z'+z}{z'-z}\,\mathrm{Re}\,U(z')\,\frac{\mathrm{d}z'}{z'}+\ii\,\mathrm{Im}\,U(0) \nonumber \\
=&\;  \frac{1}{4\pi\ii}\int_{\T}\frac{z'+z}{z'-z}\log\bigg(\frac{1-\lvert P(z')\rvert^2}{\lvert Q_0(z')\rvert^2}\bigg)\,\frac{\mathrm{d}z'}{z'}+\ii\,\mathrm{Im}\,U(0) \quad (z\in \D). 
\end{align}
By exponentiating \eqref{eq:U} and \eqref{eq:USchwarz} and using the $\mathrm{U}(1)$ symmetry of Problem~\ref{prob:CP}, we obtain \eqref{eq:QD}. \\

The second case of \eqref{eq:Q} is obtained from the first using the Plemelj formula \eqref{eq:Plemelj}. Note that the integrand in \eqref{eq:QD} has a simple pole at $z'=z$ with residue $2\log\big(\frac{1-\lvert P(z)\rvert^2}{\lvert Q_0(z)\rvert^2}\big)$. Thus, applying the Plemelj formula as $z\in \D$ approaches the contour $\T$ gives \eqref{eq:QT}. \\

 The third case of \eqref{eq:Q} is obtained from the second by analytic continuation. Let us write \eqref{eq:QT} as
\begin{align}\label{eq:QCDcalc}
Q(z)=&\; Q_0(z)\exp\Bigg(\frac{1}{4\pi\ii}\,\pvint_{\T}\frac{z'+z}{z'-z}\log \bigg(\frac{1-\lvert P(z')\rvert^2}{\lvert {Q}_0(z')\rvert^2}\bigg) \frac{\mathrm{d}z'}{z'}-\frac12\log \bigg(\frac{1-\lvert P(z)\rvert^2}{\lvert {Q}_0(z)\rvert^2}\bigg)+ \log \bigg(\frac{1-\lvert P(z)\rvert^2}{\lvert {Q}_0(z)\rvert^2}\bigg)\Bigg) \nonumber \\
=&\; Q_0(z)\frac{1-\lvert P(z)\rvert^2}{\lvert Q_0(z)\rvert^2}\exp \Bigg(\frac{1}{4\pi\ii}\,\pvint_{\T}\frac{z'+z}{z'-z}\log \bigg(\frac{1-\lvert P(z')\rvert^2}{\lvert {Q}_0(z')\rvert^2}\bigg) \frac{\mathrm{d}z'}{z'}-\frac12\log \bigg(\frac{1-\lvert P(z)\rvert^2}{\lvert {Q}_0(z)\rvert^2}\bigg)\Bigg) \quad (z\in \T).
\end{align}
Analytic continuation of the prefactor and exponent in \eqref{eq:QCDcalc}, using that the latter represents the boundary values of a Cauchy integral,  to $\C\setminus\overline{\D}$ gives \eqref{eq:QCD}.

\subsection{Proof of Corollary~\ref{cor:real}}
\label{subsec:realproof}

This result follows from the F\'{e}jer-Riesz theorem and properties of Laurent polynomials with real coefficients. \\

On $\T$, $1-\lvert P(z)\rvert^2=1-P(z)P(1/z)$, a positive-semidefinite Laurent polynomial of degree $d$ with real coefficients. By the same argument as in the proof of Theorem~\ref{thm:main}, we may write $Q$ in the canonical form \eqref{eq:QFR}. Because $1-P(z)P(1/z)$ has real coefficients, it has the following symmetries (i) if $w\in \C\setminus(\R\cup \T)$ is a root, so are $1/w$, $w^*$, and $1/w^*$ and (ii) if $w\in \T\setminus\{\pm 1\}$ is a root, so is $1/w$, in both cases with the same multiplicities. Requiring that these symmetries be respected in \eqref{eq:PQ2} shows that the polynomial $Q(z)/\bar{Q}$ obtained from \eqref{eq:QFR} has real coefficients. Choosing $\bar{Q}\in \R$ gives the result.

\subsection{Proof of Corollary~\ref{cor:main}}\label{subsec:corproof}

Performing the change of variables $z=\ee^{\ii\theta}$, $z'=\ee^{\ii\theta'}$ in \eqref{eq:QT} gives
\begin{equation}\label{eq:Qexp}
Q(\ee^{\ii\theta})=Q_0(\ee^{\ii\theta})\exp\Bigg(\frac{1}{4\pi\ii}\,\pvint_{-\pi}^{\pi} \cot\bigg(\frac{\theta'-\theta}{2}\bigg)	\log \bigg(\frac{1-\lvert P(\ee^{\ii\theta'})\rvert^2}{\lvert Q_0(\ee^{\ii\theta'})\rvert^2}\bigg)\,\mathrm{d}\theta'+\frac12 	\log \bigg(\frac{1-\lvert P(\ee^{\ii\theta})\rvert^2}{\lvert Q_0(\ee^{\ii\theta})\rvert^2}\bigg)\Bigg).
\end{equation}
The first term in the exponent is identified as a periodic Hilbert transform \cite{king2009},
\begin{equation}\label{eq:H}
H[f(\theta)]\coloneqq \frac{1}{2\pi}\,\pvint_{-\pi}^{\pi} \cot\bigg(\frac{\theta'-\theta}2\bigg)f(\theta')\,\mathrm{d}\theta'.
\end{equation} 
We recall that the periodic Hilbert transform \eqref{eq:H} has the complex exponentials as eigenfunctions, 
\begin{equation}\label{eq:Hexp}
H[\ee^{\ii n\theta}]=\begin{cases}
\ii\ee^{\ii n\theta} & n\in \Z_{\geq 1} \\
0 & n=0 \\
-\ii \ee^{\ii n\theta} & n\in \Z_{\geq 1}.	
\end{cases}
\end{equation}
Writing $\Pi=\frac12(1-\ii H)$, we see from \eqref{eq:Hexp} that \eqref{eq:Piexp} holds. Expressing \eqref{eq:Qexp} in terms of $\Pi$ gives the result.

\section{Numerical methods}
\label{sec:numerical}

We develop a numerical method for solving Problem~\ref{prob:CP} based on Corollary~\ref{cor:main} in the case $Q_0(z)=1$. Our starting point is the Laurent series
\begin{equation}\label{eq:S}
S(z)\coloneqq \sum_{n\in \Z} a_n z^n,
\end{equation}
where
\begin{equation}\label{eq:an}
a_n\coloneqq \frac{1}{2\pi\ii}\int_{\T} \log\big({1-\lvert P(z)\rvert^2}\big)\frac{\mathrm{d}z}{z^{n+1}} \quad (n\in \Z). 
\end{equation}
Observe that $S(\ee^{\ii\theta})$ is the Fourier series of $\log\big({1-\lvert P(\ee^{\ii\theta})\rvert^2}\big)$. Thus, from \eqref{eq:QFourier} and \eqref{eq:Piexp}, we have
\begin{equation}\label{eq:Qexp2}
Q(\ee^{\ii\theta})=\exp\big(\Pi[{S}(\ee^{\ii\theta})]\big)=\exp\Bigg(\frac12 a_0+\sum_{n=1}^{\infty} a_n\ee^{\ii n\theta}\Bigg).	
\end{equation}

To numerically evaluate \eqref{eq:Qexp2}, we make two approximations that allow us to compute $Q$ in the monomial basis by a sequence of FFTs. As the performance of an FFT is optimized when the number of Fourier modes is a power of $2$, we choose the size of this basis to be $N=2^M$ for some $M\in \Z_{\geq 1}$ satisfying $M\geq \lceil \log_2(d+1)\rceil$. The error analysis of Algorithm~\ref{algo:main}, which will result from the approximations and analysis in this section, is performed in Section~\ref{sec:error}. First, we introduce the Laurent polynomial truncation of \eqref{eq:S},
\begin{equation}\label{eq:SN}
S_N(z)\coloneqq \sum_{n=-\frac{N}2+1}^{\frac{N}{2}} a_n z^n \quad (N\in \Z_{\geq d_1+1}).
\end{equation}
Second, we will approximate the coefficients \eqref{eq:an} by discrete Fourier transforms. Consider the primitive $N$th root of unity
\begin{equation}
\omega_{N}\coloneqq \ee^{2\pi\ii/N}, 
\end{equation}
which we use to define the following approximation of the Laurent coefficients \eqref{eq:an},
\begin{equation}\label{eq:antilde}
\tilde{a}_n\coloneqq \frac{1}{N}\sum_{m=-\frac{N}{2}+1}^{\frac{N}{2}} \log\big({1-\lvert P(\omega_{N}^{m})\rvert^2}\big)\omega_{N}^{-nm} \quad (n=-\tfrac{N}{2}+1,\ldots,\tfrac{N}{2}).	
\end{equation}	
It follows that
\begin{equation}\label{eq:SNtilde}
\tilde{S}_N(z)\coloneqq \sum_{n=-\frac{N}{2}+1}^{\frac{N}2} \tilde{a}_n z^n	
\end{equation}
is an approximation of $S_N$ \eqref{eq:SN}. \\

Replacing $S(\ee^{\ii\theta})$ by $\tilde{S}_N(\ee^{\ii\theta})$ in \eqref{eq:Qexp2} gives
\begin{equation}\label{eq:QN}
\tilde{Q}_{1,N}(\ee^{\ii\theta})\coloneqq \exp\big(\Pi\big[\tilde{S}_N(\ee^{\ii\theta})\big]\big)=	\exp\Bigg(\frac12 \tilde{a}_0+\sum_{n=1}^{\frac{N}{2}} \tilde{a}_n\ee^{\ii n\theta}\Bigg)
\end{equation}
as an approximation of $Q$ on $\T$. This is, however, no guarantee that $\tilde{Q}_{1,N}(\ee^{\ii\theta})$ is a trigonometric polynomial or equivalently, extends to a polynomial $\tilde{Q}_{1,N}(z)$ on $\C$. We can instead (i) interpolate \eqref{eq:QN} through the roots of unity $\{\omega_{N}^n\}_{n=0}^{N-1}$ and (ii) discard terms in $z^n$ for $n>d$ to obtain an explicit polynomial of degree $d$ in the monomial basis. \\

Let us write
\begin{equation}\label{eq:Qpoly}
Q(z)=\sum_{n=0}^d q_n z^n,	
\end{equation}
in correspondence with \eqref{eq:P}. We define $q_n=0$ for $n>d$. The evaluation of $Q$ at the roots of unity $\{\omega_{N}^n\}_{n=0}^{N-1}$,
\begin{equation}
Q(\omega_{N}^n)=\sum_{m=0}^{N-1} q_m\omega_{N}^{nm} \quad (n\in [N-1]_0),
\end{equation}
is an inverse discrete Fourier transform of the coefficients $(q_n)_{n=0}^{N-1}$. Thus, the corresponding forward transform allows for the computation of $(q_n)_{n=0}^d$,  
\begin{equation}\label{eq:qn}
q_n= \frac{1}{N}\sum_{m=0}^{N-1} Q(\omega_{N}^m)\omega_{N}^{-nm} \quad (n\in [d]_0).
\end{equation}
We are led to define the following approximations of the monomial coefficients $(q_n)_{n=0}^{d}$,
\begin{equation}\label{eq:qntilde}
\tilde{q}_n\coloneqq \frac{1}{N}\sum_{m=0}^{N-1} \tilde{Q}_{1,N}(\omega_{N}^m)\omega_{N}^{-nm} \quad (n\in [d]_0)
\end{equation}
and the following manifestly polynomial approximation to $Q$,
\begin{equation}\label{eq:Q2}
\tilde{Q}_{2,N}(z)\coloneqq \sum_{n=0}^d \tilde{q}_n z^n. 
\end{equation}

\subsection{Algorithm}
\label{sec:algorithm}

We combine the observations obtained in this section into algorithms to compute $\tilde{Q}_{2,N}$ \eqref{eq:Q2}, an approximate canonical complementary polynomial to $P$. In order to avoid the use of root-finding to determine $Q_0$, we will consider situations where $Q_0(z)=1$; see Algorithm~\ref{algo:main}. However, this is not a restriction; by downscaling the input polynomial, $Q_0(z)=1$ can always be achieved. Accordingly, the generalized Algorithm~\ref{algo:downscaling} applies to any target polynomial satisfying $\lVert P(z)\rVert_{\infty,\T}\leq 1$. \\

For many practical applications of QSP-type algorithms in quantum computation, the parameter $\delta$ in \eqref{eq:deltabound}
can be controlled \textit{a priori} in the construction of a polynomial $P$ approximating a target function.
Then, \eqref{eq:deltabound} ensures that $1-|P(z)|^2$ has no roots on $\T$ and hence $Q_0(z)=1$. Algorithm~\ref{algo:main} applies directly in this scenario.

\begin{algo}[Construction of a canonical complementary polynomial for known $\delta$]\label{algo:main} 
\hfill\break
\textbf{Input:} 
\begin{itemize}
    \item The monomial coefficients $(p_n)_{n=0}^{d}$ of $P\in \C[z]$, $\deg P=d$, satisfying \eqref{eq:deltabound} for known $\delta\in (0,1)$.
    \item An integer $N\in \mathbb{Z}_{\geq d}$ determining the dimension of the FFTs and thus controlling the accuracy $\varepsilon$ of the output.
\end{itemize}
\textbf{Output:} 
\begin{itemize}
\item The monomial coefficients $(\tilde{q}_n)_{n=0}^d$ of an approximate canonical complementary polynomial $\tilde{Q}_{2,N}\in\C[z]$, $\deg \tilde{Q}_{2,N}=d$, approximating $Q$ from Theorem~\ref{thm:main} to accuracy $\varepsilon$ in each monomial coefficient; see \eqref{eq:Q2}.
\end{itemize}
\textbf{Complexity:}
\begin{itemize}
    \item Runtime: $O(N\log N)$.
    \item Sufficient $N$ for accuracy $\varepsilon$: $N=O\!\left(\frac{d}{\delta}\log\frac{d}{\delta\varepsilon}\right)$; see Theorem~\ref{thm:error}.
\end{itemize}
\textbf{Algorithm:}
\begin{enumerate}
\item Compute $(P(\omega_N^n))_{n=0}^{N-1}$, the input polynomial $P(z)=\sum_{n=0}^dp_nz^n$ evaluated at all $N$th roots of unity, with the inverse FFT of $(p_n)_{n=0}^{N-1}$ padded with zeros, i.e., $p_n=0$ for $n>d$.
\item Compute $(\tilde{a}_n)_{n=-\frac{N}{2}+1}^{\frac{N}2}$ by applying the FFT to $(\log\left(1-\left|P(\omega_N^n)\right)|^2\right))_{n=0}^{N-1}$ obtained from the previous step, see \eqref{eq:antilde}.
\item Apply the Fourier multiplier $\Pi$ to the truncated Fourier series \eqref{eq:SNtilde}.
\item Evaluate the exponential $(\tilde Q_{1,N}(\omega_N^n))_{n=0}^{N-1}$ in \eqref{eq:QN} at $N$th roots of unity by taking the exponential of an inverse FFT of the previous step's result.
\item Compute $(\tilde q_n)_{n=0}^{N-1}$ in \eqref{eq:qntilde} by applying the FFT to $(\tilde Q_{1,N}(\omega_N^n))_{n=0}^{N-1}$.
\item Truncate the coefficients of the previous step to $(\tilde q_n)_{n=0}^d$ and output them as the coefficients of the approximation $\tilde Q_{2,N}(z)$ to the complementary polynomial, see \eqref{eq:Q2}.
\end{enumerate}
\textbf{Reference implementation:}
\begin{itemize}
    \item See Figure~\ref{fig:code} for Python code and Section~\ref{sec:numerical} for numerical results.
\end{itemize}
\end{algo}

The algorithm relies on FFTs to map between the coefficients of a polynomial and its values at roots of unity and to apply the Fourier multiplier $\Pi$ in Fourier space \eqref{eq:QFourier}--\eqref{eq:Pi description}. The FFTs of dimension $N$ correspond to an overall runtime of $O(N\log N)$. \\

In Theorem~\ref{thm:error}, stated in Section~\ref{sec:error}, we prove that Algorithm~\ref{algo:main} computes a canonical complementary polynomial to accuracy $\varepsilon$ in the monomial coefficients. We emphasize that the canonical complementary polynomial is the unique solution of Problem~\ref{prob:CP}, up to a multiplicative phase, with no roots in $\D$, as in Theorem~\ref{thm:main}. Theorem~\ref{thm:error} moreover shows that the algorithm is efficient in degree and error, with a sufficient $N = O\big(d\log \frac{d}{\varepsilon}\big)$ for fixed $\delta$. \\

For small $\delta$, the scaling $N\sim\frac1{\delta}$ from Theorem~\ref{thm:error} suggests a long runtime; in the extreme case $\delta=0$, the proof for Algorithm~\ref{algo:main} fails because $Q_0(z)\neq1$. 
For those cases, we present Algorithm~\ref{algo:downscaling}, in which the initial polynomial is downscaled as $P(z)\to(1-\tfrac{\varepsilon}4)P(z)$ to achieve an effective $\delta = \frac{\varepsilon}4$. 
In Theorem~\ref{thm:error downscaling}, stated in Section~\ref{sec:error}, we prove that the polynomial generated by Algorithm~\ref{algo:downscaling} with $N=O\big(\frac{d}{\varepsilon}\log\frac{d}{\varepsilon})$ satisfies the complementarity condition \eqref{eq:PQ} to accuracy $\varepsilon$, i.e., $
\big\lVert \lvert P(z)\rvert^2+\lvert Q(z)\rvert^2\big\rVert_{\infty,\T}< \varepsilon$. Closeness in the complementarity condition \eqref{eq:PQ} is a weaker statement than closeness to an exact canonical complementary polynomial, as Theorem~\ref{thm:error} promises for Algorithm~\ref{algo:main}. Yet, it enables us to rigorously and efficiently extend our numerical method to all polynomials $P$ with $\lVert P(z)\rVert_{\infty,\T}\leq 1$. 

\begin{algo}[Construction of a complementary polynomial for zero, unknown, or small $\delta$]
\label{algo:downscaling}
\hfill\break
\textbf{Input:} 
\begin{itemize}
    \item The monomial coefficients of $P\in \C[z]$, $\deg P=d$, with $\lVert P(z)\rVert_{\infty,\T}\leq 1$.
    \item An integer $N\in \mathbb{Z}_{\geq d}$ determining the dimension of the FFTs and thus controlling the accuracy $\varepsilon$ of the output.
\end{itemize}
\textbf{Output:} 
\begin{itemize}
\item The monomial coefficients $(\tilde{q}_n)_{n=0}^d$ of a canonical complementary polynomial $\tilde{Q}_{2,N}\in \C[z]$, $\deg \tilde{Q}_{2,N}=d$, to accuracy $\varepsilon$ in the complementarity condition \eqref{eq:PQ}, i.e., $
\big\lVert \lvert P(z)\rvert^2+\lvert Q(z)\rvert^2\big\rVert_{\infty,\T}< \varepsilon$.
\end{itemize}
\textbf{Complexity:}
\begin{itemize}
    \item Runtime: $O(N\log N)$.
    \item Sufficient $N$ for accuracy $\varepsilon$: $N=O\!\left(\frac{d}{\varepsilon}\log\frac{d}{\varepsilon}\right)$; see Theorem~\ref{thm:error downscaling}.
\end{itemize}
\textbf{Algorithm:}
\begin{enumerate}
\item Compute $\big((1-\tfrac{\varepsilon}4)p_n\big)_{n=0}^d$ to scale down the input polynomial $P(z)=\sum_{n=0}^d p_nz^n$.
\item Return the result of Algorithm~\ref{algo:main} with input the downscaled polynomial from the previous step, for which $\delta=\frac{\varepsilon}4$, and $N$ chosen to yield an accuracy of $\frac{\varepsilon}{5(d+1)}$.
\end{enumerate}
\end{algo}

\begin{rem}
 Empirically, we find that Algorithm~\ref{algo:downscaling} may not be needed. Even without the initial downscaling, our numerical results in Section~\ref{sec:numericalresults} suggest that Algorithm~\ref{algo:main} alone is efficacious even when $\delta=0$, with a sufficient $N= O\big(\frac{d}{\sqrt[4]\varepsilon}\big)$. Here, and in our error analysis in the next section, we include Algorithm~\ref{algo:downscaling} to preserve full mathematical rigor.   
\end{rem}

\section{Error analysis of algorithms}
\label{sec:error}

We perform error analysis on Algorithm~\ref{algo:main} and Algorithm~\ref{algo:downscaling}, developed in the previous section. Our main results, Theorem~\ref{thm:error}, Corollary~\ref{cor:error}, and Theorem~\ref{thm:error downscaling}, are stated below. The corresponding proofs are given in Sections~\ref{subsec:errorproof}, \ref{subsec:errorcorproof}, and \ref{subsec:downscaling proof}, respectively.

\subsection{Error metrics}
\label{subsec:metrics}

We introduce two error metrics that we will later use to analyze the algorithms. In the definitions of these error metrics, we view $P,Q\in \C[z]$ as \textit{a priori} unrelated; when $Q$ is an exact complementary polynomial to a given $P$, the error metrics will evaluate to zero. \\

The first error metric we consider is motivated by the complementarity condition \eqref{eq:PQ}, 
\begin{equation}\label{eq:Phi}
\Phi(P,Q)\coloneqq \big\lVert \lvert P(z)\rvert^2+\lvert Q(z)\rvert^2-1\big\rVert_{\infty,\T}.    
\end{equation}
The second error metric we consider was introduced in \cite{motlagh2024} as a loss function for optimization of $Q$ and is defined in terms of the monomial coefficients of the polynomials $P$ and $Q$,
\begin{equation}\label{eq:loss}
\tilde{\Phi}(P,Q)\coloneqq \Bigg(\sum_{n=-d}^d \Bigg\lvert \sum_{m=0}^d\big( p_{n+m} p_{m}^*+{q}_{n+m}{q}_{m}^*\big) -{\delta_{n,0}}  \Bigg\rvert^2\Bigg)^{\frac12}.
\end{equation}
The error metrics \eqref{eq:Phi} and \eqref{eq:loss} are compatible in the following sense. 

\begin{proposition}
\label{prop:norm equivalent}
Let $P,Q\in \C[z]$ with $\deg P=\deg Q=d$. Then, the complementarity condition $\Phi(P,Q)$ and the loss function $
\tilde{\Phi}(P,Q)$ are equivalent in the sense that they satisfy the inequalities   \begin{equation}\label{eq:norminequalities}
\frac{1}{\sqrt{2d+1}}\Phi(P,Q)\leq  \tilde{\Phi}(P,Q)\leq \sqrt{2d+1}\,\Phi(P,Q).
\end{equation} 
\end{proposition}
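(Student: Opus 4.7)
The plan is to identify both error metrics as norms of the same trigonometric polynomial and then invoke standard norm inequalities. Define
\[
R(z)\coloneqq \lvert P(z)\rvert^2+\lvert Q(z)\rvert^2-1\quad(z\in\T),
\]
so that $\Phi(P,Q)=\lVert R\rVert_{\infty,\T}$ directly from \eqref{eq:Phi}. On $\T$ we have $\lvert P(z)\rvert^2=P(z)P^*(1/z)$, and likewise for $Q$. Expanding $P(z)P^*(1/z)+Q(z)Q^*(1/z)$ in the monomial basis and collecting powers of $z$ (with the convention $p_j=q_j=0$ when $j\notin[d]_0$), one obtains the Laurent expansion
\[
R(\ee^{\ii\theta})=\sum_{k=-d}^{d}c_k\,\ee^{\ii k\theta},\qquad c_k\coloneqq\sum_{m=0}^{d}\bigl(p_{k+m}p_m^{*}+q_{k+m}q_m^{*}\bigr)-\delta_{k,0},
\]
so that $R$ is a trigonometric polynomial of degree at most $d$ whose Fourier coefficients $c_k$ are exactly the summands appearing in \eqref{eq:loss}.

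The upper bound then follows immediately from Parseval's identity with respect to the normalized measure $\mathrm{d}\theta/(2\pi)$ on $\T$:
\[
\tilde\Phi(P,Q)^2=\sum_{k=-d}^{d}\lvert c_k\rvert^2=\lVert R\rVert_{L^2(\T,\mathrm{d}\theta/(2\pi))}^2\leq \lVert R\rVert_{\infty,\T}^2=\Phi(P,Q)^2,
\]
which already gives $\tilde\Phi(P,Q)\leq \Phi(P,Q)\leq\sqrt{2d+1}\,\Phi(P,Q)$, establishing the right-hand inequality in \eqref{eq:norminequalities}.

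For the lower bound, I would apply the Cauchy–Schwarz inequality to the finite Fourier sum: for every $\theta\in(-\pi,\pi]$,
\[
\lvert R(\ee^{\ii\theta})\rvert\leq \sum_{k=-d}^{d}\lvert c_k\rvert\leq \sqrt{2d+1}\,\Bigl(\sum_{k=-d}^{d}\lvert c_k\rvert^2\Bigr)^{1/2}=\sqrt{2d+1}\,\tilde\Phi(P,Q).
\]
Taking the supremum over $\theta$ yields $\Phi(P,Q)\leq\sqrt{2d+1}\,\tilde\Phi(P,Q)$, i.e.\ the left-hand inequality in \eqref{eq:norminequalities}.

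There is no real obstacle here; the only substantive step is bookkeeping the Fourier expansion of $R$ and recognizing $\tilde\Phi$ as its $\ell^2$-norm of Fourier coefficients. The rest is the standard two-sided comparison of $L^\infty$ and $L^2$ norms on a trigonometric polynomial of bounded degree.
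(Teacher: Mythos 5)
Your proof is correct. For the lower bound $\Phi\leq\sqrt{2d+1}\,\tilde\Phi$ you take essentially the same route as the paper: identify $R(\ee^{\ii\theta})$ as a trigonometric polynomial whose Fourier coefficients are exactly the summands of $\tilde\Phi$, then apply the triangle inequality followed by Cauchy--Schwarz (the $\ell^1$--$\ell^2$ inequality) to the coefficient vector. For the upper bound $\tilde\Phi\leq\sqrt{2d+1}\,\Phi$, however, your argument diverges from the paper's and is actually sharper. The paper bounds each individual Fourier coefficient $c_n$ by $\Phi$ via the Cauchy integral formula and then sums $2d+1$ such bounds, yielding $\tilde\Phi\leq\sqrt{2d+1}\,\Phi$. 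You instead invoke Parseval's identity to recognize $\tilde\Phi=\lVert R\rVert_{L^2(\T,\mathrm{d}\theta/(2\pi))}$ and use $\lVert\cdot\rVert_{L^2}\leq\lVert\cdot\rVert_{L^\infty}$ on the normalized circle, obtaining the constant-free inequality $\tilde\Phi\leq\Phi$, from which the stated bound trivially follows. The Parseval route is cleaner, avoids the contour integral, and shows the right-hand constant $\sqrt{2d+1}$ in \eqref{eq:norminequalities} is not tight.
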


\begin{proof}

Putting \eqref{eq:P} and \eqref{eq:Qpoly} into \eqref{eq:Phi}, we write
\begin{equation}
\Phi(P,Q)= \Bigg\lVert \sum_{n,m=1}^d \big(p_np_m^*+q_nq_m^* \big) z^{n-m}-1\Bigg\rVert_{\infty,\T}.  
\end{equation}
By changing the summation variables and using the triangle and $\ell^1$-$\ell^2$ norm inequalities, we obtain
\begin{align}\label{eq:Phicalc}
\Phi(P,Q)=&\; \Bigg\lVert \sum_{n=-d}^d\Bigg(\sum_{m=0}^d (p_{n+m}p_m^*+q_{n+m}q_m^*\big)-\delta_{n,0}\Bigg)z^n  \Bigg\rVert \leq  \sum_{n=-d}^d\Bigg\lvert \sum_{m=0}^d  (p_{n+m}p_m^*+q_{n+m}q_m^*\big)-\delta_{n,0} \Bigg\rvert \nonumber \\
 \leq &\; \sqrt{2d+1} \Bigg(\sum_{n=-d}^d\Bigg\lvert \sum_{m=0}^d  (p_{n+m}p_m^*+q_{n+m}q_m^*\big)-\delta_{n,0} \Bigg\rvert^2 \Bigg)^{\frac12},
\end{align}
which, recalling \eqref{eq:loss}, is the first inequality in \eqref{eq:norminequalities}.  \\

To prove the second inequality in \eqref{eq:norminequalities}, we use  Cauchy integral formula to write
\begin{equation}
\sum_{m=0}^d \big(p_{n+m}p_m^*+q_{n+m}q_m^*\big)-\delta_{n,0}=\frac{1}{2\pi\ii} \int_{\T} \Bigg( \sum_{l=-d}^d \sum_{m=0}^d  (p_{l+m}p_m^*+q_{l+m}q_m^*-\delta_{l,0}) z^n\Bigg) \frac{\mathrm{d}z}{z^{n+1}} \quad (n=-d,\ldots,d) 
\end{equation}
and hence,
\begin{equation}\label{eq:coefftoPhi}
\Bigg\lvert \sum_{m=0}^d \big(p_{n+m}p_m^*+q_{n+m}q_m^*\big)-\delta_{n,0}\Bigg\rvert \leq \Phi(P,Q) \quad (n=-d,\ldots,d) , 
\end{equation}
where we have used \eqref{eq:Phicalc}. Putting \eqref{eq:coefftoPhi} into \eqref{eq:loss} gives the result.
\end{proof}

\subsection{Results of error analysis}

We establish rigorous error bounds on Algorithms~\ref{algo:main} and \ref{algo:downscaling}, using the error metrics introduced in the previous subsection. 

\begin{theorem}[Error bounds for Algorithm~\ref{algo:main}]
\label{thm:error}
Suppose $P\in \C[z]$ satisfying \eqref{eq:deltabound} for some $\delta\in (0,1)$ and $\varepsilon\in (0,1)$ are given. Choose $N\in \Z_{\geq 1}$ such that
\begin{equation}\label{eq:Nbound}
N \geq N_0(\varepsilon,\delta,d)\coloneqq \bigg\lceil  \frac{2}{\log r_{\delta}}\log\bigg(8 \frac{\log(\frac{1}{\delta})}{r_{\delta}-1}\frac{1}{\varepsilon} \bigg)\bigg\rceil,
\end{equation}
where
\begin{equation}\label{eq:repsdel}
r_{\delta}\coloneqq \bigg(\frac{1}{1-\delta}\bigg)^{\frac1d}.   
\end{equation}
Then, the output of Algorithm~\ref{algo:main} satisfies 
\begin{equation}\label{eq:QQ2result}
\lvert q_n-\tilde{q}_n\rvert < \varepsilon \quad (n\in [d]_0).
\end{equation}
In particular, \eqref{eq:Nbound} has the joint asymptotic complexity
\begin{align}
\label{eq:algo1 complexity}
N_0(\varepsilon,\delta,d) = O\bigg( \frac{d}{\delta}\log\frac{d}{\delta\varepsilon}\bigg).
\end{align}
\end{theorem}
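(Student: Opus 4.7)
The plan is to bound $\lvert q_n - \tilde q_n\rvert$ by tracking error propagation through the six steps of Algorithm~\ref{algo:main}. The engine of the analysis is exponential decay of the Laurent coefficients $a_n$ of $f(z) \coloneqq \log(1 - P(z) P^*(1/z))$, which controls how fast all truncation and aliasing errors vanish in $N$.

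First, I would establish that the hypothesis $\lVert P\rVert_{\infty,\T} \leq 1-\delta$ implies $f$ is analytic in the annulus $\{1/r_\delta < \lvert z\rvert < r_\delta\}$, with $\lvert f(z)\rvert = O(\log(1/\delta))$ there. The crucial bound is $\lvert P(z)\rvert \leq (1-\delta)\max(1, \lvert z\rvert^d)$, obtained by applying the maximum principle to $P(z)/z^d$ on $\lvert z\rvert \geq 1$; this yields $\lvert P(z)P^*(1/z)\rvert \leq (1-\delta)^{2}\max(\lvert z\rvert^d, \lvert z\rvert^{-d}) \leq 1 - \delta$ throughout the annulus, so $1 - PP^*$ stays bounded away from zero and $f$ is well-defined and analytic. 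Cauchy's estimate on circles of radius approaching $r_\delta$ then gives $\lvert a_n\rvert \leq C\log(1/\delta)\, r_\delta^{-\lvert n\rvert}$, and geometric summation produces both the truncation bound $\sum_{\lvert n\rvert > N/2}\lvert a_n\rvert$ and the DFT aliasing bound $\max_n\sum_{k \neq 0}\lvert a_{n+kN}\rvert$ of order $\log(1/\delta)\, r_\delta^{-N/2}/(r_\delta - 1)$. This is what ultimately produces both the prefactor $8\log(1/\delta)/(r_\delta-1)$ and the factor $2/\log r_\delta$ in \eqref{eq:Nbound}.

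Next, I would propagate this error through the remaining algorithm steps. The exponent of $\tilde Q_{1,N}(\omega_N^m)$, namely $\tilde g_N(\omega_N^m) \coloneqq \tfrac12 \tilde a_0 + \sum_{k=1}^{N/2}\tilde a_k \omega_N^{km}$, approximates $\Pi[S(\omega_N^m)] = \log Q(\omega_N^m)$ with error controlled by the previous paragraph. Writing $\tilde Q_{1,N}(\omega_N^m) - Q(\omega_N^m) = Q(\omega_N^m)\big(\exp(\tilde g_N - \log Q) - 1\big)$ and using $\lvert Q\rvert \leq 1$ on $\T$ together with $\lvert e^w - 1\rvert \leq e \lvert w\rvert$ for $\lvert w\rvert \leq 1$ transfers the same rate to the pointwise value error. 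For the final DFT step, I would exploit that $\deg Q = d < N$ makes the formula $q_n = N^{-1}\sum_m Q(\omega_N^m)\omega_N^{-nm}$ exact; subtracting from \eqref{eq:qntilde} yields
\[
\tilde q_n - q_n = \frac{1}{N}\sum_{m=0}^{N-1}\!\big[\tilde Q_{1,N}(\omega_N^m) - Q(\omega_N^m)\big]\omega_N^{-nm} \;+\; \sum_{k \geq 1} c_{n + kN},
\]
where $c_m$ are the Taylor coefficients of the entire function $\tilde Q_{1,N}$. The first term inherits the pointwise estimate just derived. The second, an aliasing error reflecting the fact that $\tilde Q_{1,N}$ is not a polynomial of degree $d$, I would handle by deforming the Cauchy integral for $c_m$ to a circle $\lvert z\rvert = R$ with $R$ close to $r_\delta$, using a uniform bound $\lvert \tilde Q_{1,N}(z)\rvert = O(1)$ there (obtained from the geometric estimate $\sum_k\lvert \tilde a_k\rvert R^k = O(1)$ for $R < r_\delta$) to get $\sum_{k \geq 1}\lvert c_{n+kN}\rvert = O(r_\delta^{-N})$, which is subdominant.

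Assembling the bounds and solving $C(\delta)\, r_\delta^{-N/2} < \varepsilon$ for $N$ produces \eqref{eq:Nbound}, and the asymptotic \eqref{eq:algo1 complexity} then follows from $\log r_\delta = \tfrac{1}{d}\log\tfrac{1}{1-\delta} = \Theta(\delta/d)$ as $\delta \to 0$. I expect the main obstacle to be the careful bookkeeping of constants needed to match the explicit prefactor in \eqref{eq:Nbound}; a secondary subtlety is that the Lipschitz estimate for the exponential must be factored through $Q$ rather than through $\exp(\Pi[S])$ directly, since a naive bound $\lvert e^x - e^y\rvert \leq e^{\max(\lvert x\rvert, \lvert y\rvert)}\lvert x-y\rvert$ combined with $\lvert \log Q\rvert = O(\log(1/\delta))$ would produce a spurious $1/\delta^{O(1)}$ factor that is absent from the final bound.
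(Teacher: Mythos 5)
Your overall strategy matches the paper's proof closely: analyticity of $\log(1-PP^*)$ in the annulus of radii $r_\delta^{\pm1}$, the maximum-modulus argument on the reciprocal polynomial to bound $|PP^*|\le 1-\delta$ there, the Cauchy estimate $|a_n|\le M(r)r^{-|n|}$, geometric sums for the truncation and (discrete Poisson) aliasing errors, transfer to $|Q-\tilde Q_{1,N}|$ by factoring out $Q=\exp(\Pi[S])$ with $|Q|\le 1$ on $\T$, and the final DFT comparison. You also correctly identify the key subtlety that one must factor through $|Q|\le 1$ rather than through a bound on $|\log Q|$, which the paper handles the same way.

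However, your final displayed decomposition is incorrect. You first (correctly) observe that $q_n=\tfrac1N\sum_m Q(\omega_N^m)\omega_N^{-nm}$ holds \emph{exactly} because $\deg Q=d<N$. Subtracting this identity from the definition $\tilde q_n=\tfrac1N\sum_m \tilde Q_{1,N}(\omega_N^m)\omega_N^{-nm}$ gives
\begin{equation}
\tilde q_n-q_n=\frac{1}{N}\sum_{m=0}^{N-1}\bigl[\tilde Q_{1,N}(\omega_N^m)-Q(\omega_N^m)\bigr]\omega_N^{-nm},
\end{equation}
with \emph{no} additional term: the right-hand side already equals $\tilde q_n-q_n$ identically. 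The extra $\sum_{k\ge1}c_{n+kN}$ you append double-counts the aliasing of $\tilde Q_{1,N}$, which is already baked into $\tilde q_n$ itself (indeed $\tilde q_n=\sum_{k\ge0}c_{n+kN}$ by Poisson summation applied to $\tilde Q_{1,N}$, so it cannot simultaneously appear as a correction on top of the exact difference). As written the equation is false, and the subsequent effort to bound $\sum_{k\ge1}c_{n+kN}$ via a contour deformation with a uniform bound on $\tilde Q_{1,N}$ off the circle is addressing a nonexistent error. Dropping that term (as the paper does) both repairs the step and simplifies the proof, and leaves your final bound and the derivation of \eqref{eq:Nbound} and \eqref{eq:algo1 complexity} intact.
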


We can use Theorem~\ref{thm:error} to bound the error metrics introduced in Section~\ref{subsec:metrics}

\begin{corollary}
\label{cor:error}
Suppose that \eqref{eq:QQ2result} holds for some $\varepsilon\in (0,1)$. Then, the error metrics \eqref{eq:Phi} and \eqref{eq:loss} satisfy the inequalities
\begin{equation}\label{eq:Phibound}
\Phi(P,\tilde{Q}_{2,N}) < (d+1)(d+3)\varepsilon 
\end{equation}
and 
\begin{equation}\label{eq:objbound}
\tilde\Phi(P,\tilde{Q}_{2,N}) < 3(d+1)(2d+1)\varepsilon.
\end{equation}
\end{corollary}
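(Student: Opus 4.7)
The key idea for both bounds is to exploit the exact complementarity $|P(z)|^2 + |Q(z)|^2 = 1$ on $\T$ (Theorem~\ref{thm:GQSP}) to replace the constant $-1$ appearing in the two error metrics by $-|P|^2 - |Q|^2$, thereby reducing both estimates to a direct comparison between $\tilde{Q}_{2,N}$ and the exact canonical complementary polynomial $Q$. First I would record the two pointwise inputs. From the assumption \eqref{eq:QQ2result}, on $\T$,
\begin{equation*}
|\tilde{Q}_{2,N}(z) - Q(z)| = \Bigg|\sum_{n=0}^d (\tilde{q}_n - q_n) z^n\Bigg| \leq (d+1)\varepsilon,
\end{equation*}
and from the exact complementarity, $|Q(z)| \leq 1$ on $\T$, which in turn gives $|q_n| \leq \|Q\|_{\infty,\T} \leq 1$ via the Cauchy integral formula.

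For the inequality \eqref{eq:Phibound}, I would rewrite
\begin{equation*}
|P(z)|^2 + |\tilde{Q}_{2,N}(z)|^2 - 1 = |\tilde{Q}_{2,N}(z)|^2 - |Q(z)|^2 \quad (z\in \T),
\end{equation*}
and then apply the elementary estimate $\big||a|^2 - |b|^2\big| \leq |a-b|(|a| + |b|)$ together with $|\tilde{Q}_{2,N}(z)| \leq |Q(z)| + (d+1)\varepsilon \leq 1 + (d+1)\varepsilon$ on $\T$. Taking a supremum and using the hypothesis $\varepsilon < 1$ to absorb the quadratic remainder $(d+1)^2\varepsilon^2 < (d+1)\varepsilon$ produces
\begin{equation*}
\Phi(P,\tilde{Q}_{2,N}) \leq (d+1)\varepsilon\bigl(2 + (d+1)\varepsilon\bigr) < (d+1)(d+3)\varepsilon,
\end{equation*}
which is \eqref{eq:Phibound}.

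For \eqref{eq:objbound}, expanding $|P|^2+|Q|^2=1$ in Fourier coefficients yields the identity $\sum_{m=0}^d (p_{n+m}p_m^* + q_{n+m}q_m^*) = \delta_{n,0}$ for each $n\in\{-d,\ldots,d\}$, so each summand in \eqref{eq:loss} reduces to $\sum_{m=0}^d(\tilde{q}_{n+m}\tilde{q}_m^* - q_{n+m}q_m^*)$. Writing $\tilde{q}_n = q_n + (\tilde{q}_n - q_n)$ and applying the triangle inequality term-by-term,
\begin{equation*}
|\tilde{q}_{n+m}\tilde{q}_m^* - q_{n+m}q_m^*| \leq |\tilde{q}_{n+m}|\,|\tilde{q}_m - q_m| + |\tilde{q}_{n+m} - q_{n+m}|\,|q_m| < (1+\varepsilon)\varepsilon + \varepsilon < 3\varepsilon,
\end{equation*}
using $|q_m|\leq 1$ and $|\tilde{q}_{n+m}|\leq 1+\varepsilon$. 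Summing over $m \in [d]_0$ gives at most $3(d+1)\varepsilon$, and squaring and summing over the $2d+1$ values of $n$ yields $\tilde{\Phi}(P,\tilde{Q}_{2,N})^2 \leq 9(d+1)^2(2d+1)\varepsilon^2$; the final bound then follows from $\sqrt{2d+1}\leq 2d+1$.

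The main obstacle is purely bookkeeping: ensuring the naive triangle-inequality estimates produce constants matching the stated $(d+1)(d+3)$ and $3(d+1)(2d+1)$ rather than worse polynomial factors in $d$. Note that one cannot simply chain \eqref{eq:Phibound} with Proposition~\ref{prop:norm equivalent} to obtain \eqref{eq:objbound}, because that route gives $\sqrt{2d+1}(d+1)(d+3)\varepsilon$, which exceeds $3(d+1)(2d+1)\varepsilon$ for $d$ sufficiently large; the direct expansion above is essential. The crucial quantitative inputs throughout are the coefficient bound $|q_n|\leq 1$ coming from $\|Q\|_{\infty,\T}\leq 1$, and the hypothesis $\varepsilon<1$, which collapses quadratic-in-$\varepsilon$ remainders into linear ones.
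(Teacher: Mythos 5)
Your proof is correct and follows essentially the same strategy as the paper: for \eqref{eq:Phibound}, use exact complementarity to replace $-1$ by $-|Q|^2$, then bound $\big||\tilde Q_{2,N}|^2-|Q|^2\big|$ via $|\tilde Q_{2,N}-Q|\cdot(|\tilde Q_{2,N}|+|Q|)$; for \eqref{eq:objbound}, expand $\tilde q_{n+m}\tilde q_m^*$ around $q_{n+m}q_m^*$, use $\Phi(P,Q)=0$ to kill the leading term, and control the corrections with $|q_n|\le 1$ (from $\lVert Q\rVert_{\infty,\T}\le 1$ via Cauchy) and $|\tilde q_n-q_n|<\varepsilon$. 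The only cosmetic difference is in the $\ell^2$ bookkeeping for \eqref{eq:objbound}: you bound each $n$-summand by $3(d+1)\varepsilon$ and compute the $\ell^2$ norm directly, giving the marginally sharper intermediate bound $3(d+1)\sqrt{2d+1}\,\varepsilon$, whereas the paper first invokes the $\ell^2\le\ell^1$ inequality and then splits the $\ell^1$ sum into four pieces; both routes collapse to the same stated constant.
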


Letting $\tilde{\varepsilon}=\tilde\Phi(P,\tilde{Q}_{2,N})$, we see from Theorem~\ref{thm:error} that for fixed $\delta$, $N=O\big(d \log\frac{1}{\tilde{\varepsilon}}\big)$ is required to achieve \eqref{eq:objbound}. Numerical results verifying this assertion are presented in Section~\ref{sec:numericalresults}.

\begin{theorem}[Error bounds for Algorithm~\ref{algo:downscaling}]
\label{thm:error downscaling}
Suppose $P\in \C[z]$ satisfying $\lVert P(z)\rVert_{\T,\infty}\le1$ and $\varepsilon\in(0,1)$ are given. Choose $N\in \Z_{\geq 1}$ such that 
\begin{equation}
N\geq N_0\bigg(\frac{\varepsilon}{4},\frac{\varepsilon}{5(d+1)},d\bigg)    
\end{equation}
with $N_0$ defined in \eqref{eq:Nbound}. Then, the result $\tilde Q_{2,N}\in\C[z]$ of Algorithm~\ref{algo:downscaling} satisfies the bound
\begin{equation}
    \Phi(P,\tilde Q_{2,N}) < \varepsilon
\end{equation}
on the complementarity condition \eqref{eq:Phi}.
In particular, we have the joint asymptotic complexity
\begin{equation}
    \label{eq:algo2 complexity}
   N_0\bigg(\frac{\varepsilon}{4},\frac{\varepsilon}{5(d+1)},d\bigg)=O\bigg(\frac{d}{\varepsilon}\log\frac{d}{\varepsilon}\bigg).
\end{equation}
\end{theorem}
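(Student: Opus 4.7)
The plan is to reduce Theorem~\ref{thm:error downscaling} to Theorem~\ref{thm:error} and Corollary~\ref{cor:error}, treating the initial downscaling of $P$ as a separate contribution to the total error. Write $\tilde P(z)\coloneqq(1-\eta)P(z)$ for a small parameter $\eta>0$; then $\lVert\tilde P\rVert_{\infty,\T}\leq 1-\eta$, so Algorithm~\ref{algo:main} may be applied to $\tilde P$ with $\delta=\eta$ and coefficient-wise accuracy $\varepsilon'$, yielding $\tilde Q_{2,N}$ under the hypothesis $N\geq N_0(\varepsilon',\eta,d)$. I will ultimately take $\eta$ proportional to $\varepsilon$ and $\varepsilon'$ smaller by a polynomial-in-$d$ factor.

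The first step is the triangle-inequality decomposition
\begin{equation*}
\Phi(P,\tilde Q_{2,N}) \;\leq\; \bigl\lVert\lvert P(z)\rvert^{2}-\lvert\tilde P(z)\rvert^{2}\bigr\rVert_{\infty,\T} + \Phi(\tilde P,\tilde Q_{2,N}),
\end{equation*}
isolating a downscaling contribution from an algorithm contribution. The former is controlled pointwise on $\T$ by $\lvert P\rvert^{2}-\lvert\tilde P\rvert^{2}=(2\eta-\eta^{2})\lvert P\rvert^{2}\leq 2\eta$, using $\lvert P\rvert\leq 1$. The latter is controlled by combining Theorem~\ref{thm:error}, which gives $\lvert q_{n}-\tilde q_{n}\rvert<\varepsilon'$ against the canonical complementary polynomial coefficients of $\tilde P$, with Corollary~\ref{cor:error}, which upgrades this to $\Phi(\tilde P,\tilde Q_{2,N})<(d+1)(d+3)\varepsilon'$.

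The second step is to balance the two contributions by choosing $\eta$ and $\varepsilon'$ so that $2\eta+(d+1)(d+3)\varepsilon'<\varepsilon$; for instance, $\eta$ a small multiple of $\varepsilon$ and $\varepsilon'$ a small multiple of $\varepsilon/(d+1)^{2}$ suffice, and the specific constants appearing in the theorem statement are of this qualitative form. To obtain the asymptotic \eqref{eq:algo2 complexity}, I substitute into the bound $N_{0}(\varepsilon,\delta,d)=O((d/\delta)\log(d/(\delta\varepsilon)))$ from Theorem~\ref{thm:error}: with $\delta=\eta$ of order $\varepsilon$ (independent of $d$) and $\varepsilon'$ of order $\varepsilon/d^{2}$, the prefactor $d/\delta$ contributes $d/\varepsilon$ while $\varepsilon'$ enters only inside the logarithm, giving $N_{0}=O((d/\varepsilon)\log(d/\varepsilon))$.

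The principal subtlety I anticipate is purely one of bookkeeping rather than a conceptual obstruction: Corollary~\ref{cor:error}'s prefactor $(d+1)(d+3)$ forces $\varepsilon'\sim\varepsilon/d^{2}$, and one must verify that this extra polynomial shrinkage does not damage the asymptotic. It does not, because $\varepsilon'$ enters $N_{0}$ only through a single logarithm, so all surviving factors of $d$ merge into an overall $\log(d/\varepsilon)$ inside the $O(\cdot)$.
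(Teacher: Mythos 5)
Your proposal follows the same approach as the paper: downscale $P$ by a factor $1-\eta$ with $\eta\sim\varepsilon$, split $\Phi(P,\tilde Q_{2,N})$ via the triangle inequality into a downscaling contribution bounded by $2\eta$ and an algorithm contribution bounded through the coefficient estimate of Theorem~\ref{thm:error}, balance the two, and substitute into $N_0$ to obtain the asymptotic. The only difference is that you invoke Corollary~\ref{cor:error} wholesale, which forces $\varepsilon'\sim\varepsilon/(d+1)^2$, whereas the paper re-derives the $\Phi$-bound directly from $\lvert q_n-\tilde q_n\rvert<\varepsilon/5(d+1)$ to get the tighter choice $\varepsilon'=\varepsilon/5(d+1)$ appearing in the theorem statement; as you correctly note, this changes only constants inside the logarithm and the asymptotic \eqref{eq:algo2 complexity} is unaffected.
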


\subsection{Proof of Theorem~\ref{thm:error}}
\label{subsec:errorproof}

Let 
\begin{equation}R\coloneqq \min_{j\in [d_1]} \lvert w_j\rvert 
\end{equation}
and define the function
\begin{equation}\label{eq:M}
M(r)\coloneqq \max_{\rho=\frac1r,r}\max_{z\in \T}\big\lvert \log \big({1- P(\rho z)P^*(1/\rho z)}\big)\big\rvert \quad (r\in (1,R)).
\end{equation}
Our analysis is based on the following lemma.

\begin{lemma}\label{lem:an}
For $r\in (1,R)$, the Fourier coefficients $(a_n)_{n\in \Z}$ from \eqref{eq:an} satisfy 
\begin{equation}\label{eq:anbound}
\lvert a_n\rvert \leq M(r)r^{-\lvert n\rvert} \quad 	(n\in \Z).
\end{equation}
\end{lemma}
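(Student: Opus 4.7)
The plan is to analytically continue $\log\bigl(1-\lvert P(z)\rvert^2\bigr)$ off $\T$ into an annulus on which it is holomorphic and single-valued, then shift the contour in \eqref{eq:an} to exploit the decay of $z^{-n-1}$ on circles of radius different from one.

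First I would use the factorization $1-P(z)P^*(1/z)=Q(z)Q^*(1/z)$, valid on $\C^{\times}$ as Laurent polynomials from the Fej\'{e}r-Riesz step in Section~\ref{subsec:mainproof}. The zeros of the right-hand side are located at the points $w_j$ (with $\lvert w_j\rvert\geq R$), their mirror images $1/w_j^*$ (with $\lvert 1/w_j^*\rvert\leq 1/R$), and the $t_j\in \T$. Because the error-analysis hypothesis $\lVert P\rVert_{\infty,\T}\leq 1-\delta<1$ forces $1-\lvert P(z)\rvert^2\geq 1-(1-\delta)^2>0$ on $\T$, the $t_j$ are absent here. Consequently $1-P(z)P^*(1/z)$ is zero-free on the open annulus $\mathcal{A}\coloneqq \{z\in \C:1/R<\lvert z\rvert<R\}$.

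Next I would verify that $\log(1-P(z)P^*(1/z))$ admits a single-valued analytic branch on $\mathcal{A}$ matching the real-valued principal log on $\T$. This amounts to a winding-number check: as $z$ traverses a circle $\lvert z\rvert=\rho$ with $\rho\in (1/R,R)$, the Laurent polynomial $Q(z)Q^*(1/z)$ has $d$ zeros inside (the points $1/w_j^*$, counted with multiplicity $\beta_j$ and summing to $d$) balanced by the $d$th-order pole of $Q^*(1/z)$ at the origin, so the argument principle gives winding number zero. With the branch chosen, the integrand $\log(1-P(z)P^*(1/z))z^{-n-1}$ is holomorphic on $\mathcal{A}$ for every $n\in \Z$ (the factor $z^{-n-1}$ is holomorphic on $\C^{\times}\supset \mathcal{A}$).

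Finally, Cauchy's theorem allows deformation of the contour $\T$ in \eqref{eq:an} to any concentric circle $\lvert z\rvert=\rho$ with $\rho\in (1/R,R)$. For $n\geq 0$ I would pick $\rho=r$ and for $n\leq -1$ pick $\rho=1/r$; parametrizing $z=\rho\ee^{\ii\theta}$ and applying the standard ML estimate together with the definition \eqref{eq:M} of $M(r)$ yields
\begin{equation*}
\lvert a_n\rvert \leq \frac{1}{2\pi}\cdot 2\pi\rho\cdot M(r)\cdot \rho^{-n-1}=M(r)\rho^{-n},
\end{equation*}
which equals $M(r)r^{-\lvert n\rvert}$ for both sign choices. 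The only nontrivial step is the single-valued-log verification on the (non-simply-connected) annulus; once that is in place the remainder is a routine contour shift and ML bound.
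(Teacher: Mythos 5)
Your proof is correct and follows essentially the same route as the paper: deform the contour in \eqref{eq:an} to concentric circles of radius $r$ (for $n\geq 0$) or $1/r$ (for $n\leq 0$) within the annulus where $\log\bigl(1-P(z)P^*(1/z)\bigr)$ is analytic, then apply an ML estimate; the paper routes this through an intermediate quantity $L(r)$ and the maximum modulus principle, while you bound directly by $M(r)$ on the boundary circles, a cosmetic difference. Your winding-number verification that a single-valued branch of the logarithm exists on the annulus is a welcome additional detail that the paper passes over in asserting ``may be analytically continued to the closure of the annulus.''
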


\begin{proof}
For any $r\in (1,R)$, the function $\log \big({1-\lvert P(z)\rvert^2}\big)$ may be analytically continued to the closure of the annulus
\begin{equation}\label{eq:annulus}
A(r)\coloneqq \{z\in \C: \tfrac1{r}< \lvert z\rvert <  r\}.
\end{equation}
Suppose $n\in \Z_{\geq 0}$. Then, using Cauchy's theorem to deform the contour in \eqref{eq:an}, we find
\begin{align}\label{eq:anpositive}
\lvert a_n \rvert = &\; \frac{1}{2\pi}\Bigg\lvert \int_{\lvert z\rvert=r}\log \big({1-P(z)P^*(1/z)}\big)\frac{\mathrm{d}z}{z^{n+1}}\Bigg\rvert = r^{-n}\frac{1}{2\pi}\Bigg\lvert \int_{\T}\log\big({1-P(rz)P^*(1/rz)}\big)\frac{\mathrm{d}z}{z^{n+1}}\Bigg\rvert \nonumber \\
\leq &\; r^{-n}\frac{1}{2\pi}\int_{\T} \bigg\lvert \log\big({1-P(rz)P^*(1/rz)}\big)\frac{1}{z^{n+1}}\Bigg\rvert \,\mathrm{d}z = r^{-n}\frac{1}{2\pi}\int_{\T} \big\lvert \log\big({1-P(rz)P^*(1/rz)}\big)\big\rvert \,\mathrm{d}z \leq  L(r) r^{-n},
\end{align}
where
\begin{equation}
L(r)\coloneqq \sup_{\frac{1}{r}<\rho< r} \frac{1}{2\pi}\int_{\T} \big\lvert \log \big({1- P(\rho z)P^*(1/\rho z)}\big)\big\rvert \,\mathrm{d}z \quad (r \in (1,R)),
\end{equation}
for each $n\in \Z_{\geq 0}$. A similar argument for $n\in \Z_{\leq 0}$ shows that
\begin{equation}\label{eq:annegative}
\lvert a_n\rvert \leq L(r)r^n \quad (n\in \Z_{\leq 0}). 
\end{equation}

By the maximum modulus principle, we have
\begin{equation}\label{eq:LM}
L(r)\leq M(r) \quad (r\in (1,R)). 	
\end{equation}
The result \eqref{eq:anbound} follows by combining \eqref{eq:anpositive} and \eqref{eq:annegative} with \eqref{eq:LM}.
\end{proof}

Using Lemma~\ref{lem:an}, we readily obtain a bound on the truncation error, 
\begin{equation}\label{eq:truncation}
\big\lvert \Pi[S(\ee^{\ii\theta})]-\Pi[S_N(\ee^{\ii\theta})]\big\rvert=\Bigg\lvert \sum_{n=\frac{N}{2}+1}^{\infty}a_n \ee^{\ii n\theta}\Bigg\rvert \leq \sum_{n=\frac{N}{2}+1}^{\infty}\lvert a_n\rvert \leq M(r)\sum_{n=\frac{N}{2}+1}^{\infty} r^{-n}=\frac{M(r)}{r^{N/2}(r-1)}	\quad (\theta\in (-\pi,\pi]).
\end{equation}

To obtain a corresponding bound for the difference between $\Pi[S_N(\ee^{\ii\theta})]$ and $\Pi[\tilde{S}_N(\ee^{\ii\theta})]$, we recall the discrete Poisson summation formula \cite[Chapter~6]{briggs1995}
\begin{equation}\label{eq:Poisson}
\tilde{a}_n=a_n+\sum_{m\in \Z\setminus\{0\}} a_{n+Nm}.
\end{equation}
Together, \eqref{eq:anbound} and \eqref{eq:Poisson} give
\begin{equation}
\lvert \tilde{a}_n-a_n\rvert = \Bigg\lvert \sum_{m\in \Z\setminus\{0\}} a_{n+Nm}\Bigg\rvert \leq \sum_{m\in \Z\setminus\{0\}} \lvert a_{n+Nm} \rvert \leq 2 M(r)r^{-n}\sum_{m=1}^{\infty} r^{-Nm}=\frac{2M(r)}{r^n(r^N-1)} \quad (\lvert n\rvert \leq N). 
\end{equation}
It then follows that
\begin{align}\label{eq:PiSNPiSNtilde}
\big\lvert \Pi[S_N(\ee^{\ii\theta})]-\Pi[\tilde{S}_N(\ee^{\ii\theta})]\big\rvert= &\; \Bigg\lvert \frac12(a_0-\tilde{a}_0)+\sum_{n=1}^{N}(a_n-\tilde{a}_n) \ee^{\ii n\theta}\Bigg\rvert \leq \frac12\lvert a_0-\tilde{a}_0\rvert +\Bigg\lvert \sum_{n=1}^{N}(a_n-\tilde{a}_n) z^n\Bigg\rvert \nonumber \\
\leq &\; \frac12\lvert a_0-\tilde{a}_0\rvert +\sum_{n=1}^{N}\lvert a_n-\tilde{a}_n\rvert \leq  \frac{M(r)}{r^N-1}\Bigg(1+2\sum_{n=1}^{N} r^{-n}\Bigg)\nonumber  \\
=&\; \frac{M(r)}{r^N-1}\bigg(1+2\frac{r^N-1}{r^N(r-1)} \bigg)   = M(r)\frac{r^{N+1}+r^N-2}{r^N(r^N-1)(r-1)} < M(r) \frac{r+2}{r^N(r-1)} \quad (\theta\in (-\pi,\pi]),
\end{align}
where we have used that $(r^{N+1}-1)/(r^N-1)<r+1$ in the final step. Hence, \eqref{eq:truncation} and \eqref{eq:PiSNPiSNtilde} and the triangle inequality imply
\begin{equation}\label{eq:PiSPiStilde}
\big\lvert \Pi[S(\ee^{\ii\theta})]-\Pi[\tilde{S}_N(\ee^{\ii\theta})]\big\rvert \le M(r)\frac{r^{N/2} + r +2}{r^N(r-1)} < \frac{4M(r)}{r^{N/2}(r-1)} \quad (\theta\in (-\pi,\pi]). 
\end{equation}

We can now compute
\begin{align}\label{eq:QQtilde}
\big\lvert Q(\ee^{\ii\theta})-\tilde{Q}_{1,N}(\ee^{\ii\theta}) \big\rvert  = &\; \big\lvert \exp\big(\Pi[S(\ee^{\ii\theta})]\big)-\exp\big(\Pi[\tilde{S}_N(\ee^{\ii\theta})]\big) \big\rvert  	\nonumber \\
=&\; \big\lvert \exp\big(\Pi[S(\ee^{\ii\theta})]\big)\big(1-\exp\big(\Pi[\tilde{S}_N(\ee^{\ii\theta})]-\Pi[{S}(\ee^{\ii\theta})]\big)\big\rvert \nonumber \\
< &\;  \exp \bigg(  \frac{4M(r)}{r^{N/2}(r-1)}	\bigg)-1 \quad (\theta\in (-\pi,\pi]),
\end{align}
where we have used $\lvert Q(\ee^{\ii\theta})\rvert = \big\lvert \exp\big(\Pi[S(\ee^{\ii\theta})]\big)\big\rvert < 1$ and \eqref{eq:PiSPiStilde} in the final step.\\

Next, we use \eqref{eq:qn}, \eqref{eq:qntilde}, and \eqref{eq:QQtilde} to write
\begin{equation}\label{eq:qnqtilden}
\lvert q_n-\tilde{q}_n\rvert= \frac{1}{N}\Bigg\lvert \sum_{m=0}^{N-1} \big(Q(\omega_{N}^{m})-\tilde{Q}_{1,N}(\omega_{N}^m)\big) \omega_{N}^{-nm}\Bigg\rvert < \exp \bigg( \frac{4M(r)}{r^{N/2}(r-1)}	\bigg)-1 \quad (n\in [d]_0).	
\end{equation}

We are guaranteed that the argument of the exponential in \eqref{eq:qnqtilden} is upper-bounded by unity provided that 
\begin{equation}\label{eq:Nbound1}
N \geq \frac{2}{\log r}\log\bigg(\frac{4M(r)}{r-1}\bigg).	
\end{equation}
Suppose that \eqref{eq:Nbound1} holds. Then, using $\ee^x-1<2x$ for $x\in (0,1)$ we have 
\begin{equation}\label{eq:expbound}
\exp \bigg( \frac{4M(r)}{r^{N/2}(r-1)}	\bigg)-1 < \frac{8M(r)}{r^{N/2}(r-1)}.
\end{equation}
Putting \eqref{eq:expbound} in \eqref{eq:qnqtilden} yields
\begin{equation}\label{eq:QQ2bound}
\lvert q_n-\tilde{q}_n\rvert 	< 8M(r)\frac{1}{r^{N/2}(r-1)} \quad (n\in [d]_0)
\end{equation}
and we find that \eqref{eq:QQ2bound} is upper-bounded by $\varepsilon$ provided that
\begin{equation}\label{eq:QQ2N}
N\geq  \frac{2}{\log r}\log\bigg(\frac{8M(r)}{r-1}\frac{1}{\varepsilon}\bigg),
\end{equation}
which implies \eqref{eq:Nbound1}, holds. We can make \eqref{eq:QQ2N} more precise by specifying an $r\in (1,R)$ and bounding $M(r)$. We choose $r=r_{\delta}$, defined in \eqref{eq:repsdel}. We have $r_{\delta}>1$ by the assumption that $\delta\in (0,1)$. The following lemma shows that $r_{\delta}<R$ by bounding $\lvert P(z)P^*(1/z)\rvert $ in the annulus $A(r_{\delta})$ \eqref{eq:annulus}.

\begin{lemma}
The following inequality holds,
\begin{equation}
\label{eq:PPbound}
\lvert P(z)P^*(1/z)\rvert \leq  1-\delta\quad(z\in A(r_\delta)).
\end{equation}
\end{lemma}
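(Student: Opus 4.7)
The plan is to bound each factor of $P(z)P^*(1/z)$ on the annulus $A(r_\delta)$ separately, using the maximum modulus principle together with the polynomial structure of $P$. The key auxiliary step establishes a pair of bounds on $\lvert P(w)\rvert$ covering the whole complex plane. On the closed unit disk $\lvert w\rvert\le 1$, the maximum modulus principle applied to the polynomial $P$ gives $\lvert P(w)\rvert\le\lVert P\rVert_{\infty,\T}\le 1-\delta$. For the exterior $\lvert w\rvert\ge 1$, I would observe that $P(w)/w^d$ is analytic on $\C\setminus\{0\}$ and tends to the finite limit $p_d$ as $w\to\infty$; applying the maximum modulus principle to the compactified exterior (equivalently, to the analytic extension of $u\mapsto u^d P(1/u)$ at $u=0$) then yields $\lvert P(w)/w^d\rvert\le 1-\delta$, i.e., $\lvert P(w)\rvert\le(1-\delta)\lvert w\rvert^d$.

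With these two bounds in hand, I split $A(r_\delta)$ into the subregions $\{1/r_\delta<\lvert z\rvert\le 1\}$ and $\{1\le\lvert z\rvert<r_\delta\}$, using that $\lvert P^*(1/z)\rvert=\lvert P(1/\bar z)\rvert$ and $\lvert 1/\bar z\rvert=1/\lvert z\rvert$. In the first subregion, the interior bound applies to $\lvert P(z)\rvert$ while the exterior bound applies to $\lvert P^*(1/z)\rvert$ (since $\lvert 1/\bar z\rvert\ge 1$), giving
\[
\lvert P(z)P^*(1/z)\rvert \;\le\; (1-\delta)\cdot\frac{1-\delta}{\lvert z\rvert^d} \;\le\; (1-\delta)^2\, r_\delta^d \;=\; 1-\delta,
\]
where the final equality uses the defining relation $r_\delta^d=1/(1-\delta)$ from \eqref{eq:repsdel}. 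The second subregion is handled identically after interchanging the roles of the two factors.

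The argument is essentially a calibration: the radius $r_\delta$ is chosen precisely so that the polynomial growth factor $\lvert w\rvert^d$ in the exterior bound is compensated by exactly one factor of $1-\delta$. I do not foresee any genuine obstacle beyond correctly invoking the maximum modulus principle on the compactified exterior for the bound $\lvert P(w)\rvert\le(1-\delta)\lvert w\rvert^d$, which is standard.
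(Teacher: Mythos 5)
Your proposal is correct and takes essentially the same route as the paper: both establish the interior bound $\lvert P(w)\rvert\le 1-\delta$ on $\overline{\D}$ via the maximum modulus principle, the exterior bound $\lvert P(w)\rvert\le(1-\delta)\lvert w\rvert^d$ by applying the maximum modulus principle to the polynomial $u\mapsto u^dP(1/u)$ (the paper packages this as the reciprocal polynomial $P^{\mathrm R}(z)=z^dP^*(1/z)$), and then combine the two factors on the annulus using $r_\delta^d=1/(1-\delta)$. The only cosmetic difference is that you split $A(r_\delta)$ explicitly into inner and outer sub-annuli and argue symmetrically, while the paper keeps the two bounds as a single piecewise estimate before specializing to the annulus.
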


\begin{proof}
We define the reciprocal polynomial to $P$ by
\begin{equation}
{P}^{\mathrm{R}}(z)\coloneqq z^d P^*(1/z);    
\end{equation}
${P}^{\mathrm{R}}$ is a polynomial and hence entire. Note that $\lVert {P}^{\mathrm{R}}(z)\rVert_{\infty, \T}=\lVert P(z)\rVert_{\infty,\T}\leq 1-\delta$. It follows, by the maximum modulus principle, that
\begin{equation}\label{eq:PPhatbounds}
 \lvert P(z) \rvert,  \lvert {P}^{\mathrm{R}}(z) \rvert \leq  1-\delta \quad (z\in \D). 
\end{equation}
Using the conformal map $z\mapsto 1/z$, we deduce from \eqref{eq:PPhatbounds} the corresponding bounds
\begin{equation}\label{eq:PPhatbounds2}
\lvert P(1/z)\rvert , \lvert {P}^{\mathrm{R}}(1/z)\rvert \leq  1-\delta \quad (z\in \C\setminus\overline{\D}).
\end{equation}

Because $P(z)=z^d ({P}^{\mathrm{R}})^*(1/z)$, we have 
\begin{equation}\label{eq:PCDbound}
\lvert P(z)\rvert = \lvert z \rvert^d \lvert{P}^{\mathrm{R}}(1/z)\rvert   \le   (1-\delta)\lvert z\rvert^d \quad (z\in \C\setminus\overline{\D}),
\end{equation}
where we have used \eqref{eq:PPhatbounds2}. Again using $z\mapsto 1/z$, we have
\begin{equation}\label{eq:PCbound}
\lvert P(1/z)\rvert \leq    (1-\delta)\lvert z\rvert^{-d} \quad (z\in \D).
\end{equation}
Combining \eqref{eq:PPhatbounds}--\eqref{eq:PPhatbounds2} and \eqref{eq:PCDbound}--\eqref{eq:PCbound} yields
\begin{equation}
\lvert P(z)P^*(1/z)\rvert \le  (1-\delta)^2\begin{cases} \lvert z\rvert^{-d}   & z\in \D \\
\lvert z\rvert^{d} &  z\in \C\setminus\overline{\D}.
\end{cases}
\end{equation}
Within the annulus $A(r_\delta)$, this implies
\begin{equation}
    |P(z)P^*(1/z)|\le(1-\delta)^2\frac{1}{1-\delta} = 1-\delta,
\end{equation}
as desired. 
\end{proof}

The next lemma provides an estimate for $M(r_{\delta})$ \eqref{eq:M}.

\begin{lemma}
The following bound holds,
\begin{equation}\label{eq:Mrbound}
M(r_{\delta}) \leq   \log\big(\tfrac{1}{\delta}\big).
\end{equation}
\end{lemma}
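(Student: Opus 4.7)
The plan is to combine the previous lemma's bound $|P(z)P^*(1/z)|\le 1-\delta$ on $A(r_\delta)$ with the standard power-series bound on the principal logarithm. The single-letter quantity $M(r_\delta)$ in \eqref{eq:M} is a double maximum over $\rho\in\{r_\delta,1/r_\delta\}$ and $z\in\T$; in each of these four combinations, the argument $\rho z$ lies on the boundary of the annulus $A(r_\delta)$. By continuity, the bound $|P(z)P^*(1/z)|\le 1-\delta$ established in \eqref{eq:PPbound} extends from the open annulus to its closure, so $|P(\rho z)P^*(1/\rho z)|\le 1-\delta$ for every $(\rho,z)$ relevant to $M(r_\delta)$.

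Set $w=P(\rho z)P^*(1/\rho z)$, so $|w|\le 1-\delta<1$. Since we are on $\T$ on the original contour of \eqref{eq:an}, and $\log(1-|P(z)|^2)$ there is the ordinary real logarithm of a positive number, the analytic continuation to $A(r_\delta)$ inherits the principal branch of $\log$; this is legitimate because $1-w$ stays in the open disk of radius $1-\delta$ centered at $1$, which is disjoint from the standard branch cut $(-\infty,0]$. Next I would invoke the Taylor expansion
\begin{equation*}
\log(1-w)=-\sum_{n=1}^{\infty}\frac{w^n}{n},
\end{equation*}
whose convergence is guaranteed by $|w|<1$, to estimate
\begin{equation*}
\bigl|\log(1-w)\bigr|\;\le\;\sum_{n=1}^{\infty}\frac{|w|^n}{n}\;=\;-\log(1-|w|)\;\le\;-\log\delta\;=\;\log\!\tfrac{1}{\delta}.
\end{equation*}
Taking the maximum over the four admissible pairs $(\rho,z)$ then gives $M(r_\delta)\le\log(1/\delta)$.

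There is no genuine obstacle here; the only point to be careful about is that the analytic continuation of $\log(1-P(z)P^*(1/z))$ from $\T$ into $A(r_\delta)$ is single-valued and agrees with the principal branch. Both facts are immediate from $|P(z)P^*(1/z)|\le 1-\delta<1$ on the closed annulus, which keeps $1-P(z)P^*(1/z)$ inside the disk of radius $1-\delta$ about $1$ and hence in the right half-plane, avoiding any branch-cut crossing.
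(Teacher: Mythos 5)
Your argument is correct and follows essentially the same route as the paper: both bound $|\log(1-P(z)P^*(1/z))|$ on $\overline{A(r_\delta)}$ term-by-term via the Maclaurin series $\log(1-w)=-\sum_{n\ge1}w^n/n$ together with the inequality $|P(z)P^*(1/z)|\le 1-\delta$ from the preceding lemma, summing the geometric tail to $\log(1/\delta)$. The extra remarks you include about the principal branch and single-valuedness of the continuation are a sound sanity check but are not needed beyond what the series bound already delivers.
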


\begin{proof}
We write
\begin{equation}
M(r_{\delta})= \max_{z\in \partial A(r_{\delta})}  \big\lvert \log \big(1-P(z)P^*(1/z)\big)\big\rvert. 
\end{equation}
Due the estimate \eqref{eq:PPbound}, which guarantees $\lvert P(z)P^*(1/z)\rvert\leq 1-\delta <1$ on $\overline{A(r_{\delta})}$, we may use the Maclaurin series for $\log(1-z)$ and \eqref{eq:PPbound} to write  
\begin{align}
\big\lvert \log \big(1-P(z)P^*(1/z)\big)\big\rvert=&\; \Bigg\lvert \sum_{n=1}^{\infty} \frac{(P(z)P^*(1/z))^n}{n}  \Bigg\rvert \leq \sum_{n=1}^{\infty} \frac{\lvert P(z)P^*(1/z)\rvert^n}{n}\nonumber \\
\le&\; \sum_{n=1}^{\infty} \frac{(1-\delta)^n }{n} =\log \big(\tfrac{1}{\delta}\big) \quad (z\in \overline{A(r_{\delta})});
\end{align}
the result \eqref{eq:Mrbound} follows. 
\end{proof}

Putting \eqref{eq:repsdel} and \eqref{eq:Mrbound} into \eqref{eq:QQ2N} gives the result \eqref{eq:Nbound}. \\
 
Next, we analyze the asymptotic behavior of our bound \eqref{eq:Nbound} on a sufficient $N$. To this end, we introduce the parameter
\begin{equation}
c_\delta \coloneqq \frac{1}{\log\frac{1}{1-\delta}}
\end{equation}
and compute
\begin{align}
    \label{eq:Nexpanded}
    N_0(\varepsilon,\delta,d) = \bigg\lceil 2dc_\delta \log\frac{1}{\varepsilon}+ 2dc_\delta \log\left(8\log\frac{1}{\delta}\right) + 2dc_\delta\log\left(\frac{1}{r_\delta-1}\right)\bigg\rceil 
\end{align}
Immediately, the asymptotic
\begin{equation}
\label{eq:scalingepsproof}
N_0(\varepsilon,\delta,d)  \sim 2dc_\delta \log\frac{1}{\varepsilon} = O\!\left(\log\frac{1}{\varepsilon}\right),\quad (\varepsilon\downarrow 0, \delta,d\ \text{fixed})
\end{equation}
follows; this regime describes increasing the accuracy of the algorithm for a fixed polynomial. \\

Next, note that $c_\delta d\to+\infty$ as $\delta\downarrow 0$ or $d\to\infty$, such that we have the asymptotic equation
\begin{equation}
    \frac{1}{r_\delta - 1} = \frac{1}{\exp(\frac{1}{c_\delta d})-1} \sim c_\delta d \quad (\delta\downarrow 0\  \text{or}\ d \to\infty).
\end{equation}
We insert this into \eqref{eq:Nexpanded}, but keep $\varepsilon$ because we will take a limit $\varepsilon\downarrow 0$ later:
\begin{align}
    \label{eq:Nexpanded2}
    N_0(\varepsilon,\delta,d)  &\sim 2dc_\delta \log\frac{1}{\varepsilon}+ 2dc_\delta \log\left(8\log\frac{1}{\delta}\right) + 2dc_\delta\log\left(c_\delta d\right) \quad (\delta\downarrow 0\  \text{or}\ d \to\infty).
\end{align}
The middle term is subdominant. In the case $d\to\infty$, the middle logarithm is a constant, and in the case $\delta\downarrow 0$ note that $c_\delta\sim1/\delta$.
Dropping the middle term results in
\begin{equation}
N_0(\varepsilon,\delta,d)  \sim 2dc_\delta \log\left(\frac{c_\delta d}{\varepsilon}\right) \quad (\delta\downarrow 0\  \text{or}\ d \to\infty). 
\end{equation}
For fixed $\delta$, we retrieve the joint asymptotic 
\begin{align}
N_0(\varepsilon,\delta,d)  \sim 2dc_\delta \log\frac{d}{\varepsilon} = O\!\left(d\log\frac{d}{\varepsilon}\right)\quad (\varepsilon\downarrow 0, d\to\infty,\delta\ \text{fixed}). 
\end{align}
As $\delta\downarrow 0$, we get
\begin{equation}
    N_0(\varepsilon,\delta,d) \sim 2\frac{d}{\delta}\log\left(\frac{d}{\varepsilon\delta}\right) \quad (\delta\downarrow 0).
\end{equation}
which is valid regardless of whether $\varepsilon\downarrow 0$ or $d\to\infty$.
In particular, \eqref{eq:algo1 complexity} holds provided $\varepsilon\downarrow 0$, $\delta\downarrow 0$, or $d\to\infty$.

\subsection{Proof of Corollary~\ref{cor:error}}
\label{subsec:errorcorproof}

Let $Q$ be the exact complementary polynomial to $P$ obtained from Corollary~\ref{cor:main} in the form \eqref{eq:Qpoly}. \\

We first prove \eqref{eq:Phibound}. From \eqref{eq:QQ2result}, we have
\begin{equation}
\big\lVert Q(z)-\tilde{Q}_{2,N}(z)\big\rVert_{\T,\infty}< (d+1)\varepsilon	
\end{equation}
and hence,
\begin{align}
\label{eq:QQ2calc}
    \big\lVert |Q(z)|^2 - |\tilde{Q}_{2,N}(z)|^2\big \rVert_{\infty,\T}
    \leq &\;  \big\lVert 2Q(z) - Q(z) + \tilde Q_{2,N}(z)\big\rVert_{
   \infty,\T} \big\lVert Q(z) -\tilde{Q}_{2,N}(z)\big\rVert_{\infty,\T} \nonumber \\
       <&\;  (2+(d+1)\epsilon)(d+1)\varepsilon< (d+1)(d+3)\varepsilon.
\end{align}
It follows from \eqref{eq:Phi} and \eqref{eq:QQ2calc} that
\begin{align}
\Phi(P,\tilde{Q}_{2,N})=&\; \big\lVert \lvert P(z)\rvert^2+\lvert \tilde{Q}_{2,N}(z)\rvert^2-1\big\rVert_{\infty,\T}=\big\lVert \big(\lvert P(z)\rvert^2+\lvert Q(z)\rvert^2-1\big)+\big(\lvert \tilde{Q}_{2,N}(z)\rvert^2-\lvert Q(z)\rvert^2\big) \big\rVert_{\infty,\T} \nonumber \\
\leq &\; \Phi(P,Q)+\big\lVert \lvert \tilde{Q}_{2,N}(z)\rvert^2-\lvert Q(z)\rvert^2\big\rVert_{\infty,\T}=\big\lVert \lvert \tilde{Q}_{2,N}(z)\rvert^2-\lvert Q(z)\rvert^2\big\rVert_{\infty,\T}<(d+1)(d+3)\varepsilon, 
\end{align}
which is \eqref{eq:Phibound}.\\

We next prove \eqref{eq:objbound}. Inserting
\begin{align}\label{eq:qq}
\tilde{q}_{n+m}\tilde{q}_m^*=&\; \big({q}_{n+m}+(\tilde{q}_{n+m}-q_{n+m})\big)\big(q_m^*+(\tilde{q}_m^*-q_m^*)\big) \nonumber \\
=&\; q_{n+m}q_m^*+q_{n+m}(\tilde{q}_m^*-q_m^*)+q_m^*(\tilde{q}_{n+m}-q_{n+m})+(\tilde{q}_{n+m}-q_{n+m})(\tilde{q}_m^*-q_m^*);
\end{align}
into the summand of the loss function \eqref{eq:loss} gives
\begin{multline}\label{eq:ppqq}
p_{n+m}p_m^*+\tilde{q}_{n+m}\tilde{q}_m^* -{\delta_{n,0}}=\\\big(p_{n+m}p_m^*+{q}_{n+m}{q}_m^* -{\delta_{n,0}}\big)+q_{n+m}(\tilde{q}_m^*-q_m^*)+ q_m^*(\tilde{q}_{n+m}-q_{n+m})+(\tilde{q}_{n+m}-q_{n+m})(\tilde{q}_m^*-q_m^*).
\end{multline}
By the $\ell^1$--$\ell^2$ norm inequality, we have
\begin{equation}
 \tilde\Phi(P,\tilde{Q}_{2,N})\leq \sum_{n=-d}^d\Bigg\lvert \sum_{m=0}^d \big( p_{n+m}p_m^*+\tilde{q}_{n+m}\tilde{q}_m^*-{\delta_{n,0}} \big)\Bigg\rvert.
\end{equation}
Using \eqref{eq:ppqq}, it follows that
\begin{align}\label{eq:PhiPQ}
\tilde\Phi(P,\tilde{Q}_{2,N})\leq  &\;   \sum_{n=-d}^d\Bigg\lvert \sum_{m=0}^d \big( p_{n+m}p_m^*+{q}_{n+m}{q}_m^*-{\delta_{n,0}} \big)\Bigg\rvert +\sum_{n=-d}^d\Bigg\lvert \sum_{m=0}^d  q_{n+m}(\tilde{q}_m^*-q_m^*)\Bigg\rvert   \nonumber  \\
&\; +\sum_{n=-d}^d\Bigg\lvert \sum_{m=0}^d  q_{m}^*(\tilde{q}_{n+m}-q_{n+m})\Bigg\rvert +\sum_{n=-d}^d\Bigg\lvert \sum_{m=0}^d  (\tilde{q}_{n+m}-q_{n+m})(\tilde{q}_m^*-q_m^*)\Bigg\rvert 
\end{align}

The first term in \eqref{eq:PhiPQ} is seen to be zero by again appealing to the $\ell^1$--$\ell^2$ norm inequality and using the fact that $\Phi(P,Q)=0$. The remaining terms are bounded as follows. By writing
\begin{equation}
q_n=\frac{1}{2\pi\ii}\int_{\T}\Bigg( \sum_{m=0}^d q_m  z^m \Bigg)  \frac{\mathrm{d}z}{z^{n+1}}  \quad (n\in [d]_0),
\end{equation}
we see that
\begin{equation}\label{eq:qn1}
\lvert q_n\rvert \leq \lVert Q(z)\rVert_{\infty,\T}\leq 1 \quad (n\in [d]_0). 
\end{equation}
Hence, \eqref{eq:PhiPQ} with \eqref{eq:qn1} and the bound on $|q_n-\tilde q_n|$ from Theorem~\ref{thm:error} gives
\begin{equation}
    \tilde\Phi(P,\tilde{Q}_{2,N})\leq (2d+1)(d+1)\varepsilon+(2d+1)(d+1)\varepsilon+(2d+1)(d+1)\varepsilon^2< 3(2d+1)(d+1)\varepsilon,
\end{equation}
which is \eqref{eq:objbound}.

\subsection{Proof of Theorem~\ref{thm:error downscaling}}
\label{subsec:downscaling proof}

First, note that we have
\begin{equation}
\label{eq:downscaled P inequality}
    \big\lVert |P(z)|^2 - |(1-\tfrac{\varepsilon}4)P(z)|^2\big\rVert_{\infty,\T} \leq  \big\lVert P(z)+(1-\tfrac{\varepsilon}4)P(z)\big\rVert_{\infty,\T} \big\lVert P(z)-(1-\tfrac{\varepsilon}4)P(z)\big\rVert_{\infty,\T} < \frac{\varepsilon}2.
\end{equation}
Let $Q$ be the exact canonical complementary polynomial to $(1-\tfrac{\varepsilon}4)P$ in the form \eqref{eq:Qpoly}. From Theorem~\ref{thm:error}, it follows that
\begin{equation}
    |q_n-\tilde q_n|< \frac{\varepsilon}{5(d+1)} \quad (n\in [d]_0),
\end{equation}
which implies that
\begin{equation}
    \big\lVert Q(z)-\tilde Q_{2,N}(z)\big\rVert_{\infty,\T}<\frac{\varepsilon}5
\end{equation}
and
\begin{equation}\label{eq:QQprime}
    \big\lVert |Q(z)|^2 - |\tilde Q_{2,N}(z)|^2\big \rVert_{\infty,\T}
    \leq  \big\lVert 2Q(z) - Q(z) + \tilde Q_{2,N}(z)\big\rVert_{\infty,\T}\big\lVert Q(z) - \tilde Q_{2,N}(z)\big\rVert_{\infty,\T}  \le \bigg(2 + \frac{\varepsilon}5\bigg)\frac{\varepsilon}5 < \frac{\varepsilon}2.
\end{equation}

Using \eqref{eq:downscaled P inequality}, \eqref{eq:QQprime}, and the fact that $\Phi((1-\tfrac{\varepsilon}4)P,Q)=0$, we obtain
\begin{equation}
    \Phi(P,\tilde Q_{2,N}) = \big\lVert |P(z)|^2 + |\tilde Q_{2,N}(z)|^2 -1\big \rVert_{\infty,\T} < \frac{\varepsilon}2 + \frac{\varepsilon}2 + \big\lVert |(1-\tfrac{\varepsilon}4)P(z)|^2 + |Q(z)|^2 - 1\big\rVert_{\infty,\T} = \varepsilon.
\end{equation}
 The scaling \eqref{eq:algo2 complexity} can be obtained by making the replacements $\varepsilon\to\frac{\varepsilon}{5(d+1)}$ and $\delta\to\frac{\varepsilon}4$ in \eqref{eq:algo1 complexity}.

\section{Numerical results}
\label{sec:numericalresults}

In practice, our new algorithm for computation of complementary polynomials is extremely fast and accurate. A reference implementation of Algorithm~\ref{algo:main} in Python using the FFT from the PyTorch library \cite{pytorch} is provided in Figure~\ref{fig:code}.
First, we run our reference implementation for random polynomials (Section~\ref{sec:random polynomials}) and compare with prior work \cite{motlagh2024} using random polynomials. Then, we turn to practically useful examples of polynomials occurring in quantum algorithms. These include Hamiltonian simulation (Section~\ref{sec:hamiltonian simulation}), eigenvalue filtering (Section~\ref{sec:eigenvalue filtering}), and the sign function (Section~\ref{sec:sign function}). Also in these cases, our algorithm works very well; the achievable degrees are only limited by our ability to compute the approximant $P$ to the desired target function. With random polynomials we can exhibit our algorithm for far higher degrees.\\

All benchmarks are performed on an M2 Macbook Pro with 16 GB RAM; we have not used any GPU acceleration that PyTorch optionally provides for the FFT.
We compute the complementary polynomial using our algorithm for various dimensions $N$ of the FFT. To evaluate the accuracy of the output $\tilde Q_{2,N}$, we compute the loss function $\tilde\Phi(P,\tilde Q_{2,N})$ \eqref{eq:loss}, which is a measure of how well the algorithm output satisfies the complementarity condition \eqref{eq:PQ}; see Proposition~\ref{prop:norm equivalent} and Corollary~\ref{cor:error}. \\

For the random polynomials, we work in single-precision (\texttt{complex64} data type) arithmetic to facilitate comparison with the method and results presented in \cite{motlagh2024}. In the practical examples we study, we work in double-precision (\texttt{complex128} data type) arithmetic, demonstrating that Algorithm~\ref{algo:main} can achieve errors as low as $10^{-30}$.

\subsection{Random Polynomials}
\label{sec:random polynomials}

Here, we generate random polynomials $P$. The real and imaginary parts of each coefficient are independently sampled from a normal distribution of unit variance. Subsequently, we scale each polynomial to achieve $\lVert P(z)\rVert_{\infty,\T} = 1-\delta$ for choices $\delta=0.2$ and $\delta=0$. 
While our proofs in the previous section focus on $N$ even, we run the algorithm for both odd and even choices of $N$; our results carry over. \\

Figure \ref{fig:delta0.2} exhibits Algorithm~\ref{algo:main} on random polynomials with $\delta=0.2$, i.e., $\lVert P(z)\rVert_{\infty,\T} \leq 0.8$. We consider random polynomials up to degree $d=10^7$ and can numerically confirm the scaling $N = O\left(d\log\frac1{\varepsilon}\right)$ from Theorem~\ref{thm:error}, up to logarithmic terms in $d$. It appears that the bound on $N$ in Theorem~\ref{thm:error} is not tight and lower $N$ are sufficient in practice. For practical applications we therefore suggest increasing $N$ until the desired accuracy is reached, rather than using~\ref{eq:Nbound}. \\

The optimization-based method \cite{motlagh2024} uses the same loss function $\tilde\Phi$ \eqref{eq:Phi}. The best loss it achieves is indicated by a horizontal dashed line in Figure~\ref{fig:loss functions}.
Already at $N=4d$, our algorithm achieves a far better loss function. The runtime of Algorithm~\ref{algo:main} is shown in Figure~\ref{fig:runtime}. We observe a runtime $O(N\log N)$, as expected from an FFT-based algorithm. In comparison with the algorithm of \cite{motlagh2024} run on a CPU, we achieve far better runtimes for the same values of the loss function. \\

Beyond its numerical efficacy, Algorithm~\ref{algo:main} provably and reproducibly targets the same canonical solution, where $Q$ has no roots in $\mathbb{\D}$. This is in contrast to the optimization based algorithm of \cite{motlagh2024}, where the polynomial coefficients are optimized with respect to the loss function; this optimization can (and does) converge to different solutions $Q$ for the same target polynomial $P$ on successive runs. There is also no guarantee that the optimization does not converge to a local minimum. \\

For polynomials with $\delta=0$, $Q_0(z)\neq1$ because $1-|P(z)|^2$ has roots on $z\in\T$. Theorem~\ref{thm:error downscaling} proves that in those cases, Algorithm~\ref{algo:downscaling} provides a solution, scaling as $N =O\!\left(\frac{d}{\varepsilon}\log\frac{d}{\varepsilon}\right)$ for desired error $\varepsilon$. Instead of running Algorithm~\ref{algo:downscaling}, including initial downscaling of the polynomial, here we simply run Algorithm~\ref{algo:main} despite lack of rigorous justification and show the results in Figure~\ref{fig:delta0}.
We are running the same code of the reference implementation (Figure~\ref{fig:code}) used for $\delta=0.2$. Even though Algorithm~\ref{algo:main} assumes $Q_0(z)=1$, which is not the case any more, it still seems to give good results, even at degree $d=10^7$. In fact, we can still achieve losses better than those of \cite{motlagh2024} at significantly lower runtimes.
The computation of $\log(1-|P(\omega_N^n)|^2)$ in step 2 of Algorithm~\ref{algo:main} does not cause any issues, despite $1-|P(z)|^2$ having roots on $z\in\T$, because, generically, these will not be located exactly at roots of unity $\omega_N^n$ (otherwise, the polynomial can be rotated $P(z)\to P(\ee^{\ii\alpha}z)$ by a suitable phase $\ee^{\ii\alpha}$).
Empirically, by performing a fit on the data in the figure, we find a scaling of the algorithm as $N= O\big(\frac1{\sqrt[4]{\varepsilon}}\big)$ for a desired accuracy $\tilde{\varepsilon}$ in the loss function $\tilde\Phi$; proving this scaling could be the subject of future work.

\begin{figure}
\footnotesize%
\begin{verbatim}
import torch

def complementary(poly, N):
    """Algorithm 1 to compute the complementary polynomial
    Parameters:
    poly : length (d+1) vector of monomial coefficients of P(z)
    N int : size of the FFT, N >= (d+1)
    Returns:
    length (d+1) vector of monomial coefficients of Q(z)"""

    # Pad P to FFT dimension N
    paddedPoly = torch.zeros(N, dtype=torch.complex128)
    paddedPoly[:poly.shape[0]] = poly

    # Evaluate P(omega) at roots of unity omega
    pEval = torch.fft.ifft(paddedPoly, norm="forward")

    # Compute log(1-|P(omega)|^2) at roots of unity omega
    theLog = torch.log(1-torch.square(torch.abs(pEval)))

    # Apply Fourier multiplier in Fourier space
    modes = torch.fft.fft(theLog, norm="forward")
    modes[0] *= 1/2 # Note modes are ordered differently in the text
    modes[N//2+1:] = 0
    theLog = torch.fft.ifft(modes, norm="forward")    

    # Compute coefficients of Q
    coefs = torch.fft.fft(torch.exp(theLog), norm="forward")

    # Truncate to length of Q polynomial
    q = coefs[:poly.shape[0]]

    return q
\end{verbatim}
\vspace{-0.5cm}
\caption{Reference implementation in Python of Algorithm~\ref{algo:main} for finding the complementary polynomial. Input and output are (complex) coefficient vectors of $P(z)$ and $Q(z)$ in the monomial bases, along with integer $N$ controlling the accuracy of the output.}
\label{fig:code}
\end{figure}

\begin{figure}
    \begin{subfigure}[b]{0.5\textwidth}
        \centering
        \includegraphics[width=\textwidth]{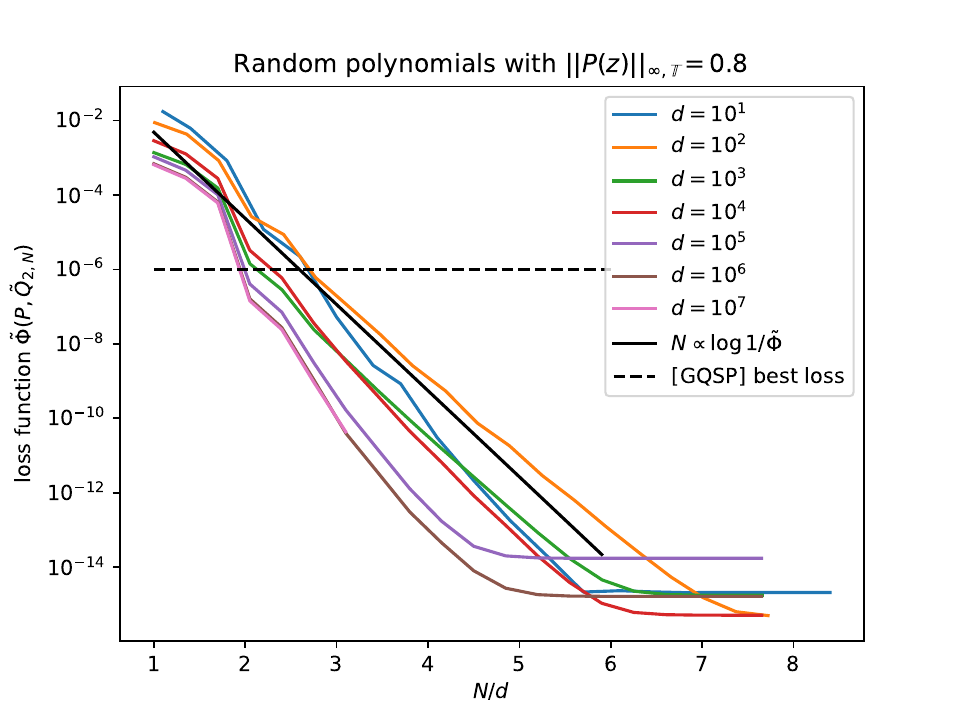}
         \caption{\begin{minipage}{0.8\textwidth}Polynomials with $\delta=0.2$. We observe the scaling $N= O\big(d \log \frac{1}{\tilde\Phi}\big)$, linear scaling with the degree $d$ and logarithmic scaling with the accuracy $\tilde\Phi$. This is commensurate with our proof of error scaling (Theorem~\ref{thm:error}) for Algorithm~\ref{algo:main}, up to logarithmic factors in $d$.\end{minipage}}
         \label{fig:delta0.2}
     \end{subfigure}
     \hfill
    \begin{subfigure}[b]{0.5\textwidth}
        \centering
        \includegraphics[width=\textwidth]{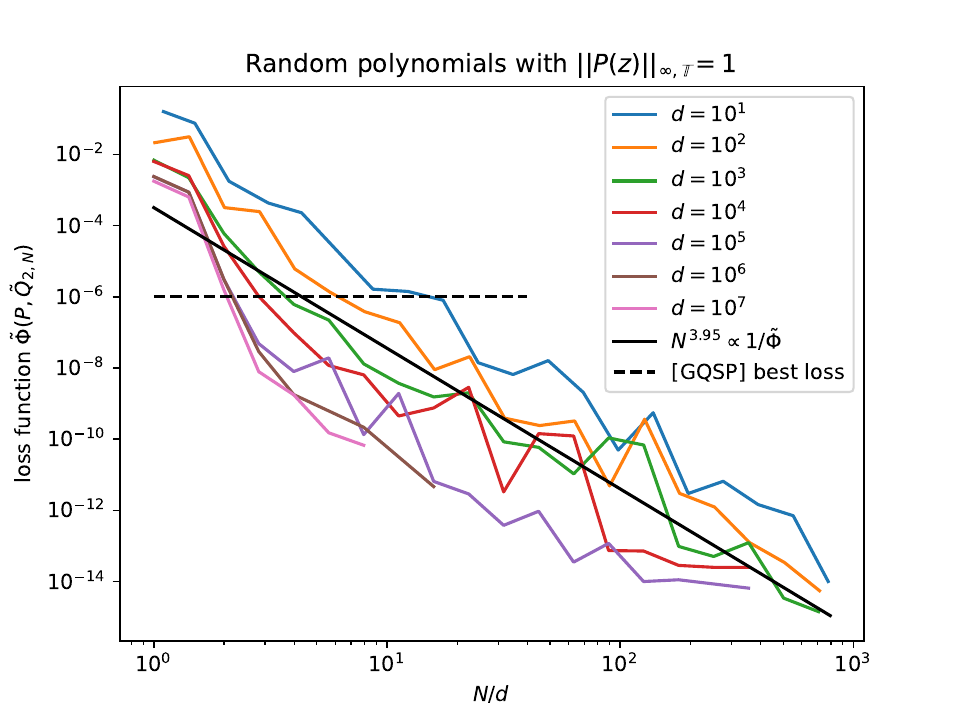}
         \caption{\begin{minipage}{0.8\textwidth}Polynomials with $\delta=0$. Despite lack of rigorous justification for Algorithm~\ref{algo:main} in this regime (necessitating the use of Algorithm~\ref{algo:downscaling}, slightly scaling down the polynomial), we observe that Algorithm~\ref{algo:main} is efficacious with a scaling behavior of $N= O\big( \frac{d}{\sqrt[4]{\tilde\Phi}}\big)$ in the accuracy $\tilde\Phi$.\end{minipage}}
         \label{fig:delta0}
    \end{subfigure}
    \caption{Finding the complementary polynomial with the reference implementation (Figure~\ref{fig:code}) of Algorithm~\ref{algo:main} for random polynomials of various degrees $d$; see Section~\ref{sec:random polynomials}. We plot the achieved loss function \eqref{eq:loss} by chosen FFT dimension $N$.
    The optimization code from GQSP \cite{motlagh2024} only reaches losses up to $10^{-6}$, which Algorithm~\ref{algo:main} achieves at far faster runtimes; see Figure~\ref{fig:runtime}. In the plots, the loss functions saturate around $10^{-14}$ due to single-precision floating-point arithmetic.}
    \label{fig:loss functions}
\end{figure}

\begin{figure}
    \begin{center}
        \includegraphics[width=0.5\textwidth]{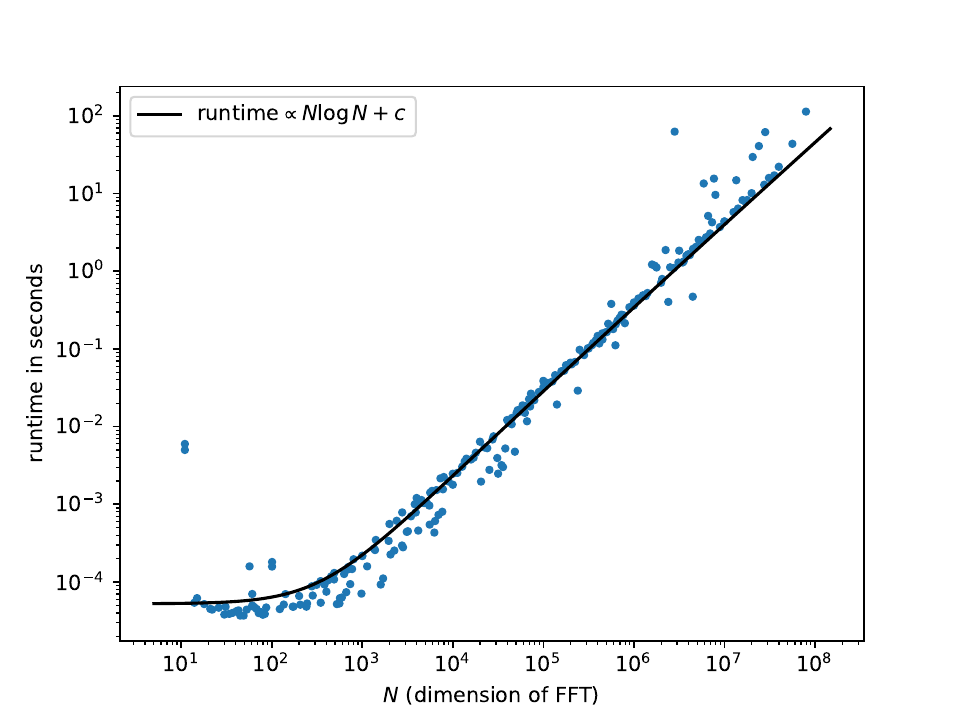}
    \end{center}
    \caption{Runtime of the reference implementation (Figure~\ref{fig:code}) of Algorithm~\ref{algo:main} on an M2 Macbook Pro with 16GB RAM for random polynomials; see Section~\ref{sec:random polynomials}. We observe the scaling $N\log N$ due to the FFTs.}
    \label{fig:runtime}
\end{figure}

\subsection{Hamiltonian simulation}
\label{sec:hamiltonian simulation}

The Jacobi-Anger expansion reads
\begin{equation}\label{eq:Jacobi-Anger}
\ee^{-\ii \tau x}=J_0(\tau)+2\sum_{n=1}^{\infty} \ii^n J_n(\tau)T_n(x) \quad (\tau\in [0,\infty), x\in [-1,1])
\end{equation}
where $\{J_n(\tau)\}_{n=0}^{\infty}$ are Bessel functions of the first kind \cite[Chapter~10]{DLMF} and $\{T_n(x)\}_{n=0}^{\infty}$ are Chebyshev polynomials of the first kind \cite[Chapter~18]{DLMF}. In quantum algorithms, the QSVT is used to apply this function to a Hamiltonian to construct the time evolution operator \cite{gilyen2019,martyn2021}. \\

Denote the truncation of the Chebsyhev series \eqref{eq:Jacobi-Anger} by
\begin{equation}
f_M(x;\tau)\coloneqq J_0(\tau)+2\sum_{n=1}^{M} \ii^n J_n(\tau)T_n(x) \quad (x\in [-1,1]).
\end{equation}
It may be shown that \cite{dong2021}
\begin{equation}
\big\lVert f_M(x;\tau)-\ee^{-\ii \tau x}\big\rVert_{\infty,[-,1,1]} < \ee^{\frac12 \ee \tau-M}.
\end{equation}
Thus, choosing $M=\big\lceil\tfrac12 \ee \tau+\log\tfrac{1}{\varepsilon} \big\rceil$, we are guaranteed that the polynomial
\begin{equation}
\tilde{f}_M(x;\tau)\coloneqq \frac{1}{1+\varepsilon}f_M(x;\tau)   
\end{equation}
$\varepsilon$-approximates $\ee^{-\ii \tau x}$ on $[-1,1]$ and satisfies $\lVert \tilde{f}_M(\tau;z)\rVert_{\infty,[-1,1]}<1$. The corresponding polynomial on $\T$ is given by $P(z)= z^M\tilde{f}_M(\tfrac12(z+z^{-1});\tau)$; see Appendix~\ref{app:QSP}. We show results of Algorithm~\ref{algo:main} in Figure~\ref{fig:hamiltonian simulation loss}, which can accurately compute the complementary polynomial with only very low overhead in FFT dimension $N$.

\begin{figure}
    \begin{center}
        \includegraphics[width=0.5\textwidth]{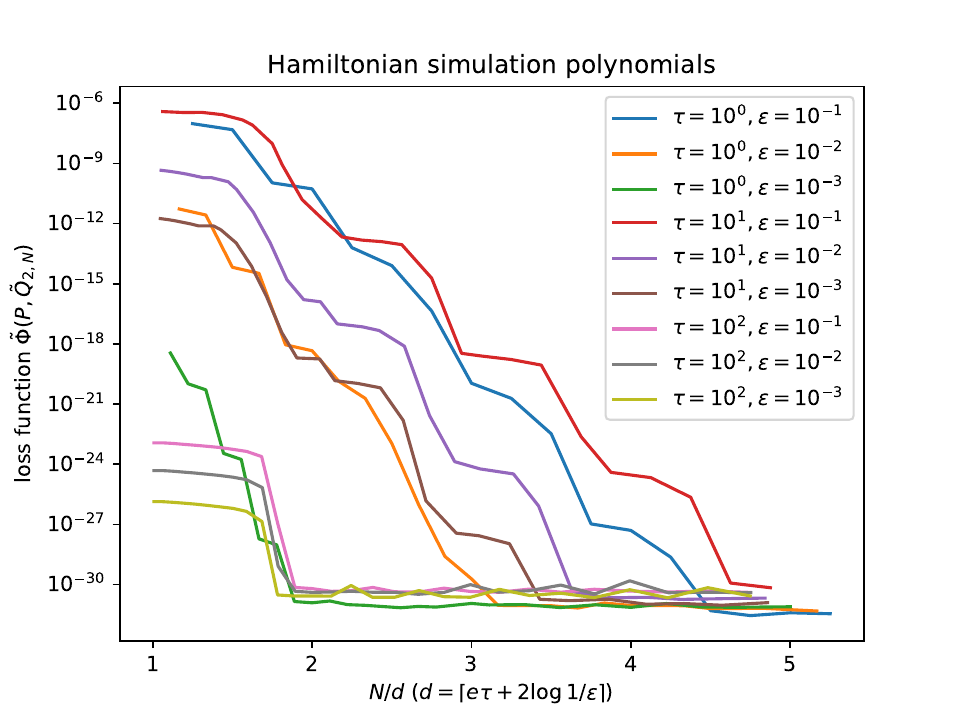}
    \end{center}
    \caption{Computing the complementary polynomial with Algorithm~\ref{algo:main} for Hamiltonian Simulation polynomials; see Section~\ref{sec:hamiltonian simulation}.}
    \label{fig:hamiltonian simulation loss}
\end{figure}

\subsection{Eigenvalue filtering}
\label{sec:eigenvalue filtering}

The polynomial defined by
\begin{equation}\label{eq:RN}
g_M(x;a)\coloneqq \frac{T_M(\frac{2x^2-(1+a^2)}{1-a^2})}{T_M(-\frac{1+a^2}{1-a^2})} \quad (a\in (0,1)).
\end{equation}
is used in quantum algorithms for eigenvalue filtering \cite{Lin_2020}. It has a sharp peak at $x=0$; see Figure~\ref{fig:eigenvalue filtering function} for an example. In quantum algorithms it can be used to project onto the kernel of a matrix. \\

\begin{figure}
    \begin{subfigure}[b]{0.5\textwidth}
        \centering
        \includegraphics[width=\textwidth]{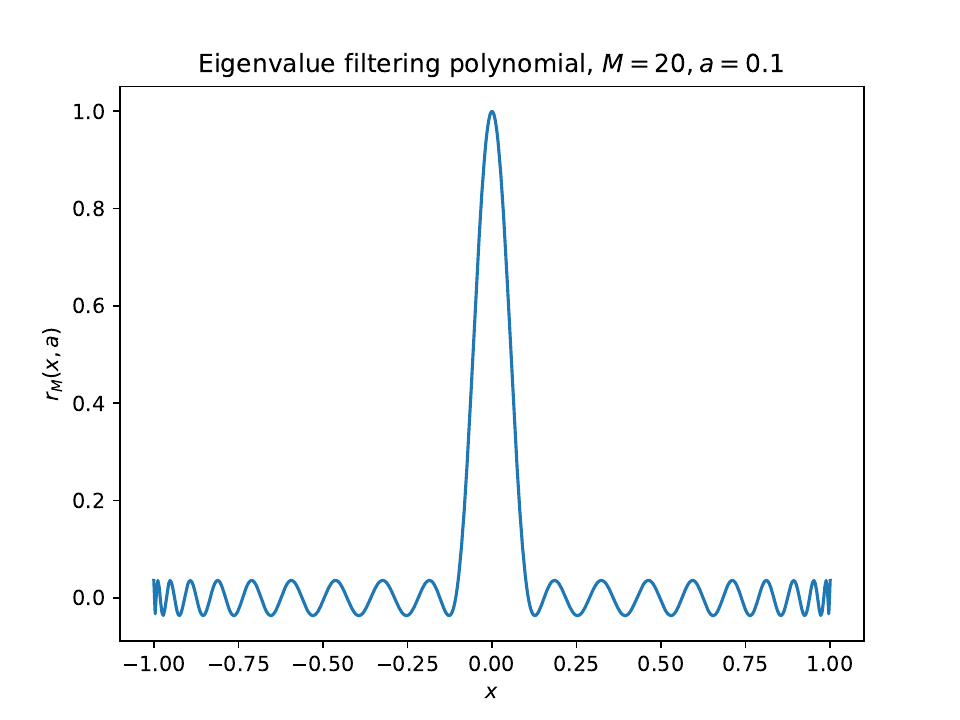}
         \caption{\begin{minipage}{0.8\textwidth}Example polynomial for eigenvalue filtering \cite{Lin_2020}.\end{minipage}}
         \label{fig:eigenvalue filtering function}
     \end{subfigure}
     \hfill
    \begin{subfigure}[b]{0.5\textwidth}
        \centering
        \includegraphics[width=\textwidth]{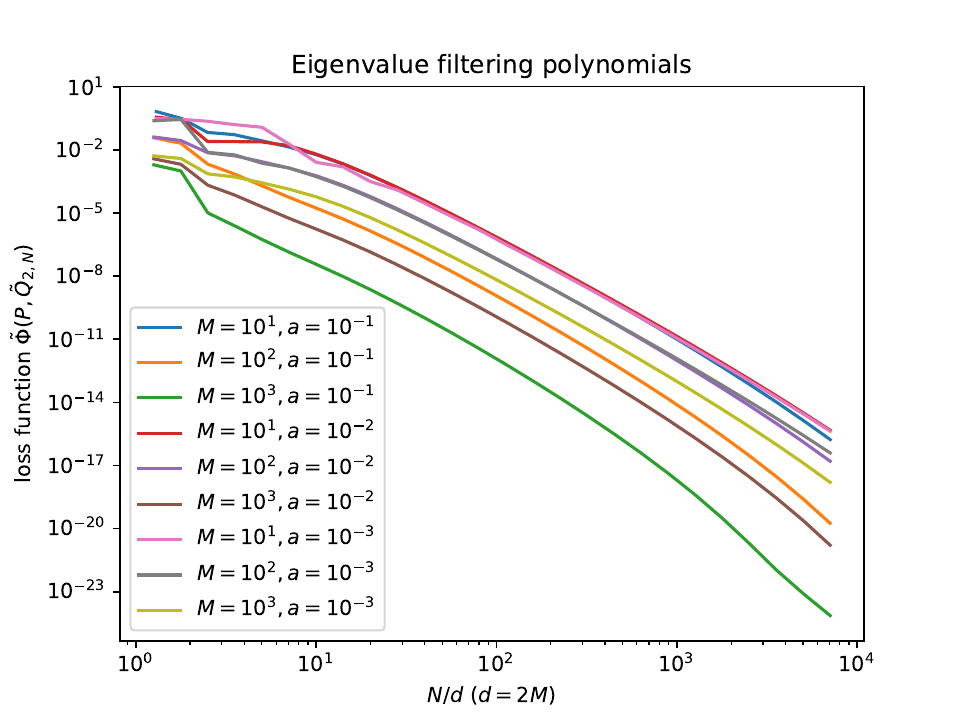}         \caption{\begin{minipage}{0.8\textwidth}Computing the complementary polynomial with Algorithm~\ref{algo:main}.\end{minipage}}
         \label{fig:eigenvalue filtering loss}
    \end{subfigure}
    \caption{Eigenvalue filtering polynomial; see Section~\ref{sec:eigenvalue filtering}.}
\end{figure}

While the expansion of \eqref{eq:RN} in the Chebyshev basis can be found explicitly via standard identities, it leads to expressions for coefficients that are numerically unstable. Instead, the Chebsyhev coefficients of \eqref{eq:RN} may be determined via a Chebyshev transform, as we now describe. Setting
\begin{equation}
x_{m,M}\coloneqq \cos\bigg(\frac{2m+1}{2M}\pi \bigg), 
\end{equation}
and using the discrete orthogonality of the Chebsyhev polynomials, it may be shown that
\begin{equation}
g_M(x;a)=c_0+\sum_{m=1}^{2M}c_m T_m(x),   
\end{equation}
where
\begin{equation}
c_0=\frac{1}{2M+1}\sum_{m=0}^{2M}g_M(x_{m,2M+1};a) ,\quad c_n=\frac{2}{2M+1}\sum_{m=0}^{2M}g_M(x_{m,2M+1};a)T_m(x_{m,2M+1}).
\end{equation}
We perform the Chebyshev transformation with the \texttt{Chebyshev.interpolate()} function in NumPy \cite{numpy}.
The corresponding polynomial on $\T$ is $P(z)=z^M g_M\big(\tfrac12\big(z^{\frac12}+z^{-\frac12}\big);a\big)$; see Appendix~\ref{app:QSP}. \\

In order to run our reference implementation of Algorithm~\ref{algo:main}, we multiply the polynomial $R_M(z;a)$ by $(1-10^{-10})$. The results in Figure~\ref{fig:eigenvalue filtering loss} demonstrate that our algorithm works very well.

\subsection{Signum function}
\label{sec:sign function}

Uniform polynomial approximations of the signum function, defined by
\begin{equation}
\mathrm{sgn}(x)\coloneqq \begin{cases}
-1 & x<0 \\
0 & x=0 \\
+1 & x> 0,
\end{cases}
\end{equation}
are required in various QSVT-based applications including amplitude amplification and phase estimation \cite{martyn2021}. As $\mathrm{sgn}(x)$ is not regular at $x=0$, a standard approach to the construction of polynomial approximants uses an error function with an appropriately rescaled argument as a regularization of the signum function. The resulting polynomial approximant is \cite{wan2022}
\begin{equation}
h_M(x;\beta)\coloneqq 2\ee^{-\beta}\sqrt{\frac{2\beta}{\pi}}\Bigg(I_0(\beta)T_1(x)+\sum_{n=1}^M (-1)^n I_n(\beta)\bigg(\frac{T_{2n+1}(x)}{2n+1}-\frac{T_{2n-1}(x)}{2n-1}\bigg)\Bigg),
\end{equation}
where $\{I_n(\beta)\}_{n=0}^{\infty}$ are modified Bessel functions \cite[Chapter~10]{DLMF}. \\

The following is a essentially a variant of \cite[Theorem~3]{wan2022}, convenient for our purposes. Below, $W_0$ denotes the principal branch of the Lambert $W$-function \cite[Section~4.13]{DLMF}.

\begin{theorem}[Chebyshev approximation of the signum function, \cite{wan2022}]
\label{thm:signum}

For any $a\in (0,1)$ and $\varepsilon\in\big(0,\frac{3}{
\sqrt{8\pi\log2}}\big)$, let $\beta>0$ satisfy $\beta\geq \frac{1}{4a^2}W_0\big(\frac{18}{\pi\varepsilon^2}\big)$ and $M\in \mathbb{Z}_{\geq 1}$ satisfy
\begin{equation}
M\geq \sqrt{\frac{W_0\big(\frac{72}{\pi\varepsilon^2}\big)\bigg(\log\bigg(\frac{3}{\sqrt{2\pi}}\frac{1}{\varepsilon\sqrt{W_0\big(\frac{72}{\pi\varepsilon^2}\big)}}\bigg)-\beta\bigg)}{W_0\bigg(\frac{1}{\ee}\bigg(\frac{1}{\beta}\log\bigg(\frac{3}{\sqrt{2\pi}}\frac{1}{\varepsilon\sqrt{W_0\big(\frac{72}{\pi\varepsilon^2}\big)}}\bigg)-1\bigg)\bigg)}}.     
\end{equation}
Then, 
\begin{equation}
\lVert \mathrm{sgn}(x)-h_M(x;\beta)\rVert_{\infty,[-1,-a]\cup [a,1]}<\varepsilon    
\end{equation}
and
\begin{equation}
\lVert h_M(x;\beta)\rVert_{\infty,[-1,1]} < 1+\tfrac23\varepsilon.
\end{equation}
hold.
\end{theorem}

Based on Theorem~\ref{thm:signum}, we consider the polynomial
\begin{equation}
\tilde{h}_M(x;\beta)\coloneqq \frac{1}{1+\tfrac23\varepsilon}h_M(x;\beta)
\label{eq:sign function final poly}
\end{equation}
which $\varepsilon$-approximates the signum function on $[-1,-a]\cup[a,1]$ and is strictly bounded in absolute value by unity on $[-1,1]$. The corresponding polynomial on $\T$ is $P(z)=z^{(2M+1)/2}\tilde{h}_M\big(\tfrac12\big(z^{\tfrac12}+z^{-\tfrac12}\big);\beta\big)$; see Appendix~\ref{app:QSP}. \\

To test our algorithm, we use example polynomials given by the parameters in Figure~\ref{fig:sign function polynomials}. Our results in Figure~\ref{fig:sign function polynomials} show that Algorithm~\ref{algo:main} works very well in practice with only small overhead in the FFT dimension $N$.

\begin{figure}
    \begin{subfigure}[b]{0.5\textwidth}
        \centering
        \begin{tabular}{r|llrr}
Example & \multicolumn{4}{c}{Specification of polynomial} \\ & $a$& $\varepsilon$& $\beta$& $M$ \\\hline
 1& $10^{-1}$ & $10^{-1}$ & 120 & 29 \\
 2& $10^{-1}$ & $10^{-4}$ & 433 & 99 \\
 3& $10^{-1}$ & $10^{-7}$ & 765 & 172 \\
 4& $10^{-1}$ & $10^{-10}$ & 1101 & 246 \\
 5& $10^{-4}$ & $10^{-1}$ & 119631742 & 26690 \\
 6& $10^{-4}$ & $10^{-4}$ & 432869078 & 89806 \\
 7& $10^{-4}$ & $10^{-7}$ & 764051835 & 156148 \\
\end{tabular}
         \caption{\begin{minipage}{0.8\textwidth}Parameters of example signum function polynomials; see Theorem~\ref{thm:signum}.\end{minipage}}
         \label{fig:sign function polynomials}
     \end{subfigure}
     \hfill
    \begin{subfigure}[b]{0.5\textwidth}
        \centering
        \includegraphics[width=\textwidth]{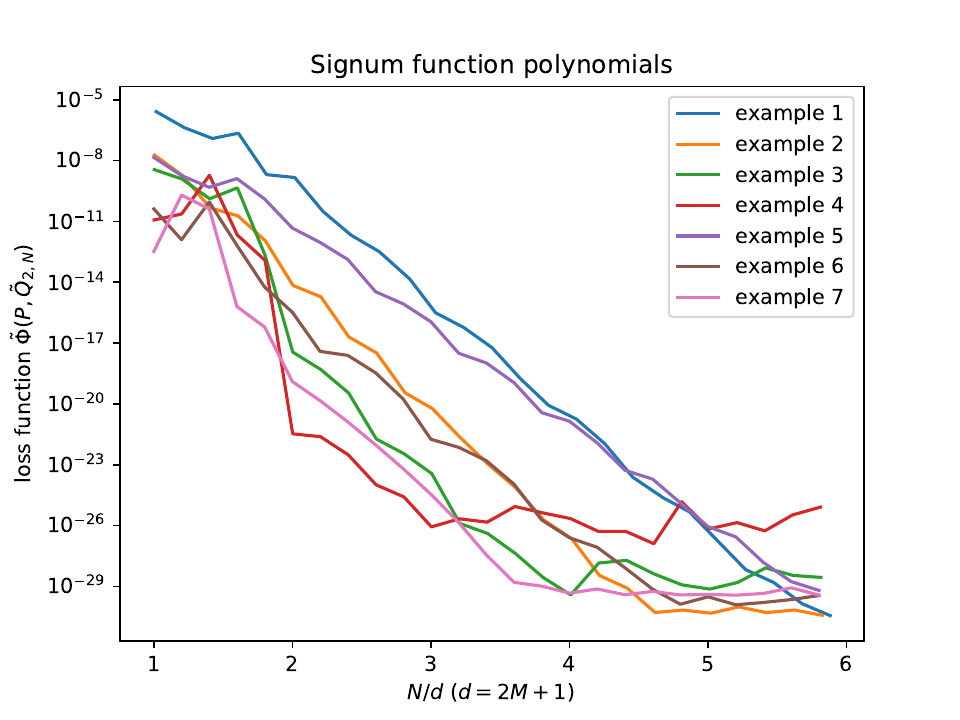}         \caption{\begin{minipage}{0.8\textwidth}Computing the complementary polynomial with Algorithm~\ref{algo:main}.\end{minipage}}
         \label{fig:sign function loss}
    \end{subfigure}
    \caption{Signum function polynomial; see Section~\ref{sec:sign function}.}
\end{figure}

\section{Discussion}
\label{sec:discussion}

In this paper, we have addressed the analytic and numerical solvability of the complementary polynomials problem, Problem~\ref{prob:CP}. 
Our main mathematical result, Theorem~\ref{thm:main}, is an exact representation, written as a set of contour integrals, for the complementary polynomial throughout the entire complex plane. We use a Fourier analytic variant of Theorem~\ref{thm:main}, Corollary~\ref{cor:main}, as a basis for developing a numerical method to obtain the complementary polynomial explicitly in the monomial basis.  \\

We give the following closing remarks. 
\begin{enumerate}

\item Problem~\ref{prob:CP} is a special case of the Fej\'{e}r-Riesz problem. Our methods for solving Problem~\ref{prob:CP} both analytically and numerically are equally applicable to the more general F\'{e}jer-Riesz problem.

\item We constructed integral representations of $Q$ on the entire complex plane. It is interesting to consider if these integrals might be explicitly computable. It has been shown \cite{mashreghi2002} that the real line Hilbert transforms of logarithms of polynomials are expressible in terms of the roots of the polynomial. As the integration in \eqref{eq:QT} amounts to a periodic Hilbert transformation of a logarithm of a Laurent polynomial, we have obtained an analogous result by comparing \eqref{eq:QT} with \eqref{eq:QFR}. As any root finding algorithm is anticipated to be more expensive than Algorithm~\ref{algo:main}, this observation is not of practical consequence. 

\item While we have chosen the representation of $Q$ on $\T$ \eqref{eq:QT} and its Fourier analytic equivalent \eqref{eq:QFourier}	as the basis for our numerical method, it is also possible to construct $Q$ though interpolation at any $d+1$ distinct points in the complex plane. Theorem~\ref{thm:main} provides a basis to do this. The representations \eqref{eq:QD} and \eqref{eq:QCD} are essentially Cauchy transforms, which the results of \cite{olver2011} show are computable $O(N\log N)$ time. However, the interpolation through $z$ values that are not roots of unity will be more expensive as the FFT is not applicable.
\item There are interesting connections between the present work and the classical signal processing literature. 
In the context of pulse synthesis, generically non-polynomial solutions to \eqref{eq:PQ} have been investigated on both $\R$ \cite{epstein2004} and $\T$ \cite{magland2005}. We particularly highlight that in \cite{magland2005}, a formula similar to \eqref{eq:QT} appears, apparently obtained by conformal mapping of an analogous formula on the real line in \cite{epstein2004}. In the context of phase retrieval, similar technologies as in this paper, namely Hilbert transforms of logarithms, have been employed \cite{nakajima1988}.

\item In \cite{motlagh2024}, the complementary polynomials problem was rephrased as an optimization problem over the coefficients of $Q$. The corresponding numerical method was based on minimizing the objective function \eqref{eq:loss}. Equation \eqref{eq:loss} defines a system of nonlinear algebraic equations for the coefficients $(q_n)_{n=0}^d$ and it is an interesting question whether this system could be solved explicitly using discrete mathematics or finite-dimensional linear algebra, without appealing to analysis as in this paper. While we make no claim this is impossible, we note that Problem~\ref{prob:CP} is essentially a special case of the Fej\'{e}r-Riesz problem, for which we are unaware of an explicit solution by such means. 

\item In the proof of Theorem~\ref{thm:main}, we have characterized the distinct solutions of Problem~\ref{prob:CP} and isolated the canonical solution having all roots outside the unit disk. The fact that the algorithm proposed in \cite{motlagh2024} does not target a particular solution may explain, in part, its effectiveness. It is an interesting question to consider the effect of the particular solution targeted on the resulting phase factors associated to a pair of complementary polynomials $(P,Q)$. In particular, does the geometry of the roots of $Q$ affect the structure of the phase factors?

\item  After this paper was posted to the arXiv, several related works have appeared.
In \cite{alexis2024}, an FFT-based algorithm for computing individual phase factors is developed. More specifically, \cite{alexis2024} presents a similar method to Algorithm~\ref{algo:main} for computing complementary polynomials as a subroutine in their algorithm, and subsequently shows how to compute individual phase factors via an FFT and linear algebra. Improvements to the linear algebraic component of the algorithm were reported in \cite{ni2024}. In \cite{skelton2024,skelton2025}, a Newton-Raphson-based algorithm for computing complementary polynomials is introduced and analyzed. Numerical experiments performed on our algorithm in \cite{skelton2024} overlap with the examples in  Sections~\ref{sec:random polynomials}--\ref{sec:hamiltonian simulation}.

\item Qualtran \cite{harrigan2024} is a recently-introduced platform for quantum algorithm development. As described in \cite{harrigan2024}, Algorithm~\ref{algo:downscaling} has been integrated into Qualtran for the purpose of constructing QSP and QSVT circuits.

\end{enumerate}

\paragraph{Acknowledgements.}

We thank Earl Campbell, Gy\"{o}rgy Geher, Tharon Holdsworth, Tanuj Khattar, Edwin Langmann, Nicholas Rubin, and especially Hari Krovi for useful discussions. We are grateful to Nina Glaser for bringing our attention to \cite{Lin_2020} and suggesting eigenvalue filtering as an example. We thank three anonymous referees for suggestions that improved the paper. This work was partially funded by Innovate UK (grant reference 10071684).

\paragraph{Data Availability.} Data sets generated during the course of this study will be made available upon reasonable request. 

\paragraph{Conflict of interest.} On behalf of all authors, the corresponding author states that there is no conflict of interest.

\appendix

\section{Quantum signal processing conventions}
\label{app:QSP}

In this appendix, we substantiate our claim that constructing complementary polynomials in different QSP conventions may be achieved by using the results of Theorem~\ref{thm:main} in conjunction with mappings between the conventions. We first show, in Section~\ref{subsec:GQSPtoLQSP}, that GQSP implies the Laurent formulation of QSP due to Haah \cite{haah2019}. In Section~\ref{subsec:LQSPtoQSP}, we show that the Laurent formulation of QSP implies the standard formulation of QSP; see \cite{martyn2021} for a similar presentation of interrelations between different parameterizations of standard QSP. 

\subsection{From GQSP to Laurent QSP}
\label{subsec:GQSPtoLQSP}

We begin by computing the remaining entries in the target matrix of GQSP.

\begin{proposition}
\label{prop:GQSP}
The precise form of the matrix realized by the GQSP sequence \eqref{eq:GQSP} is
\begin{equation}\label{eq:PQud}
\left(\begin{array}{cc}
P(z) & Q(z) \\
* & * 
\end{array}\right)= \left(\begin{array}{cc}
P(z) & Q(z) \\
u_d(z)Q^*(1/z) & -u_d(z)P^*(1/z)
\end{array}\right),
\end{equation}
where
\begin{equation}\label{eq:ud}
u_d(z)=(-1)^{d}z^d\ee^{\ii\big(\lambda+\sum_{j=0}^d \phi_j\big)} . 
\end{equation}
\end{proposition}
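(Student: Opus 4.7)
The plan is to exploit a two-step strategy: first establish the claimed identity on $\T$ via unitarity, then extend it to all of $\C$ by a polynomial-identity argument. The content of the proposition is really that \emph{one} row of a $2\times 2$ unitary matrix determines the other, together with a careful bookkeeping of the overall determinantal phase.

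First I would verify that each factor $\mathrm{diag}(z,1)U(\theta_j,\phi_j)$ appearing in \eqref{eq:GQSP}, with
$U(\theta,\phi)=\Bigl(\begin{smallmatrix} \ee^{\ii\phi}\cos\theta & \sin\theta \\ \ee^{\ii\phi}\sin\theta & -\cos\theta \end{smallmatrix}\Bigr)$,
is unitary for $z\in\T$: $U(\theta,\phi)$ is a direct computation and $\mathrm{diag}(z,1)$ is unitary precisely when $z\in\T$. Consequently the full matrix $M(z)$ on the right-hand side of \eqref{eq:GQSP} is unitary for $z\in\T$. Next I would compute its determinant multiplicatively, using $\det U(\theta,\phi)=-\ee^{\ii\phi}$, $\det\mathrm{diag}(z,1)=z$, and $\det\mathrm{diag}(\ee^{\ii\lambda},1)=\ee^{\ii\lambda}$, obtaining
\begin{equation*}
\det M(z)=\ee^{\ii\lambda}(-\ee^{\ii\phi_0})\prod_{j=1}^d z(-\ee^{\ii\phi_j})=(-1)^{d+1}z^d \ee^{\ii(\lambda+\sum_{j=0}^d\phi_j)}=-u_d(z),
\end{equation*}
with $u_d$ as in \eqref{eq:ud}.

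For a $2\times 2$ unitary matrix with top row $(P(z),Q(z))$ and determinant $-u_d(z)$, the general identity $M^{-1}=M^\dagger$ forces the bottom row to be $\bigl(u_d(z)\,\overline{Q(z)},\,-u_d(z)\,\overline{P(z)}\bigr)$. On $\T$ we have $\bar z=1/z$, so $\overline{P(z)}=P^*(1/z)$ and $\overline{Q(z)}=Q^*(1/z)$ by the definition $f^*(z)=f(z^*)^*$ from the notation section. This yields \eqref{eq:PQud} pointwise for $z\in\T$.

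Finally I would upgrade the identity from $\T$ to all of $\C$ by a degree-counting and polynomial-identity argument. A short induction along the product in \eqref{eq:GQSP}, based on the recursion $A_k=A_{k-1}\,\mathrm{diag}(z,1)U(\theta_k,\phi_k)$, shows that every entry of $M(z)$ is a polynomial in $z$ of degree at most $d$; in particular the bottom row consists of polynomials of degree $\le d$. The right-hand side of \eqref{eq:PQud} is likewise polynomial, since $u_d(z)P^*(1/z)=(-1)^d\ee^{\ii(\lambda+\sum\phi_j)}\sum_n p_n^* z^{d-n}$ and similarly for $Q^*$. Two polynomials of bounded degree that agree on the infinite set $\T$ must coincide on all of $\C$, which finishes the proof. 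The only mild obstacle I foresee is keeping the sign and phase bookkeeping consistent through the unitarity relation, in particular the placement of the minus sign in $\det M = -u_d(z)$ versus the factor $(-1)^d$ in $u_d(z)$; everything else is routine linear algebra plus analytic continuation.
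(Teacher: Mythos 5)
Your proof is correct and takes essentially the same approach as the paper: unitarity of the GQSP product on $\T$ forces the bottom row to be $(-\Delta\overline{Q},\Delta\overline{P})$ for $\Delta=\det M$, and the multiplicative determinant computation (using $\det U(\theta,\phi)=-\ee^{\ii\phi}$ and $\det\mathrm{diag}(z,1)=z$) gives $\Delta=-u_d(z)$. The paper states the argument more tersely (and invokes \eqref{eq:PQ} to see $u_d:\T\to\T$ rather than writing out $M^{-1}=M^\dagger$), but the substance is identical; your final extension from $\T$ to $\C$ via a degree-$d$ polynomial-identity argument is a harmless bonus that the paper does not need, since \eqref{eq:GQSP} and the proposition are only asserted on $\T$.
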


\begin{proof}
By unitarity, \eqref{eq:PQud} must hold for some function $u_d:\T\to \T$. Using \eqref{eq:PQ}, \eqref{eq:PQud}, and 
\begin{equation}
\det\left(\begin{array}{cc}
\ee^{\ii(\lambda+\phi_0)}\cos\theta_0 & \ee^{\ii\lambda}\sin\theta_0 \\
\ee^{\ii\theta_0}\sin\theta_0 & -\cos\theta_0
\end{array}\right)=-\ee^{\ii(\lambda+\phi_0)},\quad \det\left(\begin{array}{cc}
z & 0 \\
0 & 1
\end{array}\right)=z,\quad 
\det\left(\begin{array}{cc}
\ee^{\ii\phi_j} \cos\theta_j & \sin\theta_j  \\
\ee^{\ii\phi_j}\sin\theta_j & -\cos\theta_j
\end{array}\right)=-\ee^{\ii\phi_j} 
\end{equation}
to evaluate the determinant of both sides of \eqref{eq:GQSP} gives \eqref{eq:ud}.
\end{proof}

By combining Theorem~\ref{thm:GQSP} and Proposition~\ref{prop:GQSP} and making the replacement $z\to z^2$, we have
\begin{multline}\label{eq:PQz2}
\left(\begin{array}{cc}
P(z^2) & Q(z^2) \\
u_d(z^2)Q^*(1/z^2) & -u_d(z^2)P^*(1/z^2)
\end{array}\right)=\\
\left(\begin{array}{cc}
\ee^{\ii(\lambda+\phi_0)}\cos \theta_0 & \ee^{\ii\lambda}\sin \theta_0 \\
\ee^{\ii\phi_0}\sin \theta_0 & -\cos \theta_0
\end{array}\right)\Bigg[\prod_{j=1}^d \left(\begin{array}{cc} z^2 & 0 \\ 0 & 1 \end{array}\right)\left(\begin{array}{cc}
\ee^{\ii\phi_j}\cos \theta_j & \sin \theta_j \\
\ee^{\ii\phi_j}\sin \theta_j & -\cos \theta_j
\end{array}\right) \Bigg] \quad (z\in \T).
\end{multline}
We rename parameters $\lambda\to \lambda_0$ and $\phi_j\to \phi_j+\lambda_{j+1}$ for $j\in [d-1]_0$ in \eqref{eq:PQz2}. This yields
\begin{multline}\label{eq:PQz2,2}
\left(\begin{array}{cc}
P(z^2) & Q(z^2) \\
u_d(z^2)Q^*(1/z^2) & -u_d(z^2)P^*(1/z^2)
\end{array}\right)=\\
\left(\begin{array}{cc}
\ee^{\ii(\lambda_0+\phi_0+\lambda_1)}\cos \theta_0 & \ee^{\ii\lambda_0}\sin \theta_0 \\
\ee^{\ii\phi_0+\lambda_1}\sin \theta_0 & -\cos \theta_0
\end{array}\right)\Bigg[\prod_{j=1}^{d-1} \left(\begin{array}{cc} z^2 & 0 \\ 0 & 1 \end{array}\right)\left(\begin{array}{cc}
\ee^{\ii(\phi_j+\lambda_{j+1})}\cos \theta_j & \sin \theta_j \\
\ee^{\ii(\phi_j+\lambda_{j+1})}\sin \theta_j & -\cos \theta_j
\end{array}\right) \Bigg] \\
\times \left(\begin{array}{cc} z^2 & 0 \\ 0 & 1 \end{array}\right)\left(\begin{array}{cc}
\ee^{\ii\phi_d}\cos \theta_j & \sin \theta_j \\
\ee^{\ii\phi_d}\sin \theta_j & -\cos \theta_j
\end{array}\right).
\end{multline}
Using the factorizations
\begin{equation}
\begin{split}
    \left(\begin{array}{cc}
\ee^{\ii(\lambda_0+\phi_0+\lambda_1)}\cos \theta_0 & \ee^{\ii\lambda_0}\sin \theta_0 \\
\ee^{\ii(\phi_0+\lambda_1)}\sin \theta_0 & -\cos \theta_0
\end{array}\right)=&\;     \left(\begin{array}{cc}
\ee^{\ii(\lambda_0+\phi_0)}\cos \theta_0 & \ee^{\ii\lambda_0}\sin \theta_0 \\
\ee^{\ii\phi_0}\sin \theta_0 & -\cos \theta_0
\end{array}\right)\left(\begin{array}{cc} 
\ee^{\ii\lambda_1} & 0 \\
0 & 1
\end{array}\right), \\
    \left(\begin{array}{cc}
\ee^{\ii(\phi_j+\lambda_{j+1})}\cos \theta_j & \sin \theta_j \\
\ee^{\ii(\phi_j+\lambda_{j+1})}\sin \theta_j & -\cos \theta_j
\end{array}\right)=&\;     \left(\begin{array}{cc}
\ee^{\ii\phi_j}\cos \theta_j & \sin \theta_j \\
\ee^{\ii\phi_j}\sin \theta_j & -\cos \theta_j
\end{array}\right)\left(\begin{array}{cc} 
\ee^{\ii\lambda_{j+1}} & 0 \\
0 & 1
\end{array}\right)
\end{split}
\end{equation}
and commutativity of diagonal matrices in \eqref{eq:PQz2}, we deduce
\begin{multline}\label{eq:PQz2,3}
\left(\begin{array}{cc}
P(z^2) & Q(z^2) \\
u_d(z^2)Q^*(1/z^2) & -u_d(z^2)P^*(1/z^2)
\end{array}\right)=\\
\left(\begin{array}{cc}
\ee^{\ii(\lambda_0+\phi_0)}\cos \theta_0 & \ee^{\ii\lambda_0}\sin \theta_0 \\
\ee^{\ii\phi_0}\sin \theta_0 & -\cos \theta_0
\end{array}\right)\Bigg[\prod_{j=1}^{d} \left(\begin{array}{cc} z^2 & 0 \\ 0 & 1 \end{array}\right)\left(\begin{array}{cc}
\ee^{\ii(\lambda_j+\phi_j)}\cos \theta_j & \ee^{\ii\lambda_j}\sin \theta_j \\
\ee^{\ii\phi_j}\sin \theta_j & -\cos \theta_j
\end{array}\right) \Bigg].
\end{multline}

We make the following specializations in \eqref{eq:ud} and \eqref{eq:PQz2,3},
\begin{equation}
\lambda_j=\frac{\pi}{2} \quad \phi_j=\frac{\pi}{2} ,\quad \theta_j\to \theta_j+\pi  \quad (j\in [d]_0)
\end{equation}
and multiply \eqref{eq:PQz2,2} by $z^{-d}$ to obtain
\begin{equation}\label{eq:PQz2,4}
\left(\begin{array}{cc}
z^{-d}P(z^2) & z^{-d}Q(z^2) \\
-z^d Q^*(1/z^2) & z^d P^*(1/z^2)
\end{array}\right)=\\
\left(\begin{array}{cc}
\cos \theta_0 & \ii\sin \theta_0 \\
\ii\sin \theta_0 & \cos \theta_0
\end{array}\right)\Bigg[\prod_{j=1}^{d} \left(\begin{array}{cc} z & 0 \\ 0 & z^{-1} \end{array}\right)\left(\begin{array}{cc}
\cos \theta_j & \ii\sin \theta_j \\
\ii\sin \theta_j & \cos \theta_j
\end{array}\right) \Bigg].
\end{equation}

To proceed, we recall Corollary~\ref{cor:real} and the $\mathrm{U}(1)$ invariance of Problem~\ref{prob:CP}. Together, these guarantee that given $P\in \R[z]$ satisfying the conditions of Theorem~\ref{thm:main}, there exists a $Q\in \ii \R[z]$ that is complementary to $P$. By setting $F(z)=z^{-d}P(z^2)$ and $\ii G(z)= z^{-d} Q(z^2)$ in \eqref{eq:PQz2,4}, we deduce the Laurent formulation of QSP: 

\begin{theorem}[Laurent quantum signal processing, \cite{haah2019}] \label{thm:LaurentQSP}
Let $F\in \R[z,z^{-1}]$ with $\deg F=d\in \Z_{\geq 1}$ and parity $d\bmod 2$. Then, there exists $G\in \R[z,z^{-1}]$ and $(\theta_j)_{j=0}^d\in(-\pi,\pi]^{d+1}$ such that
\begin{equation}
\lvert F(z)\rvert^2+\lvert G(z)\rvert^2=1 \quad (z\in \T)  
\end{equation}
and
\begin{equation}\label{eq:LaurentQSP}
\left(\begin{array}{cc}
F(z) & \ii G(z) \\
\ii G(z^{-1}) & F(z^{-1})
\end{array}\right)=\left(\begin{array}{cc}
\cos\theta_0 & \ii \sin\theta_0 \\
\ii\sin\theta_0 & \cos{}\theta_0
\end{array}\right)\Bigg[\prod_{j=1}^d \left(\begin{array}{cc} z & 0 \\
0 & z^{-1}\end{array}\right)\left(\begin{array}{cc}
\cos\theta_j & \ii \sin\theta_j \\
\ii\sin\theta_j & \cos{}\theta_j
\end{array}\right) \Bigg] \quad (z\in \T)
\end{equation}
hold.
\end{theorem}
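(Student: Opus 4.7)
The plan is to reduce Theorem~\ref{thm:LaurentQSP} to GQSP (Theorem~\ref{thm:GQSP}) using the algebraic manipulations already carried out earlier in this appendix. First, given $F\in \R[z,z^{-1}]$ with $\deg F=d$ and parity $d\bmod 2$, I would construct a polynomial $P\in \R[z]$ of degree $d$ via the relation $F(z)=z^{-d}P(z^2)$. Writing $F(z)=\sum_{k=-d}^d f_k z^k$, the parity hypothesis forces $f_k=0$ whenever $k\not\equiv d\pmod{2}$, so $P(w)\coloneqq\sum_{k} f_k w^{(k+d)/2}$ is a polynomial of degree exactly $d$ with real coefficients. Since $\lvert F(z)\rvert=\lvert P(z^2)\rvert$ on $\T$, the bound $\lVert P\rVert_{\infty,\T}\le 1$ needed to apply Theorem~\ref{thm:GQSP} holds under the assumption, implicit in the theorem, that a complementary $G$ exists.

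Next, I would apply Theorem~\ref{thm:GQSP} to $P$ to obtain a canonical complementary polynomial; Corollary~\ref{cor:real} guarantees it may be taken with real coefficients, and the $\mathrm{U}(1)$ freedom in Problem~\ref{prob:CP} then allows multiplication by $\ii$, yielding $Q\in \ii\R[z]$ complementary to $P$. Setting $\ii G(z)\coloneqq z^{-d}Q(z^2)$ defines $G\in \R[z,z^{-1}]$ of the required degree and parity, and $\lvert P(z^2)\rvert^2+\lvert Q(z^2)\rvert^2=1$ on $\T$ immediately descends, upon multiplying by $\lvert z^{-d}\rvert^2=1$, to $\lvert F\rvert^2+\lvert G\rvert^2=1$.

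For the factorization \eqref{eq:LaurentQSP}, I would invoke the chain of identities \eqref{eq:PQz2}--\eqref{eq:PQz2,4} already developed in the text: these manipulate the GQSP sequence for $P$ by substituting $z\to z^2$, relabeling phases via $\phi_j\to\phi_j+\lambda_{j+1}$, and specializing $\lambda_j=\phi_j=\pi/2$ together with $\theta_j\to\theta_j+\pi$. The substitutions $F(z)=z^{-d}P(z^2)$ and $\ii G(z)=z^{-d}Q(z^2)$ then convert the left-hand side of \eqref{eq:PQz2,4} into the Laurent QSP matrix appearing on the left of \eqref{eq:LaurentQSP} (using that $P\in\R[z]$ means $P^*(1/z^2)=P(z^{-2})$, and similarly for $Q$), while the right-hand sides match identically.

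The main obstacle is verifying that the specialization $\lambda_j=\phi_j=\pi/2$ is compatible with the GQSP realization of our specific real $P$ and imaginary $Q$: Theorem~\ref{thm:GQSP} only asserts existence of some phase factors, not necessarily of this restricted form. The parameter counts are consistent (the $d+1$ remaining $\theta_j$ match the real degrees of freedom in a degree-$d$ real polynomial modulo $\mathrm{U}(1)$), so the specialized map should be surjective onto the relevant real $(P,Q)$ pairs. To make this rigorous I would complement the GQSP reduction with an inductive argument on $d$: peel off the rightmost factor in the product of \eqref{eq:LaurentQSP} by choosing $\theta_d$ to match the extremal Laurent coefficient of $F$, and then iterate on the degree-$(d-1)$ problem obtained by inverting the rightmost factor on both sides. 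This mirrors the standard inductive proofs of QSP factorization.
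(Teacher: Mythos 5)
Your proposal follows essentially the same route as the paper: construct $P\in\R[z]$ from $F$ via $F(z)=z^{-d}P(z^2)$, apply Theorem~\ref{thm:GQSP} together with Corollary~\ref{cor:real} and the $\mathrm{U}(1)$ freedom to obtain a purely imaginary complementary $Q$, then push through the substitution $z\to z^2$ and the reparameterization-then-specialization chain \eqref{eq:PQz2}--\eqref{eq:PQz2,4} to land on the Laurent form. You also correctly note that $\lVert F\rVert_{\infty,\T}\le1$ must be assumed implicitly for the complementarity condition to be satisfiable. The one place you differ from the paper is in explicitly flagging the weak point: after the relabeling $\phi_j\to\phi_j+\lambda_{j+1}$, only the $\lambda_j$ are genuinely free (the $\phi_j$ being determined by the original GQSP phases), so fixing both $\lambda_j=\phi_j=\pi/2$ is a real restriction and Theorem~\ref{thm:GQSP} alone does not certify that the phase factors for this particular real/imaginary $(P,Q)$ pair take that form. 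The paper's proof does not address this either; the implicit justification is that the GQSP phase-factor construction (a layer-by-layer peeling recursion, mirrored in the proofs of \cite{haah2019} and \cite{motlagh2024}) respects the real/imaginary symmetry of $(P,Q)$ and therefore produces phases of the required special form. Your proposed inductive argument would make that explicit and is indeed the standard way to close this gap, but as written it is a sketch rather than a proof — the base case and the degree-reduction step (showing the rightmost $\theta_d$ can always be solved for, and that the reduced data again satisfy the hypotheses) would need to be carried out. In short: same approach as the paper, with a correctly identified but not fully resolved gap.
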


To apply the results of this paper to determine $G(z)$, set $P(z)=z^{\frac{d}{2}} F(\sqrt{z})$; this polynomial satisfies the conditions of Problem~\ref{prob:CP}. If $Q$ is the canonical complementary polynomial to $P$ with purely imaginary coefficients, we see that $\ii G(z)=z^{-d}Q(z^2)$.

\subsection{From Laurent QSP to real QSP}
\label{subsec:LQSPtoQSP}

To recover the formulation of standard QSP based on polynomials on $[-1,1]$, we will employ a definition of the Chebyshev polynomials of the first kind,
\begin{equation}\label{eq:TU}
T_n(x)\coloneqq\frac12(z^n+z^{-n}) \quad (z\in \T),
\end{equation}
where $x\coloneqq \mathrm{Re}\, z$. \\

Suppose that $F$ satisfies $F(z^{-1})=F(z)$, i.e.,
\begin{equation}\label{eq:Fpalindromic}
F(z)=f_0+\frac12\sum_{n=1}^d f_n(z^n+z^{-n})= \sum_{n=0}^d f_n T_n(x),
\end{equation}
 where we have used \eqref{eq:TU}. Thus, the $(1,1)$ entry of \eqref{eq:LaurentQSP} can be understood as a real-valued polynomial on $[-1,1]$. This observation in conjunction with Theorem~\ref{thm:LaurentQSP} gives the following result; see \cite{gilyen2019} for a similar formulation without reference to complex variables. 

\begin{theorem}[Real quantum signal processing]\label{thm:realQSP}
Let $p\in \R[x]$ with $\deg p=d\in \Z_{\geq 1}$ with parity $d\bmod 2$ satisfy $\lvert p(x)\rvert\leq 1$ on $[-1,1]$. Then, there exists $G\in \R[z,z^{-1}]$ such that
\begin{equation}
\lvert p(x)\rvert^2+\lvert G(z)\rvert^2=1 \quad (z\in \T,\,x=\mathrm{Re}\,z)    
\end{equation}
and $(\theta_j)_{j=0}^d\in(-\pi,\pi]^{d+1}$ such that
\begin{equation}\label{eq:LaurentQSP2}
\left(\begin{array}{cc}
p(x) & \ii G(z) \\
\ii G(z^{-1}) & p(x)
\end{array}\right)=\left(\begin{array}{cc}
\cos\theta_0 & \ii \sin\theta_0 \\
\ii\sin\theta_0 & \cos{}\theta_0
\end{array}\right)\Bigg[\prod_{j=1}^d \left(\begin{array}{cc} z & 0 \\
0 & z^{-1}\end{array}\right)\left(\begin{array}{cc}
\cos\theta_j & \ii \sin\theta_j \\
\ii\sin\theta_j & \cos{}\theta_j
\end{array}\right) \Bigg] \quad (z\in \T, \,x=\mathrm{Re}\,z)
\end{equation}
holds.
\end{theorem}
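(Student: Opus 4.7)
The plan is to build a bridge to Theorem~\ref{thm:LaurentQSP} (Laurent QSP) by identifying the real polynomial $p(x)$ on $[-1,1]$ with a palindromic (symmetric under $z\mapsto z^{-1}$) real Laurent polynomial on $\T$, and then invoking the Laurent QSP theorem.

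First, I would expand $p$ in the Chebyshev basis. Since $\{T_n\}_{n=0}^d$ is a basis of $\R[x]_{\leq d}$, we may write $p(x)=\sum_{n=0}^d f_n T_n(x)$ with $f_n\in\R$; the parity assumption on $p$ forces $f_n=0$ whenever $n\not\equiv d\pmod 2$, because $T_n$ has parity $n\bmod 2$ as a polynomial in $x$. Using the defining identity \eqref{eq:TU}, I define
\begin{equation}
F(z)\coloneqq f_0+\frac12\sum_{n=1}^d f_n\bigl(z^n+z^{-n}\bigr) \quad (z\in\T).
\end{equation}
By construction $F\in\R[z,z^{-1}]$, $\deg F=d$ (since $f_d\neq 0$), and the Laurent parity of $F$ equals $d\bmod 2$. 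Moreover, $F$ is palindromic: $F(z^{-1})=F(z)$, which means in particular that $F(z)=p(x)$ on $\T$ with $x=\mathrm{Re}\,z$, and hence $\lvert F(z)\rvert=\lvert p(x)\rvert\leq 1$ for $z\in\T$.

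Next, I would apply Theorem~\ref{thm:LaurentQSP} to $F$ to obtain a $G\in\R[z,z^{-1}]$ and $(\theta_j)_{j=0}^d\in(-\pi,\pi]^{d+1}$ for which the complementarity condition $\lvert F(z)\rvert^2+\lvert G(z)\rvert^2=1$ and the matrix factorization \eqref{eq:LaurentQSP} both hold on $\T$. Using $\lvert F(z)\rvert=\lvert p(x)\rvert$ for $z\in\T$, $x=\mathrm{Re}\,z$, the complementarity condition translates immediately to $\lvert p(x)\rvert^2+\lvert G(z)\rvert^2=1$, the first conclusion of the theorem.

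Finally, to obtain \eqref{eq:LaurentQSP2}, I would substitute into \eqref{eq:LaurentQSP} the two identifications $F(z)=p(x)$ and $F(z^{-1})=F(z)=p(x)$ that the palindromic construction guarantees; these replace exactly the $(1,1)$ and $(2,2)$ entries of the left-hand side of \eqref{eq:LaurentQSP}, while the $(1,2)$ and $(2,1)$ entries (involving $\ii G(z)$ and $\ii G(z^{-1})$) are unchanged. The right-hand side of \eqref{eq:LaurentQSP} is structurally the same as the right-hand side of \eqref{eq:LaurentQSP2}, so the identity follows. The only subtle point, and essentially the main content of the argument, is verifying that $F$ constructed above satisfies exactly the hypotheses of Theorem~\ref{thm:LaurentQSP} (real coefficients, correct degree, correct parity, palindromic, and $\lVert F\rVert_{\infty,\T}\leq 1$); once those are established, everything else is a direct substitution. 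No additional obstacle arises because the construction of the complementary polynomial in this setting is handled upstream by Theorem~\ref{thm:main} and Corollary~\ref{cor:real}, as already indicated for Laurent QSP at the end of Section~\ref{subsec:GQSPtoLQSP}.
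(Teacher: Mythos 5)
Your proof is correct and follows essentially the same path as the paper: expand $p$ in the Chebyshev basis, use the resulting coefficients to form the palindromic Laurent polynomial $F$ via \eqref{eq:Fpalindromic}, and then apply Theorem~\ref{thm:LaurentQSP} with the identification $F(z)=F(z^{-1})=p(x)$ on $\T$. You are somewhat more careful than the paper in spelling out why $F$ meets the hypotheses of Theorem~\ref{thm:LaurentQSP} (reality, degree $d$ since $f_d\neq 0$, parity $d\bmod 2$ because $T_n$ has parity $n\bmod 2$, and $\lVert F\rVert_{\infty,\T}\leq 1$), but this is precisely the same argument, just written out.
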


Given $p\in \R[x]$ satisfying the conditions of Theorem~\ref{thm:realQSP}, compute the Chebyshev coefficients of $p$ and denote them by $(f_n)_{n=0}^d$. This defines a Laurent polynomial $F$ via \eqref{eq:Fpalindromic}. The recipe for constructing $G$ below Theorem~\ref{thm:LaurentQSP} is now applicable.

\section{Complex analysis}
\label{app:complex}

In this appendix, we collect classical complex analysis results that we use in the main text. For a comprehensive introduction to complex analysis, we refer to \cite{ahlfors}. Here, assuming a familiarity with basic complex analysis concepts and results, we precisely state and elaborate on the theorems employed in the main text. 

\subsection{F\'{e}jer-Riesz theorem}

The Fej\'{e}r-Riesz theorem states that a Laurent polynomial that is real and non-negative on $\T$ can be written as the squared modulus of some polynomial on $\T$. The precise statement is as follows \cite{riesz2012}. 

\begin{theorem}[Fej\'{e}r-Riesz]
Suppose that $F\in \C[z,z^{-1}]$ satisfies $F(\T)\subset \R$ and 
\begin{equation}
F(z)\geq 0 \quad (z\in \T).    
\end{equation}
Then, there exists $f\in \C[z]$ satisfying
\begin{equation}
\lvert f(z)\rvert^2 =F(z) \quad (z\in \T).    
\end{equation}
Moreover, $f$ may be chosen such that it has no roots in $\D$.
\end{theorem}

We use the Fej\'{e}r-Riesz theorem to obtain the canonical factorization \eqref{eq:Q} of $Q$ within the proof of Theorem~\ref{thm:main}.

\subsection{Schwarz integral formula}

The Schwarz integral provides a representation of a holomorphic function on the closed unit disk in terms of the values of the real part of the function on $\T$ and the value of the imaginary part at $z=0$ \cite{ahlfors}.

\begin{theorem}[Schwarz]
Suppose $f$ is holomorphic on $\overline{\D}$. Then,
\begin{equation}\label{eq:Schwarz}
f(z)=\frac{1}{2\pi\ii}\int_{\T} \frac{z'+z}{z'-z}\,\mathrm{Re}\,f(z')\frac{\mathrm{d}z'}{z'}+\ii\,\mathrm{Im}\,f(0) \quad (z\in \D).
\end{equation}
\end{theorem}

We use the Schwarz integral formula within the proof of Theorem~\ref{thm:main} to obtain the integral representation \eqref{eq:USchwarz}, which leads to \eqref{eq:QD}.

\subsection{Plemelj formula}

For a detailed introduction to Cauchy-type integrals, we refer to \cite{ablowitz2003,olver2011}. Here, we provide a simple but precise treatment of such integrals required in the main text. \\

Let $\Gamma$ be an oriented curve and $F$ a function $\Gamma\to \C$. Suppose that on $\Gamma$, $F$ has a single singularity at $z_0\in \Gamma$. Then, where it exists, the Cauchy principal value integral of $F$ on $\Gamma$ is defined by
\begin{equation}\label{eq:PV}
\pvint_{\Gamma} F(z)\,\mathrm{d}z\coloneqq \lim_{\varepsilon\downarrow 0}\int_{\Gamma\setminus B(z_0;\varepsilon)} F(z)\,\mathrm{d}z,
\end{equation}
where $B(z_0;\varepsilon)\coloneqq \{z\in \C: \lvert z-z_0\rvert< \varepsilon\}$. Our main interest in Cauchy principal value integrals stems from their appearance in the Plemelj formula.

\begin{theorem}[Plemelj]
Suppose that $\Gamma$ is a positively-oriented Jordan curve and $f$ is a continuous function $\Gamma\to \C$. Let
\begin{equation}\label{eq:Plemelj}
\mathcal{C}_{\Gamma}[f(z)]\coloneqq \frac{1}{2\pi\ii} \int_{\Gamma} \frac{f(z')}{z'-z}\,\mathrm{d}z'.
\end{equation}
Then, 
\begin{equation}
\lim_{\substack{z\to z_0 \\ z\in \Omega_{\pm}}} \mathcal{C}_{\Gamma}[f(z)]=\pm\frac{1}{2}f(z_0)+\frac{1}{2\pi\ii}\,\pvint_{\Gamma} \frac{f(z')}{z'-z_0}\,\mathrm{d}z', 
\end{equation}
where $\Omega_+$ is the domain enclosed by $\Gamma$ and $\Omega_-\coloneqq \C\setminus \overline{\Omega}_+$.
\end{theorem}

We use the Plemelj formula in the proof of Theorem~\ref{thm:main} to sequentially construct \eqref{eq:QT} and \eqref{eq:QCD} starting from \eqref{eq:QD}. Additionally, the Cauchy principal value integral is needed to define the periodic Hilbert transform \eqref{eq:H} and hence prove Corollary~\ref{cor:main}.

\bibliographystyle{unsrt}
\bibliography{CP.bib}

\end{document}